\newcommand{\be}{\begin{equation}}
\newcommand{\ee}{\end{equation}}
\begin{document}
\newtheorem{theorem}{\bf Theorem}
\newtheorem{property}{\bf Property}
\newtheorem{corollary}{\bf Corollary}
\newtheorem{lemma}{\bf Lemma}
\newtheorem{conjecture}{\bf Conjecture}
\def\proof{\noindent{\it Proof: }}
\def\QED{\mbox{\rule[0pt]{1.5ex}{1.5ex}}}

\newcommand\blfootnote[1]{%
  \begingroup
  \renewcommand\thefootnote{}\footnote{#1}%
  \addtocounter{footnote}{-1}%
  \endgroup
}
\title{On the Capacity of the Finite Field Counterparts  \\of Wireless Interference Networks}
\author{Sundar R. Krishnamurthy and Syed A. Jafar
}
\date{}

\maketitle
 \blfootnote{Sundar R. Krishnamurthy (email: srkrishn@uci.edu) and Syed A. Jafar (email: syed@uci.edu) are with the Center of Pervasive Communications and Computing (CPCC) in the Department of Electrical Engineering and Computer Science (EECS) at the University of California Irvine. This work is accepted for presentation in part at ISIT 2013.}

\begin{abstract}
This work explores how degrees of freedom (DoF) results from wireless networks can be translated into capacity results for their finite field counterparts that arise in network coding applications.  The main insight is that scalar (SISO) finite field channels over $\mathbb{F}_{p^n}$ are analogous to $n\times n$  vector (MIMO) channels in the wireless setting, but with an important distinction -- there is additional structure due to finite field arithmetic which enforces commutativity of matrix multiplication and limits the channel \emph{diversity} to $n$, making these channels similar to diagonal channels in the wireless setting. Within the limits imposed by the channel structure, the DoF optimal precoding solutions for wireless networks can be translated into capacity optimal solutions for their finite field counterparts. This is shown through the study of  the 2-user X channel and the 3-user interference channel. Besides bringing the insights from wireless networks into network coding applications, the study of finite field networks over $\mathbb{F}_{p^n}$ also touches upon important open problems in wireless networks (finite SNR, finite diversity scenarios) through interesting parallels between $p$ and SNR, and $n$ and diversity.
\end{abstract}

\section{Introduction}\label{section:introduction} 
Precoding based network alignment (PBNA) is a network communication paradigm inspired by linear network coding and interference alignment principles \cite{ Ramakrishnan_Das_Maleki_Markopoulou_Jafar_Vishwanath, Meng_Ramakrishnan_Markopoulou_Jafar, Das_Vishwanath_Jafar_Markopoulou}. While intermediate nodes only perform arbitrary linear network coding operations which transform the network into a one-hop linear finite field network, all the intelligence resides at the source and destination nodes where information theoretically optimal encoding (precoding) and decoding is performed to achieve the capacity of the resulting linear network. The two restricting assumptions --- restricting the intelligence to the source and destination nodes, and restricting to linear operations at intermediate nodes --- are motivated  by the reduced complexity of network optimization and also by the potential to apply  the insights and techniques developed for one-hop wireless networks. Indeed, the PBNA paradigm gives rise to settings that are analogous to 1-hop wireless networks, albeit over finite fields. To highlight this distinction, we simply refer to these networks as finite field networks. There is a finite field counterpart to every 1-hop wireless network and vice versa. A number of interesting interference alignment techniques have been developed for 1-hop wireless networks and shown to be optimal from a degrees of freedom (DoF) perspective. Translating the DoF optimal schemes for wireless networks into capacity optimal schemes for finite field networks is therefore a promising research avenue. For example, the CJ scheme originally conceived  for the $K$ user \emph{time-varying} wireless interference channel in \cite{Cadambe_Jafar_int} is applied to the 3 unicast problem by Das et al. in \cite{ Ramakrishnan_Das_Maleki_Markopoulou_Jafar_Vishwanath, Meng_Ramakrishnan_Markopoulou_Jafar, Das_Vishwanath_Jafar_Markopoulou}. While the CJ scheme has also been applied successfully to the constant channel setting in wireless networks by using the rational dimensions framework of Motahari et al. in \cite{Motahari_Gharan_Khandani}, the constant channel setting remains much less understood. In this work, we  study constant channel settings, but over the finite field $\mathbb{F}_{p^n}$. 

The main contributions of this work are general insights into the correspondence between degrees of freedom of wireless networks and capacity results for their finite field counterparts. In the wireless setting, constant scalar (SISO) channels are challenging  because they lack the diversity needed for linear interference alignment schemes. Constant finite-field channels over $\mathbb{F}_{p^n}$ however, can be naturally treated as non-trivial $n\times n$ MIMO channels. A single link over $\mathbb{F}_{p^n}$ has capacity $n\log(p)$, similar to $n$ channels of capacity $\log(p)$ each. There is an immediate analogy to $n$ parallel wireless channels which would have a first order capacity $\approx n\log(\mbox{SNR})$,  establishing a correspondence between $n$ and ``diversity" (number of parallel channels)  and between $p$ and SNR.  Indeed, while scalar channels in $\mathbb{F}_{p^n}$ can be treated as $n\times n$ MIMO channels over  the base field $\mathbb{F}_p$, these channels exist in a space with diversity limited to $n$, i.e., any $n+1$ of these $n\times n$ channel matrices are linearly dependent over $\mathbb{F}_p$. Also, because of their special structure these channel matrices satisfy the commutative property of multiplication (inherited from the commutative property of multiplication in $\mathbb{F}_{p^n}$). Contrast this with generic $n\times n$ MIMO channels in $\mathbb{F}_p$, which occupy a space of diversity $n^2$ and generally do not commute. The difference is  consistent with the interpretation of $\mathbb{F}_{p^n}$ channels as similar to diagonal channels which have diversity only $n$, and are also commutative. These insights are affirmed by translating the DoF results from fixed diversity wireless networks to their $\mathbb{F}_{p^n}$ counterparts. Especially in the 3 user interference channel, the role of $n$ as the channel diversity becomes clear. 

Other interesting aspects of this work are  finer insights into  linear interference alignment and  the techniques used to prove  resolvability of desired signals from interference. Whereas in wireless networks, linear interference alignment is  feasible for either almost all channel realizations or almost none of them and is relevant primarily to the slope of the capacity curve in the infinite SNR (DoF) limit, in the finite field setting the fraction of channels where linear alignment is feasible can be a non-trivial function of $p$, so that not only we have the $p\rightarrow \infty$ behavior which corresponds to the wireless DoF results, but also we have an explicit dependence of linear alignment feasibility on $p$ for finite values of $p$. By analogy to finite SNR, this is  intriguing for its potential implications, even if the analogy is admittedly tenuous at this point. Since these finer insights are a priority in this work, we will not rely only on $p\rightarrow\infty$ assumptions to establish the capacity of the finite field networks. Instead, our goal will be to identify the capacity for all $p$ as much as possible. Because of this focus on constant channels and finite $p$, the linear independence arguments required to show resolvability of desired and interfering signals,  become a bit more challenging for finite $p$, and require  a different, somewhat novel approach. Finally, while we focus primarily on the X channel and 3 user interference channel to reveal the key insights, the insights seem to be broadly applicable and sufficient for extensions beyond these settings. 

We begin with the X channel.
\\

\section{X Channel}\label{section:X2_1}
An X network is an all-unicast setting, i.e., there is an independent message from each source node to each destination node. In this work we  study an X network with 2 source nodes, 2 destination nodes, and 4 independent messages as illustrated in Fig. \ref{fig:X_1}, also known simply as the X channel. 
\begin{figure}[h]
\begin{center}
\includegraphics[scale=0.75]{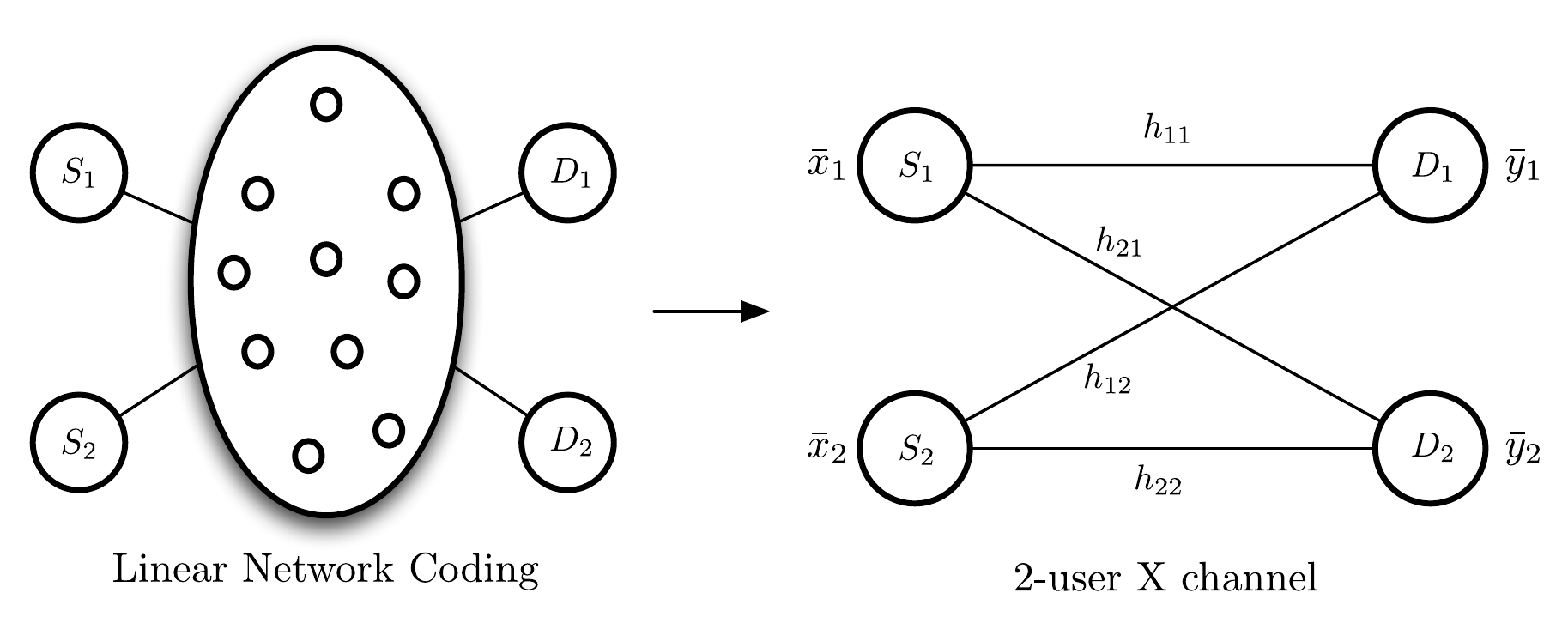}
\end{center}
\caption{Wired network modeled as 2-user X channel}\label{fig:X_1}
\end{figure}
\subsection{Prior Work}
The X channel, which contains broadcast, multiple access and interference channels as special cases, is one of the simplest, and also one of the earliest settings for interference alignment in wireless networks \cite{MMK, Jafar_Shamai}. With $A$ antennas at each node, and constant channels, the achievability of $\lfloor \frac{4A}{3}\rfloor$ DoF was shown by Maddah-Ali, Motahari and Khandani in  \cite{MMK}. Jafar and Shamai showed  in \cite{Jafar_Shamai} that $\frac{4A}{3}$ DoF were achievable when $M>1$ for constant channels, and also proved that this was the information theoretic outer bound for all $M$. For the scalar (SISO) case, i.e., $M=1$, Jafar and Shamai showed that $\frac{4}{3}$ DoF were achievable when the channels were time-varying. The DoF of the SISO case with constant and complex channels were settled in \cite{Cadambe_Jafar_Wang} by Cadambe, Jafar and Wang, who introduced asymmetric complex signaling, also known as improper Gaussian signaling and showed that it achieves the optimal value of $\frac{4}{3}$ for the  complex SISO X channel. The SISO case with constant and real coefficients was shown to achieve the optimal value of $\frac{4}{3}$ DoF  in \cite{Motahari_Gharan_Khandani} by Motahari, Gharan and Khandani, who introduced a real interference alignment framework based on rational-independence and diophantine approximation theory. Generalized degrees of freedom (GDoF) results for a symmetric SISO real constant X channel were obtained in \cite{Huang_Cadambe_Jafar} by Huang, Jafar and Cadambe, who also found a sufficient condition under which treating interference as noise is capacity optimal in the fully asymmetric case.  A capacity approximation for the real SISO constant X channel within a constant gap, subject to a small outage set, was obtained by Niesen and Maddah-Ali in \cite{Niesen_Maddah_Ali_X} using a novel deterministic channel model.  For X networks, i.e., with arbitrary number ($M$) of transmitters and arbitrary number ($N$) of receivers, Cadambe and Jafar show in \cite{Cadambe_Jafar_X} that the SISO setting with time-varying  channel coefficients has $\frac{MN}{M+N-1}$ DoF. The result is extended to the real constant SISO setting using the rational independence framework by Motahari et al. in \cite{Motahari_Gharan_Khandani}.  Partial characterizations of the DoF region are found by Wang in \cite{Zhengdao_X}. Cadambe and Jafar  show in \cite{Cadambe_Jafar_XFB} that the DoF value remains unchanged when relays and feedback are included. DoF of the time-varying MIMO X channel with $A>1$ antennas at each node are settled in \cite{Sun_Geng_Gou_Jafar} by Sun et al. who identify a one-sided decomposability property of $X$ networks, and show that the spatial scale invariance conjecture of Wang, Gou and Jafar \cite{Wang_Gou_Jafar_Subspace}  (that the DoF scale with the number of antennas) holds in this case. The DoF of a layered  multihop SISO X channel with 2 source nodes and 2 destination nodes are characterized in \cite{Wang_Gou_Jafar_MU} by Wang, Gou and Jafar, who show that the DoF can only take the values $1,\frac{4}{3}, \frac{3}{2}, \frac{5}{3}, 2$ and identify the networks that correspond to each value.  Note that all the DoF results mentioned above are meant in the `almost surely' sense, i.e., they hold for almost all channel realizations but in every case there are channels for which the DoF remain unknown. The problem is particularly severe for rational alignment and diophantine approximation based schemes for real constant channels, where while the DoF value applicable to almost all channels is known, the DoF of any given channel realization is unknown for almost all channel realizations. 

For wired networks, if intermediate nodes are intelligent, i.e., operations at intermediate nodes can be optimized, then the sum-capacity of an all-unicast network, i.e., an X network, has been shown to be achievable by routing \cite{Wang_Gou_Jafar_MU}. However, due to practical limitations, optimization of intermediate nodes may not be possible.  While the overhead and complexity of learning and optimizing individual coding coefficients at all intermediate nodes may be excessive, it is much easier to learn only the end-to-end channel coefficients, e.g.,  through network tomography, with no knowledge of the internal structure of the network or the individual coding coefficients at the intermediate nodes. This is the setting that we explore in this work.

\subsection{Finite Field X Channel Model}
Consider the finite field X channel 
\begin{eqnarray*}
\bar{ y}_1(t)&=&{ { h}_{11}}{\bar{x}_{1}}(t)+{ { h}_{12}}{\bar{x}_{2}}(t) \\
\bar{ y}_2(t)&=&{ { h}_{22}}{\bar{x}_{2}}(t)+{ { h}_{21}}{\bar{x}_{1}}(t)
\end{eqnarray*}
where, over the $t^{th}$ channel use, $\bar{x}_{i}(t)$ is the  symbol sent by source $i$, ${ h}_{ji}$ represents the channel coefficient between source $i$ and destination $j$ and $\bar{y}_j$ represents the received symbol at destination $j$. All symbols $\bar{x}_{i}(t), {h}_{ji}, \bar{y}_j(t)$ and addition and multiplication operations are in a finite field $\mathbb{F}_{p^n}$. The channel coefficients ${ h}_{ji}$ are constant and assumed to be perfectly known at all sources and  destinations. There are four independent messages, with  $W_{ji}$ denoting the message that originates at source $i$ and is intended for destination $j$.

A coding scheme over $T$ channel uses, that assigns to each message $W_{ji}$ a rate $R_{ji}$, measured in units of   $\mathbb{F}_{p^n}$ symbols  per channel use, corresponds to an encoding function at each source $i$ that maps the messages originating at that source into  a sequence of $T$ transmitted symbols, and a decoding function at each destination $j$ that maps the sequence of $T$ received symbols into decoded messages $\hat{W}_{ji}$.
\begin{eqnarray}
\mbox{Encoder 1: }&& (W_{11}, W_{21})\rightarrow {\bar x}_1(1){\bar x}_1(2)\cdots{\bar x}_1(T)\\
\mbox{Encoder 2: }&& (W_{12}, W_{22})\rightarrow {\bar x}_2(1){\bar x}_2(2)\cdots{\bar x}_2(T)\\
\mbox{Decoder 1: } &&{\bar y}_1(1){\bar y}_1(2)\cdots{\bar y}_1(T)\rightarrow (\hat{W}_{11}, \hat{W}_{12})\\
\mbox{Decoder 2: }&& {\bar y}_2(1){\bar y}_2(2)\cdots{\bar y}_2(T)\rightarrow (\hat{W}_{21}, \hat{W}_{22})
\end{eqnarray}
Each message $W_{ji}$ is uniformly distributed over $\{1,2,\cdots, \lceil p^{nTR_{ji}}\rceil\}$, $\forall  i,j\in\{1,2\}$. An error  occurs if $(\hat{W}_{11}, \hat{W}_{12}, \hat{W}_{21}, \hat{W}_{22})\neq(W_{11}, W_{12}, W_{21}, W_{22})$.  A rate tuple $(R_{11}, R_{12}, R_{21}, R_{22})$ is said to be achievable if there exist encoders and decoders such that the probability of error can be made arbitrarily small by choosing a sufficiently large $T$. The closure of all achievable rate pairs is the capacity region and the maximum value of $R_{11}+R_{12}+R_{21}+R_{22}$ across all rate tuples that belong to the capacity region, is the sum-capacity, that we will refer to as simply the capacity, denoted as $C$, for brevity. Since we are especially interested in linear interference alignment, we will also define $C_{\mbox{\tiny linear}}$ as the highest sum-rate possible through vector linear coding schemes (see, e.g., \cite{Maleki_Cadambe_Jafar}), also known as linear beamforming schemes, over the base field $\mathbb{F}_p$.

\subsection{Zero Channels}
First, let us deal with trivial cases where some of the channel coefficients are zero. 

\begin{theorem}\label{theorem:zero}
If one or more of the channel coefficients $h_{ji}$ is equal to zero, the capacity is given as follows.
\begin{enumerate}
\item If $h_{12}=h_{21}=0$ and $h_{11}, h_{22}\neq 0$, then $C=C_{\mbox{\tiny linear}}=2$.
\item If $h_{11}=h_{22}=0$ and $h_{12}, h_{21}\neq 0$, then $C=C_{\mbox{\tiny linear}}=2$.
\item If $h_{11}=h_{12}=h_{21}=h_{22}=0$, then $C=C_{\mbox{\tiny linear}}=0$.
\item In all other cases where at least one channel coefficient is zero, $C=C_{\mbox{\tiny linear}}=1$.
\end{enumerate}
\end{theorem}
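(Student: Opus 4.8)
The plan is to dispatch the four cases by reducing each to either a pair of non-interfering point-to-point $\mathbb{F}_{p^n}$ links or to a single bottleneck that caps the sum-rate at one symbol per channel use. For Parts 1 and 2, zeroing the two off-diagonal (resp.\ diagonal) coefficients decouples the channel into $\bar y_1 = h_{11}\bar x_1,\ \bar y_2 = h_{22}\bar x_2$ (resp.\ $\bar y_1 = h_{12}\bar x_2,\ \bar y_2 = h_{21}\bar x_1$), i.e.\ two independent point-to-point links. Since the two surviving coefficients are nonzero, multiplication by them is a bijection of $\mathbb{F}_{p^n}$, so each link delivers exactly one symbol per channel use by sending raw symbols (a linear scheme); the converse is the elementary single-output cut $R\le 1$ per link. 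As the two surviving messages are the only ones whose source--destination pair has a nonzero coefficient (the other two carry zero rate), we get $C=C_{\text{linear}}=2$. Part 3 is immediate: all outputs are identically zero.

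For Part 4 I would first record the general principle that any message $W_{ji}$ with $h_{ji}=0$ must have $R_{ji}=0$, since destination $j$'s output $h_{j1}\bar x_1+h_{j2}\bar x_2$ is then independent of source $i$'s transmission and hence of $W_{ji}$. With this, I split the remaining configurations by the number of vanishing coefficients. When two or three coefficients vanish (outside the diagonal/anti-diagonal patterns of Parts 1--2), the structure always isolates one destination (its output is zero) or one source (heard by nobody), leaving a single surviving point-to-point/broadcast/MAC bottleneck carrying at most one $\mathbb{F}_{p^n}$ symbol per use; a single-output (or single-input) cut gives sum-rate $\le 1$, matched by routing one message through the surviving clean path. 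These subcases are routine.

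The crux is the \emph{exactly one zero} subcase, where the naive cut-set bounds give only $2$ yet the claim is $C=1$. By the symmetry of the X channel under relabeling sources and destinations, it suffices to treat $h_{11}=0$, giving $\bar y_1=h_{12}\bar x_2$ and $\bar y_2=h_{21}\bar x_1+h_{22}\bar x_2$ with live messages $W_{12}$ (at destination 1) and $W_{21},W_{22}$ (at destination 2). Achievability of $1$ is clear (send $W_{12}$ alone over the clean link source $2\to$ destination $1$). For the converse I argue over a length-$T$ block. Destination 1 recovers $W_{12}$ from $\bar y_1^T$, a bijective image of $\bar x_2^T$, so by Fano $H(W_{12})\le H(\bar x_2^T)-H(\bar x_2^T\mid W_{12})+T\epsilon$. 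Destination 2 recovers $(W_{21},W_{22})$ from $\bar y_2^T$; subtracting the conditionally known term $h_{21}\bar x_1^T$ gives $H(\bar y_2^T\mid W_{21},W_{22})=H(\bar x_2^T\mid W_{22})$, whence $H(W_{21},W_{22})\le nT\log p - H(\bar x_2^T\mid W_{22})+T\epsilon$. Adding the two bounds, the sum-rate is controlled by $nT\log p + \big[H(\bar x_2^T)-H(\bar x_2^T\mid W_{12})-H(\bar x_2^T\mid W_{22})\big]$.

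The main obstacle, and the heart of the argument, is to show the bracketed quantity is non-positive, i.e.\ $I(\bar x_2^T;W_{12})+I(\bar x_2^T;W_{22})\le H(\bar x_2^T)$. This holds precisely because the single signal $\bar x_2^T$ must simultaneously encode the two independent messages $W_{12}$ and $W_{22}$: writing $H(\bar x_2^T)=I(\bar x_2^T;W_{12})+I(\bar x_2^T;W_{22}\mid W_{12})$ and using $I(\bar x_2^T;W_{22}\mid W_{12})-I(\bar x_2^T;W_{22})=I(W_{12};W_{22}\mid \bar x_2^T)\ge 0$ (the messages are a priori independent) yields the inequality. Dividing by $nT\log p$ and letting $T\to\infty$ gives $R_{12}+R_{21}+R_{22}\le 1$, matching achievability, so $C=C_{\text{linear}}=1$. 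Intuitively, the interference that $W_{12}$ creates at destination 2 is irreducible --- there is no second interferer for it to align with --- so every dimension it occupies at destination 1 is lost to destination 2, which is exactly what the entropy inequality encodes.
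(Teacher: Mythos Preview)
Your argument is correct, and the overall decomposition into the trivial Cases 1--3 and the MAC/BC/Z trichotomy in Case 4 matches the paper exactly. The difference lies entirely in how the Z-channel subcase (exactly one zero coefficient) is handled. The paper dispatches it in one line by pointing to the wireless DoF outer bound of Jafar and Shamai and remarking that it carries over to finite fields; you instead give a direct, self-contained converse via Fano and the entropy inequality $I(\bar x_2^T;W_{12})+I(\bar x_2^T;W_{22})\le H(\bar x_2^T)$, which you correctly derive from the independence of $W_{12}$ and $W_{22}$ through $I(W_{12};W_{22}\mid \bar x_2^T)\ge 0$. Your route is more elementary and fully explicit---nothing is deferred to a wireless translation---at the cost of being longer; the paper's route is terse but leans on an external reference whose adaptation to the finite-field model is asserted rather than shown. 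One small point worth making explicit in your write-up: the step ``subtracting the conditionally known term $h_{21}\bar x_1^T$'' uses that $\bar x_1^T$ is a function of $W_{21}$ alone, which relies on $R_{11}=0$; you established this earlier, but it is worth flagging that this is where it is used.
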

\proof Cases 1, 2, 3 are trivial. The resulting channel for Case 4 is a MAC, BC or Z channel. MAC and BC have capacity 1 by min-cut max-flow theorem, and the proof for the Z channel follows from the corresponding DoF result presented in \cite{Jafar_Shamai} for the wireless setting. 

\subsection{X Channel Normalization}
\begin{figure}[h]
\begin{center}
\includegraphics[scale=0.8]{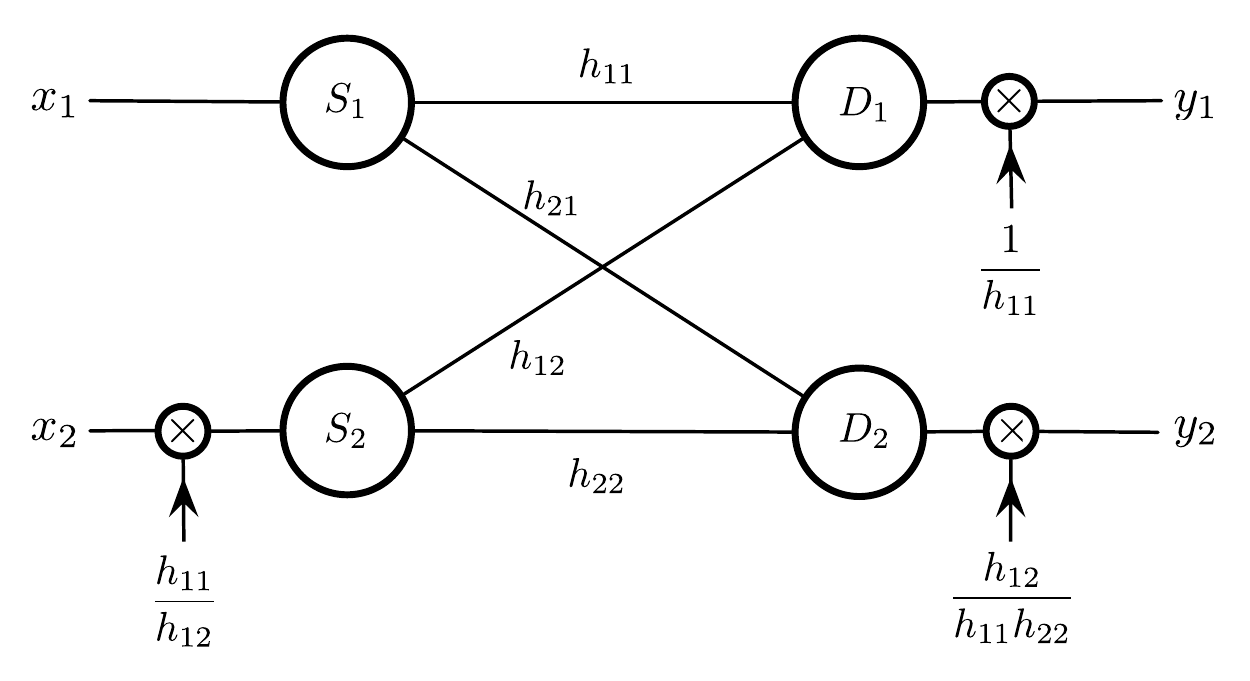}
\end{center}
\caption{Normalization in X channel}\label{fig:X_2}
\end{figure}
Based on Theorem \ref{theorem:zero}, henceforth we will assume that all channel coefficients are non-zero. We call this the fully connected X channel. Without loss of generality, let us normalize the channel coefficients by invertible operations at the sources and destinations shown in Fig. \ref{fig:X_2}. Since these are invertible operations, they do not affect the channel capacity:
\\
Destination 1 normalizes symbols by $h_{11} :  { y}_1 = \frac{\bar{ y}_1}{ h_{11}}$ 
\vspace{1mm}
\\
Destination 2 normalizes symbols by $\frac{h_{11}h_{22}}{h_{12}} : { y}_2 = \frac{\bar{ y}_2h_{12}}{h_{11}h_{22}}$ 
\vspace{1mm}
\\
Source 2 normalizes symbols by $\frac{h_{11}}{h_{12}}$ : $ \hspace{5mm} { x}_{2} = \frac{\bar{ x}_{2}h_{12}}{h_{11}}$ 
\vspace{1mm}
\\
Source 1 performs no normalization : $ \hspace{6mm} { x}_{1}=\bar{ x}_{1}$
\vspace{1mm}
\\
The normalized $X$ channel is represented as
\begin{eqnarray*}
{ y}_1&=&{{x}_{1}}+{{x}_{2}}\\
{ y}_2&=&{ {h}}{{x}_{1}}+{{x}_{2}}
\end{eqnarray*}
wherein we have reduced the channel parameters to a single channel coefficient ${h}$, defined as 
\begin{eqnarray}
{h} = \frac{h_{12}h_{21}}{h_{11}h_{22}}.
\end{eqnarray}
All symbols are still over $\mathbb{F}_{p^n}$.
\begin{figure}[h]
\begin{center}
\includegraphics[scale=0.8]{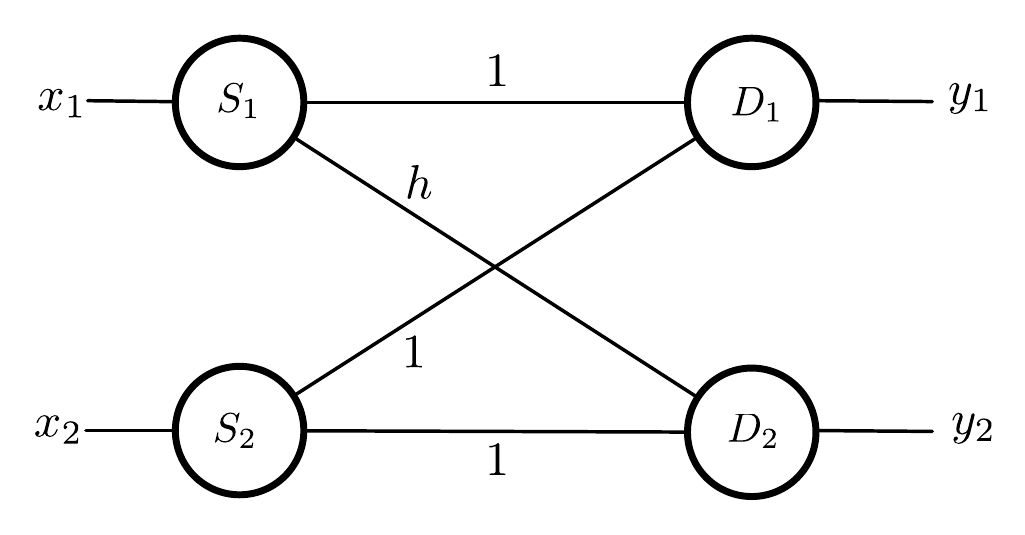}
\end{center}
\caption{Normalized X channel with Non-Zero Coefficients}\label{fig:X_3}
\end{figure}  

\subsection{Capacity of the Finite Field X Channel}
As mentioned in the review of prior work, the multiple input multiple output (MIMO) wireless X channel where each node is equipped with $n$ antennas has $\frac{4n}{3}$ DoF \cite{Jafar_Shamai, Cadambe_Jafar_Wang}. For almost all channel realizations in the wireless setting, the DoF are achieved through a linear vector space interference alignment scheme.  If $n$ is a multiple of $3$, no symbol extensions are needed and spatial beamforming is sufficient. For example, if each node is equipped with $3$ antennas, then it suffices to send 1 symbol per message, each along its assigned $3\times 1$ signal vector.  The vectors are chosen such that the two undesired symbols at each destination align in the same dimension leaving the remaining 2 dimensions free to resolve the desired signals. If $n$ is not a multiple of $3$ then $3$ symbol extensions (i.e., coding over $3$ channel uses) are needed to create a vector space within which a third of the dimensions are assigned to each message. When translating these insights into the finite field X channel with only scalar inputs and scalar outputs (SISO) we are guided by the main insight presented below.

\subsubsection{Insight: MIMO interpretation}
The main insight that forms the basis of this work is that \emph{a SISO network over $\mathbb{F}_{p^n}$ is analogous to a $n\times n$ MIMO network, albeit with a special structure imposed on the channel matrix due to finite field arithmetic. }

To appreciate this insight, let us briefly review the fundamentals. The finite field $\mathbb{F}_{p^n}$  can be used to generate an $n$-dimensional vector space as follows. Each element of $\mathbb{F}_{p^n}$ can be represented in the form
\begin{eqnarray}
z = x_{n-1}s^{n-1}+x_{n-2}s^{n-2}+\ldots+x_1s^{1}+x_0
\end{eqnarray}
wherein $z \in\mathbb{F}_{p^n}$, $x_{i} \in \mathbb{F}_{p}$. \\
\indent As an example consider  $\mathbb{F}_{3^3}$ which contains 27 elements $\{0,1,\ldots,26\}$ and each element $a \in \mathbb{F}_{3^3}$ is of the form $3^2a_2$+3$a_1$+$a_0$, wherein $a_2,a_1,a_0 \in \mathbb{F}_{3}$ with values from \{0,1,2\}. Hence every element can be written in a vector notation with coefficients $[a_2;a_1;a_0]$, e.g., $a=22$ can be written as $[2~;~1~;~1]$. 

Next, let us see how multiplication with the channel coefficient ${h}\in$ $\mathbb{F}_{3^3}$  is represented as a multiplication with a $3\times 3$ matrix with elements in $\mathbb{F}_{3}$. Consider the monic irreducible cubic polynomial $s^3+2s+1$ which is treated as zero in the field. The field itself consists of all polynomials with coefficients in $\mathbb{F}_{3}$, modulo $s^3+2s+1$.  Since $s^3+2s+1=0$ in $\mathbb{F}_{3^3}$, it follows that
\begin{eqnarray}
s^3 = -2s-1  = (3-2)s+(3-1)= s+2 \\
s^4 =s(s^3)=s(s+2)= s^2+2s
\end{eqnarray}
Since $h,x\in\mathbb{F}_{3^3}$ they can be represented as ${h} = h_2s^2+h_1s+h_0, \hspace{2mm} {x} =  x_2s^2+x_1s+x_0$ where  $h_i, x_i \in \mathbb{F}_{3^3}$. The product ${y}={hx} \in \mathbb{F}_{3^3}$ can be written as
\begin{eqnarray}
{y}={hx} &\equiv &(h_2s^2+h_1s+h_0)(x_2s^2+x_1s+x_0) \\ \nonumber
&=&s^4(h_2x_2)+s^3(h_2x_1+h_1x_2)+s^2(h_2x_0+h_0x_2+h_1x_1)+s(h_1x_0+h_0x_1)+(h_0x_0)
\end{eqnarray}
Equivalently, 
\begin{eqnarray}
\mathbf{{y}} = \mathbf{{H}{x}} = \begin{bmatrix}
h_2+h_0 & h_1 & h_2 \\
2h_2+h_1 & h_2+h_0 & h_1 \\
2h_1 & 2h_2 & h_0
\end{bmatrix}
\begin{bmatrix}
x_2 \\
x_1 \\
x_0
\end{bmatrix}
\end{eqnarray}
wherein $\mathbf{{x}}, \mathbf{{y}}$ are $3\times 1$ vector with entries from $\mathbb{F}_{3}$ and $\mathbf{{H}}$ is a $3\times 3$ matrix with its 9 entries from $\mathbb{F}_{3}$.
Here the equivalence of SISO channel over $\mathbb{F}_{3^3}$ and MIMO channel over $\mathbb{F}_{3}$ is established through the $3\times 3$ linear transformation, $\mathbf{{H}}$. Note also the structure inherent in the matrix representation  $\mathbf{{H}}$. While there are $3^9$ possible $3\times 3$ matrices over $\mathbb{F}_{3}$, there are only 27 valid $\mathbf{{H}}$ matrices, because $\mathbb{F}_{3^3}$ has only 27 elements.  This leads us to the main challenge that remains.

\subsubsection{Challenge: Channel Structure}
\emph{Given the main insight, the challenge that remains is dealing with the structural constraints on the MIMO channels that arise due to finite field arithmetic. } Structured channels are also encountered in the wireless setting ---   channels obtained by symbol extensions have a block diagonal structure \cite{Jafar_Shamai},   asymmetric complex signaling based schemes used for the SISO X channel have a unitary matrix structure \cite{Cadambe_Jafar_Wang}. Channel structure can  be destructive, e.g., loss of capacity in  rank deficient channels. However, channel structure can also be constructive, e.g., diagonal channel matrices enable the CJ scheme in \cite{Cadambe_Jafar_int}, and certain types of rank deficiencies have been shown to facilitate simpler alternatives to  interference alignment schemes \cite{Krishnamurthy_Jafar}. On the one hand, the MIMO channels, which arise by viewing $\mathbb{F}_{p^n}$ as an $n$ dimensional vector space over $\mathbb{F}_{p}$,  have a structure that is neither diagonal nor unitary. On the other hand,  diagonal channel matrices, unitary channel matrices, as well as the finite field channel matrices, all have the property that matrix multiplication is commutative, which can be a very useful property for interference alignment schemes. The impact of channel structure in the SISO constant finite field X channel setting is therefore an intriguing question.

\subsubsection{Main Result}
The  capacity result for the finite field X channel is presented in the following theorem.
\begin{theorem}\label{theorem:x_n}
For the fully connected X channel over $\mathbb{F}_{p^n}$, with $p>2$, if
\begin{eqnarray}
h = \frac{h_{12}h_{21}}{h_{11}h_{22}} \notin \mathbb{F}_p
\end{eqnarray}
then 
\begin{eqnarray}
C&=&C_{\mbox{\tiny linear}}=\frac{4}{3}
\end{eqnarray}
in units of $\mathbb{F}_{p^n}$ symbols per channel use. If $h\in\mathbb{F}_{p}$, then $C_{\mbox{\tiny linear}}=1$.
\end{theorem}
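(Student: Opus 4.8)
The plan is to pin down the value by a sandwich: establish the converse $C\le\frac43$ for all $p,n$, and a linear achievability $C_{\mbox{\tiny linear}}\ge\frac43$ when $h\notin\mathbb F_p$, so that $\frac43\le C_{\mbox{\tiny linear}}\le C\le\frac43$ collapses to equality. For the converse I would carry over the information-theoretic outer bound of the wireless MIMO X channel, exactly as the Z-channel case of Theorem~\ref{theorem:zero} was handled by translation from \cite{Jafar_Shamai}. Reading the $\mathbb F_{p^n}$ channel as the $n\times n$ MIMO channel over $\mathbb F_p$ described above, the $\frac{4A}{3}$ outer bound with $A=n$ becomes $C\le\frac43$ symbols per channel use; because the finite-field channel is deterministic and each received symbol carries at most $n\log p$ bits, every genie-aided entropy inequality in that argument is exact, with no asymptotic slack, so the bound holds for every finite $p$ and not merely in the $p\to\infty$ (DoF) limit.

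For achievability I would work entirely in the $\mathbb F_p$-MIMO picture, where the normalized channel reads $\mathbf y_1=\mathbf x_1+\mathbf x_2$ and $\mathbf y_2=\mathbf H\mathbf x_1+\mathbf x_2$, with $\mathbf H$ the $n\times n$ matrix representing multiplication by $h$. Each of the four messages is given a third of the available dimensions (taking three symbol extensions, so that $\mathbf H$ is replaced by $\mathbf I_3\otimes\mathbf H$, whenever $3\nmid n$ to keep the counts integral). Source~$1$ transmits $\mathbf V_{11}\mathbf u_{11}+\mathbf V_{21}\mathbf u_{21}$ and source~$2$ transmits $\mathbf V_{12}\mathbf u_{12}+\mathbf V_{22}\mathbf u_{22}$. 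Choosing $\mathbf V_{21}=\mathbf V_{22}=:\mathbf V$ aligns the two interferers ($W_{21},W_{22}$) at destination~$1$, and choosing $\mathbf V_{12}=\mathbf H\mathbf V_{11}$ with $\mathbf W:=\mathbf V_{11}$ aligns the two interferers ($W_{11},W_{12}$) at destination~$2$. With the alignment built in, the entire scheme reduces to producing $\mathbb F_p$ matrices $\mathbf W,\mathbf V$ for which the two square matrices $[\,\mathbf W\ \ \mathbf H\mathbf W\ \ \mathbf V\,]$ and $[\,\mathbf V\ \ \mathbf H\mathbf V\ \ \mathbf H\mathbf W\,]$ are simultaneously invertible over $\mathbb F_p$; these are exactly the conditions that let each destination strip its aligned interference and resolve its two desired streams.

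The hard part, and the crux flagged in the introduction, is this resolvability step at finite $p$. Over a large field one would pick $\mathbf W,\mathbf V$ generically, but over a small $\mathbb F_p$ that luxury is gone and one must instead count: bound the number of pairs $(\mathbf W,\mathbf V)$ that violate either rank condition and show it is strictly smaller than the total number of candidate pairs. The hypothesis $h\notin\mathbb F_p$ enters decisively, because it forces the minimal polynomial of $\mathbf H$ to be irreducible of degree at least two, so $\mathbf H$ has no eigenvector over $\mathbb F_p$ and hence $\mathbf w$ and $\mathbf H\mathbf w$ are $\mathbb F_p$-independent for every nonzero $\mathbf w$; this is precisely what permits the two streams emanating from one source to be separated at the destination that does not want them. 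I expect the bad set to be the union of the vanishing loci of a bounded number of determinant polynomials, and a careful (not merely union-bound) count of these loci to leave a strictly positive surplus exactly when $p>2$, which is the origin of the $p>2$ hypothesis; making this count tight enough to reach $p=3$ rather than $p\gg1$ is where I anticipate the real work.

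Finally, for $h\in\mathbb F_p$ the matrix $\mathbf H=h\mathbf I$ is a scalar, so it fixes every $\mathbb F_p$-subspace and the cross link no longer rotates signal spaces. Achieving rate $1$ is immediate (use a single link as a point-to-point channel). For the matching linear converse I would argue by dimension counting: writing $S_{ji}$ for the column space of $\mathbf V_{ji}$ in the ambient $\mathbb F_p^{Tn}$, decodability at destination~$1$ requires $S_{11}+S_{12}$ to have full dimension and to meet $S_{21}+S_{22}$ only at the origin, while decodability at destination~$2$, seeing the very same subspaces because $h$ is scalar, imposes the complementary condition. Together these force $\sum_{j,i}d_{ji}=\dim(S_{11}+S_{12}+S_{21}+S_{22})\le Tn$, so the linear sum-rate is at most $n$ symbols of $\mathbb F_p$, i.e.\ $1$ symbol of $\mathbb F_{p^n}$, per channel use, giving $C_{\mbox{\tiny linear}}=1$.
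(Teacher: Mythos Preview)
Your proposal is correct and matches the paper's approach in every essential respect: the same outer-bound translation from \cite{Jafar_Shamai}, the same alignment choices $\mathbf V_{22}=\mathbf V_{21}$ and $\mathbf V_{12}=\mathbf H\mathbf V_{11}$, the same reduction to simultaneous invertibility of $[\mathbf W\ \mathbf H\mathbf W\ \mathbf V]$ and $[\mathbf V\ \mathbf H\mathbf V\ \mathbf H\mathbf W]$ over $\mathbb F_p$, and the same scalar-channel/min-cut argument for $h\in\mathbb F_p$. The counting step you correctly flag as ``the real work'' is carried out in the paper by writing $\mathbf W=g\mathbf V$ for a scalar $g\in\mathbb F_{p^n}$, first choosing $g$ so that $\{1,g,hg\}$ and $\{1,h,hg\}$ are each $\mathbb F_p$-independent, and then building the columns of $\mathbf V$ recursively while tracking an explicit lower bound on $|A\cap B|$ to sharpen the union bound down to $p\ge3$.
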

\bigskip
The setting where $h\in\mathbb{F}_p$ corresponds to the real constant SISO wireless X channel. Linear DoF collapse in this setting because even with symbol extensions, the channel matrices are simply scaled identity matrices so that the alignment of vector spaces is identical at both receivers, making it impossible to have signals align at one receiver where they are undesired and remain resolvable at the other receiver where they are desired. Since $h\in\mathbb{F}_p$ is the only exception where the capacity falls short of $4/3$, it is evident from Theorem \ref{theorem:x_n} that the capacity results for the 2 user finite field constant X over $\mathbb{F}_{p^n}$ closely mirror the corresponding DoF results for the real MIMO X channel where each user has $n$ antennas. Remarkably, even though the channels in the finite field setting are  highly structured, the structural constraints do not impact the capacity result. The significance of channel structure will become transparent when we study the 3 user interference channel later in this paper.

Note that there are $p^n-1$ possible non-zero  values for ${h}$, out of which all but $p-1$ have the capacity value of $\frac{4}{3}$ which is achieved by linear beamforming. The fraction of degenerate fully connected channel instances, for which $C_{\mbox{\tiny linear}}=1$, is therefore as follows.
\begin{eqnarray}
\frac{(p-1)}{(p^n-1)} = \frac{1}{1+p+p^2+\cdots+p^{n-1}} \label{eq:frac}
\end{eqnarray}
which approaches 0 as $p\rightarrow\infty$.
Note the similarity with the constant X channel in the wireless setting for which Cadambe et al. have shown in \cite{Cadambe_Jafar_Wang} for the complex case and Motahari et al. have shown in \cite{Motahari_Gharan_Khandani} for the real case, that  interference alignment scheme achieves $4/3$ DoF for \emph{almost all} channel realizations. Remarkably, in the finite field case the fraction of channels with linear capacity $\frac{4}{3}$ is non-trivial and still precisely computable. While a tangible connection seems elusive, it is an intriguing thought, whether interpreting $p$ and $n$ in (\ref{eq:frac})  as analogous to finite SNR and finite diversity in the wireless setting might lead to finer insights there that are not available directly from the coarse DoF metric.

\proof The information theoretic outer bound of $\frac{4}{3}$ follows immediately  from the DoF outer bound  for the wireless setting presented in \cite{Jafar_Shamai}, a combination of the Z channel bounds, with minor adjustments to account for finite field channels. The linear capacity bound of $1$ when $h\in\mathbb{F}_p$ is also straightforward because in this case, regardless of the number of channel extensions, all channel matrices are simply scaled identity matrices. Since the scaling factors are irrelevant for vector spaces, i.e., beamforming schemes, the linear capacity is not changed if we replace all channel gains with unity. But such a channel has only rank 1 (equivalently min-cut value of 1) per channel use, so its sum-rate is bounded by 1, which is therefore also an outer bound for linear capacity on the original channel. Achievability of rate 1 is trivial in a fully connected X channel. So this leaves us only to prove that  a sum rate of $\frac{4}{3}$ is achievable through vector linear schemes when $h\notin\mathbb{F}_p$. The achievability scheme is the simplest, i.e., no symbol extensions are required and only scalar linear coding (one stream per message) is sufficient, when $n$ is  $3$. For ease of exposition, the achievability proof for this case, i.e., for the X channel over $\mathbb{F}_{p^3}$ is presented first, in Section \ref{sec:p3} (an alternate proof for $\mathbb{F}_{p^3}$  is also presented in Appendix I). The achievability proof over $\mathbb{F}_{p^2}$, which requires a slightly different approach, is presented in Appendix II. The proofs over $\mathbb{F}_{p^3}$ and $\mathbb{F}_{p^2}$ are \emph{not restricted} to $p>2$. The achievability proof for the remaining general case, over $\mathbb{F}_{p^n}$,  $p>2$, is presented  in Section \ref{sec:pn}. \hfill\QED

\subsection{Achievability over $\mathbb{F}_{p^3}$}\label{sec:p3}

\begin{proof}
Consider the normalized X channel which can be characterized by single channel coefficient ${h} = \frac{h_{12}h_{21}}{h_{11}h_{22}}$ from $\mathbb{F}_{p^3}$.  We use superposition coding at the sources, wherein messages from source 1, ($W_{11}, W_{21}$) are independently encoded into symbols $x_{11}, x_{21}$, respectively, and added to obtain the transmitted symbol $x_1 = {{x}_{11}}+{{x}_{21}}$ and messages from source 2, ($W_{12}, W_{22}$) are similarly encoded as $x_2 = {{x}_{21}}+{{x}_{22}}$. Symbols $x_{ji}$ are from the subfield $\mathbb{F}_p$. The received symbols are expressed as
\begin{eqnarray*}
{ y}_1&=&{{x}_{11}}+{{x}_{12}}+{{x}_{22}}+{{x}_{21}} \\
{ y}_2&=&{ {h}}{{x}_{21}}+{ {h}}{{x}_{11}}+{{x}_{22}}+{{x}_{12}}
\end{eqnarray*}
wherein $h, y_j \in \mathbb{F}_{p^3}$.

As described earlier, $\mathbb{F}_{p^3}$ can be split into a 3-dimensional space over subfield $\mathbb{F}_{p}$ so that the output has 3 dimensions (each over $\mathbb{F}_{p}$) within which 2 desired symbols and 2 interference symbols are present at each destination. To achieve  capacity, the 2 interference symbols should be aligned at each destination such that they occupy only one dimension at that destination while remaining distinguishable at the other destination where they are desired. To this end, we will assign a precoding ``vector" ${v}_{ji}\in\mathbb{F}_{p^3}$ to each symbol $x_{ji}$. 
\begin{eqnarray*}
{ y}_1&=&{v_{11}{x}_{11}}+{v_{12}{x}_{12}}+{v_{22}{x}_{22}}+{v_{21}{x}_{21}} \\
{ y}_2&=&{v_{22}{x}_{22}}+{ {h}}{v_{21}{x}_{21}}+{ {h}}{v_{11}{x}_{11}}+{v_{12}{x}_{12}}
\end{eqnarray*}
Equivalently, using vector notation,
\begin{eqnarray*}
{ \bf y}_1&=&{{\bf v}_{11}{x}_{11}}+{{\bf v}_{12}{x}_{12}}+{{\bf v}_{22}{x}_{22}}+{{\bf v}_{21}{x}_{21}} \\
{ {\bf y}}_2&=&{{\bf v}_{22}{x}_{22}}+{ \mathbf{H}}{{\bf v}_{21}{x}_{21}}+{ \mathbf{H}}{{\bf v}_{11}{x}_{11}}+{{\bf v}_{12}{x}_{12}}
\end{eqnarray*}
 wherein ${\bf y}_j, {\bf v}_{ji}\in\mathbb{F}_{p}^{3\times 1}$ are  $3\times 1$ vectors with entries from $\mathbb{F}_{p}$ and ${\bf H}\in\mathbb{F}_p^{3\times 3}$ is a structured $3\times 3$ matrix with elements from $\mathbb{F}_{p}$, representing $h\in\mathbb{F}_{p^3}$.

\noindent Interference alignment conditions are expressed as
\begin{eqnarray}
\mbox{span}({{\bf v}_{22}}) &=& \mbox{span}({{\bf v}_{21}}) \hspace{1mm} \\
\mbox{span}({{\bf v}_{12}}) &=& \mbox{span}({ \mathbf{H}}{{\bf v}_{11}}) 
\end{eqnarray}

\noindent This is accomplished by setting
\begin{eqnarray}
{{\bf v}_{22}} &=& {{\bf v}_{21}} \hspace{6mm} \\
{{\bf v}_{12}} &=& { \mathbf{H}}{{\bf v}_{11}}
\end{eqnarray}
so that interference is aligned at each destination along one dimension. For ease of exposition, an instance of the problem and its solution are illustrated in Fig. \ref{fig:X_4} using scalar notation and again in Fig. \ref{fig:X_5} using vector notation. 
\begin{figure}[!ht]
\begin{center}
\includegraphics[scale=0.9]{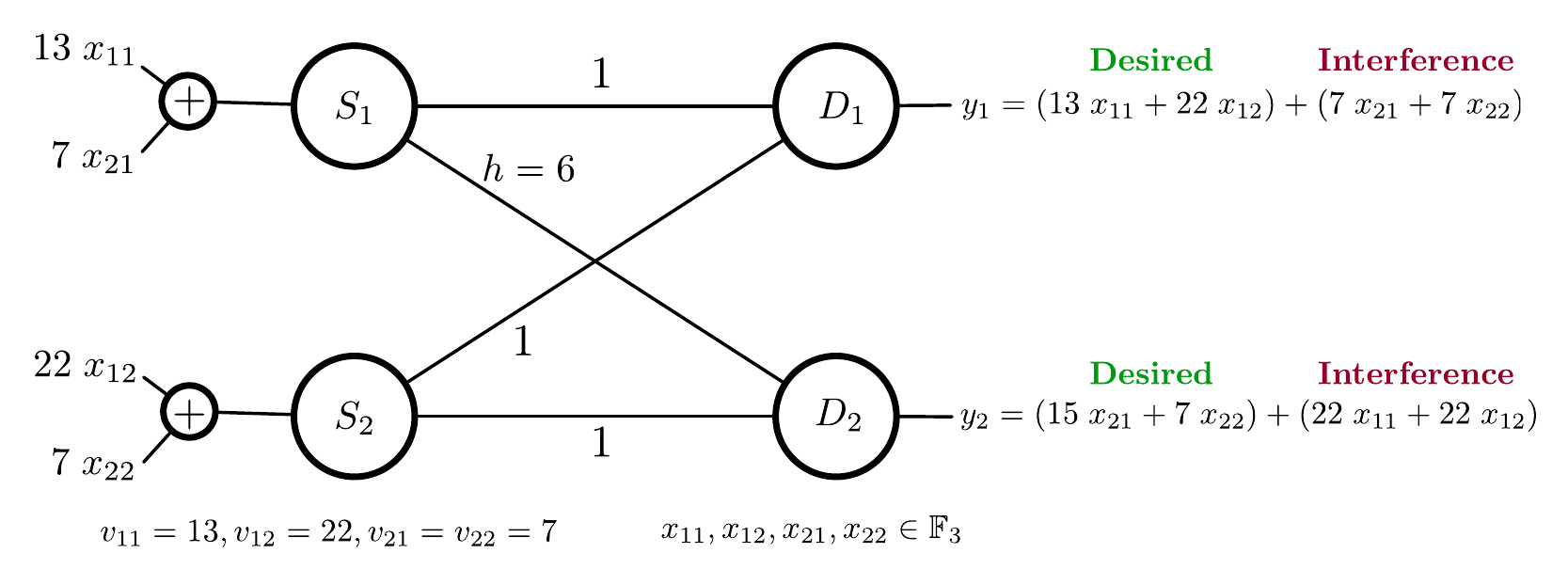}
\end{center}
\caption{An instance of the X channel over $\mathbb{F}_{3^3}$ and its capacity optimal solution represented in scalar notation.}\label{fig:X_4}
\begin{center}
\includegraphics[scale=0.9]{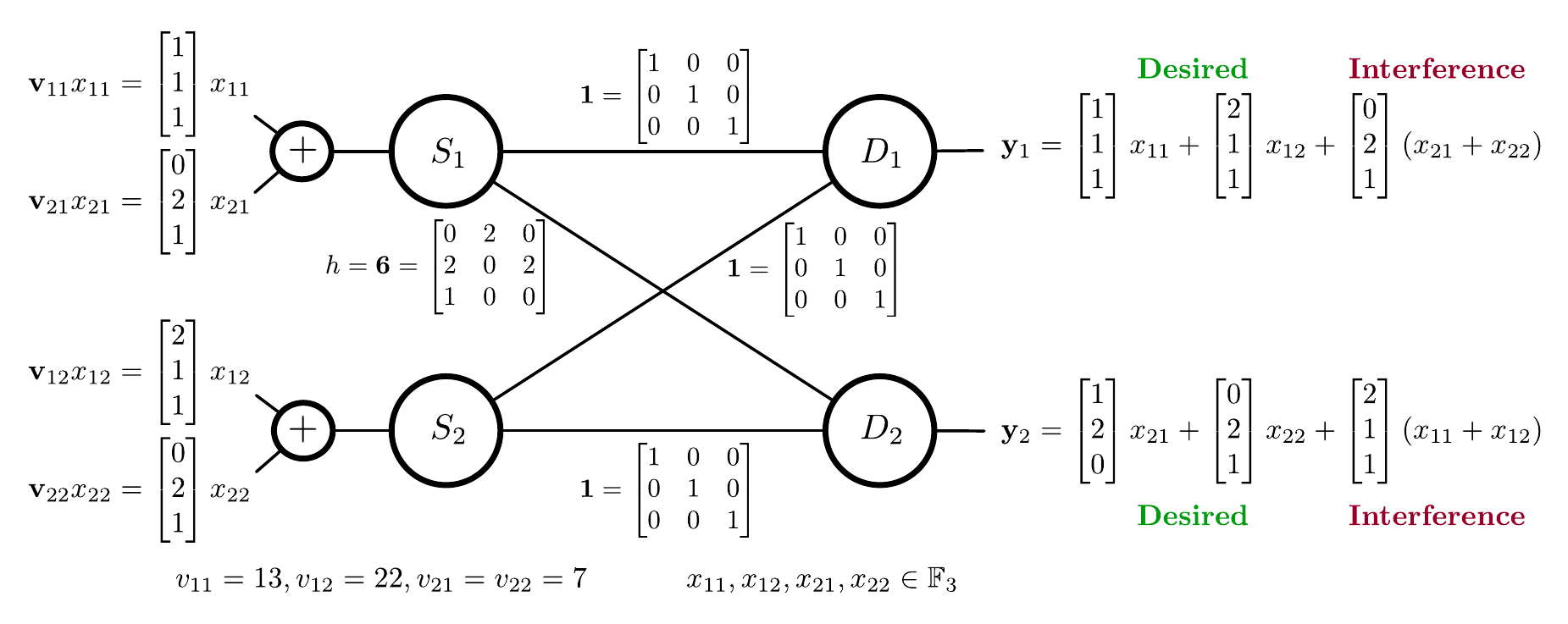}
\end{center}
\caption{The same example and solution as Fig. \ref{fig:X_4}, illustrated in vector notation.}\label{fig:X_5}
\end{figure}

 At the destinations, the spaces occupied by the two desired symbols and the aligned interference symbol are represented using matrices $S_1$ (for destination 1) and $S_2$ (for destination 2).
\begin{eqnarray}
S_1 = [{ v}_{11} \hspace{3mm} { v}_{12} \hspace{3mm} { v}_{21}] = [{ v}_{11} \hspace{3mm} {h}}{{ v}_{11} \hspace{3mm} { v}_{21}] \\
S_2 = [{ v}_{22} \hspace{3mm} {{h}}{ v}_{21} \hspace{3mm} {v}_{12}] = [{v}_{21} \hspace{3mm} {{h}}{v}_{21} \hspace{3mm} {h}}{{ v}_{11} ]
\end{eqnarray}

\indent When $h \notin \mathbb{F}_{p}$, we will now show that we can choose ${ v}_{11}$ and ${ v}_{21}$ such that elements of $S_1$ and $S_2$ are linearly independent over $\mathbb{F}_{p}$. Set $v_{21}=1$.  Then $S_1$ and $S_2$ can be written as
\begin{eqnarray}
S_1 = [{v}_{11}\hspace{3mm} {h}{ v}_{11} \hspace{3mm} 1]  \hspace{10mm} \& \hspace{10mm}
S_2 = [1 \hspace{3mm} {h} \hspace{3mm} {h}{ v}_{11} ]
\end{eqnarray}
\indent Consider $S_1$. Note that ${\bf v}_{11}$ and ${\bf H}{\bf v}_{11}$, or equivalently $v_{11}$ and $hv_{11}$, are linearly independent over $\mathbb{F}_{p}$ since $h \notin \mathbb{F}_{p}$, i.e., ${\bf H}$ is not a scaled identity matrix. Hence elements of $S_1$ are linearly independent if 1 is not a linear combination of $v_{11}$ and $hv_{11}$, or equivalently, if $\frac{1}{v_{11}}$ is not a linear combination (with coefficients from $\mathbb{F}_{p}$) of $1$ and $h$. This is guaranteed if
\begin{eqnarray}\label{eqn_a1_3}
v_{11}\notin A&\triangleq &\left\{\frac{1}{\alpha+\beta h}: \alpha, \beta\in\mathbb{F}_p, (\alpha,\beta)\neq(0,0)\right\}\cup\{0\}
\end{eqnarray}

\indent Similarly, consider $S_2$. Note that $1$ and $h$ are linearly independent over $\mathbb{F}_p$, since ${\bf H}$ is not a scaled identity matrix. Hence, elements of $S_2$ are linearly independent if $hv_{11}$ is not a linear combination of $1$ and $h$  over $\mathbb{F}_p$, or equivalently, if $v_{11}$ is not a linear combination of $\frac{1}{h}$ and $1$  over $\mathbb{F}_p$. This is guaranteed if
\begin{eqnarray}\label{eqn_a2_3}
v_{11}\notin B&\triangleq &\left\{\alpha+\frac{\beta}{ h}: \alpha, \beta\in \mathbb{F}_p,  (\alpha,\beta)\neq(0,0)\right\}\cup\{0\}
\end{eqnarray}
 Since $|A|\leq p^2$ and $|B|\leq p^2$, and  all $p$ constant polynomials are contained in both $A$ and $B$, we must have 
\begin{eqnarray}
|A\cup B|\leq 2p^2-p
\end{eqnarray}
Unless $A\cup B$ contains all $p^3$ elements of   $\mathbb{F}_{p^3}$ there is at least one choice of $v_{11}$ that satisfies both (\ref{eqn_a1_3}) and (\ref{eqn_a2_3}). In other words, the scheme works if 
\begin{eqnarray}
p^3&>&2p^2-p
\end{eqnarray}
which is true for all $p\geq 2$. Thus, we have proved the achievability of rate $\frac{1}{3}$ per message, and a sum-rate of $\frac{4}{3}$, which matches the capacity outer bound. Note that a   $\mathbb{F}_{p^3}$ symbol represents $\frac{1}{3}$ of an   $\mathbb{F}_p$ symbol and the capacity is measured in   $\mathbb{F}_{p^3}$ units because the original channel alphabet is from $\mathbb{F}_{p^3}$. Also note that the achievability proof applies to $p=2$ as well. An alternate proof for achievability of sum-rate of $\frac{4}{3}$ is presented in Appendix I. $\hfill\QED$
\end{proof}

Similar to splitting a field $\mathbb{F}_{p^3}$ to form a 3-dimensional space  in field of order p, other fields of order $p^n$ can be split to a $n$-dimensional field of order $p$. However, in order to achieve the optimal capacity of $\frac{4}{3}$, symbol extensions would be required when $n$ is not a multiple of 3. The capacity result for the general case is presented in the next section.

\subsection{Achievability over $\mathbb{F}_{p^n}$}\label{sec:pn}
\begin{proof}
Achievability proof for channels over field $\mathbb{F}_{p^2}$ is presented in Appendix II. Here, we discuss achievability proof for channels over field $\mathbb{F}_{p^n}, n > 3$. \\
Let us use 3 symbol extensions, so that we operate in a $3n$ dimensional vector space over $\mathbb{F}_p$. Each message $W_{ji}$  is encoded into $n$ streams represented by the elements of the column vector $x_{ji}\in\mathbb{F}_{p}^{n\times 1}$, and the $n$  streams are sent along the $n$ column vectors of the precoding matrix ${\bf v}_{ji}\in\mathbb{F}_{p}^{3n\times n}$, or equivalently, ${ v}_{ji}\in\mathbb{F}_{p^n}^{3\times n}$. Thus, the sum data rate is $\frac{4}{3}$ in units of $\mathbb{F}_{p^n}$ symbols per channel use, and it remains to be shown that the desired symbols are resolvable from interference at each destination.

\noindent Over each extended channel use, the received signals, $y_1, y_2 \in\mathbb{F}_{p^n}^{3\times 1}$ at each destination are expressed as:
\begin{eqnarray*}
{ y}_1&=&{v_{11}{x}_{11}}+{v_{12}{x}_{12}}+{v_{22}{x}_{22}}+{v_{21}{x}_{21}} \\
{ y}_2&=&{v_{22}{x}_{22}}+{ {h}}{v_{21}{x}_{21}}+{ {h}}{v_{11}{x}_{11}}+{v_{12}{x}_{12}}
\end{eqnarray*}
Equivalently, using vector notation the received signals, ${\bf y}_1, {\bf y}_2 \in\mathbb{F}_{p}^{3n\times 1}$ at each destination are expressed as:
\begin{eqnarray*}
{ \bf y}_1&=&{{\bf v}_{11}{x}_{11}}+{{\bf v}_{12}{x}_{12}}+{{\bf v}_{22}{x}_{22}}+{{\bf v}_{21}{x}_{21}} \\
{ {\bf y}}_2&=&{{\bf v}_{22}{x}_{22}}+{ \mathbf{H}}{{\bf v}_{21}{x}_{21}}+{ \mathbf{H}}{{\bf v}_{11}{x}_{11}}+{{\bf v}_{12}{x}_{12}}
\end{eqnarray*}
 wherein $\mathbf{H}\in\mathbb{F}_p^{3n\times 3n}$ is the channel matrix.
Interference alignment conditions are expressed as
\begin{eqnarray}
 \mbox{span}({{\bf v}_{22}}) &=& \mbox{span}({{\bf v}_{21}}) \hspace{1mm} \\
\mbox{span}({{\bf v}_{12}}) &=& \mbox{span}({ \mathbf{H}}{{\bf v}_{11}}) 
\end{eqnarray}
This is accomplished by setting
\begin{eqnarray}
{{\bf v}_{22}} &=& {{\bf v}_{21}} \hspace{6mm}\\
{{\bf v}_{12}} &=& { \mathbf{H}}{{\bf v}_{11}}
\end{eqnarray}
At each destination, $2n$ desired symbols and $n$ aligned interference symbols are represented using matrices $S_1\in\mathbb{F}_{p^n}^{3\times 3n}$ (for destination 1) and $S_2\in\mathbb{F}_{p^n}^{3\times 3n}$ (for destination 2).
\begin{eqnarray}
S_1 = [{ v}_{11} \hspace{3mm} { v}_{12} \hspace{3mm} { v}_{21}] = [{ v}_{11} \hspace{3mm} {h}}{{ v}_{11} \hspace{3mm} { v}_{21}] \\
S_2 = [{ v}_{22} \hspace{3mm} {{h}}{ v}_{21} \hspace{3mm} {v}_{12}] = [{v}_{21} \hspace{3mm} {{h}}{v}_{21} \hspace{3mm} {h}}{{ v}_{11} ]
\end{eqnarray}
\indent We will now show that when $h \notin \mathbb{F}_p$,  we can choose ${ v}_{11}$ and ${ v}_{21}$ such that the columns of $S_1$ and $S_2$ are linearly independent over $\mathbb{F}_p$. Let us choose 
\begin{eqnarray}
{ v}_{11}=g{ v}_{21}
\end{eqnarray}
with a non-zero $g \in \mathbb{F}_{p^n}$. For notational convenience, we will denote ${ v}_{21}$ as just $v\in\mathbb{F}_{p^n}^{3\times n}$. Then $S_1$ and $S_2$ can be written as
\begin{eqnarray}
S_1 &=& [{g}{ v} \hspace{3mm} {hg}{ v} \hspace{3mm} {v}]  \\
S_2 &=& [{ v} \hspace{3mm} {h}{ v}\hspace{3mm} {hg}{ v}]
\end{eqnarray}
wherein beamforming matrix $v$ has $n$ columns, denoted as $v_1,\ldots,v_n \in\mathbb{F}_{p^n}^{3\times 1}$.
\begin{eqnarray}
S_1 = [{g}{ v_1} \ldots {g}{ v_n} \hspace{3mm} {hg}{ v_1} \ldots {hg}{ v_n} \hspace{3mm} {v_1} \ldots {v_n}]  \\
S_2 = [{h}{ v_1} \ldots {h}{ v_n} \hspace{3mm} {hg}{ v_1} \ldots {hg}{ v_n}\hspace{3mm} {v_1} \ldots {v_n} ]
\end{eqnarray}

\noindent\begin{figure} [h] 
\tikzstyle{decision} = [diamond, draw, fill=blue!20, 
    text width=4.5em, text badly centered, node distance=4cm, inner sep=0pt]
\tikzstyle{blockinit} = [rectangle, draw, fill=blue!20, 
    text width=40em, text centered, rounded corners, minimum height=2em]
\tikzstyle{blocksmall} = [rectangle, draw, fill=red!20, 
    text width=3em, text centered, rounded corners, node distance=3cm, minimum height=2em]
    \tikzstyle{block} = [rectangle, draw,  fill=blue!20,
    text width=40em, text centered, rounded corners, node distance=2.1cm, minimum height=2em]
\tikzstyle{line} = [draw, -latex']
\tikzstyle{cloud} = [draw, ellipse,fill=red!20, node distance=6cm,
    minimum height=2em]

\begin{tikzpicture}[auto]
    \node [cloud] (begin) {begin};
    \node [decision, right of=begin] (decide) {$h\in\mathbb{F}_p$?};
      \node [blocksmall, right of=decide] (stop) {stop};
    \node [blockinit, below of=decide, node distance=2.0cm] (init) {Given $h$, choose $g$ such that $[g ~hg ~1]$ and $[h ~hg ~1]$ are each  linearly independent over $\mathbb{F}_p$};
        \node [block, below of=init, node distance=1.5cm] (choosev1) {Choose any non-zero $v_1\in\mathbb{F}^{3\times1}_{p^n}$, e.g., the  vector of all ones};
                \node [block, below of=choosev1, node distance=1.5cm] (choosev2) {Given $h, g, v_1$, choose $v_2\in\mathbb{F}^{3\times1}_{p^n}$ such that the 6 columns in $S_1$ and the 6 columns in $S_2$ that contain $v_1, v_2$, are each linearly independent over $\mathbb{F}_p$};
                                \node [block, below of=choosev2, node distance=2.5cm] (choosevk) {Given $h, g, v_1, v_2,\cdots,v_{k-1}$, choose $v_k\in\mathbb{F}^{3\times1}_{p^n}$ such that the $3k$ columns in $S_1$ and the $3k$ columns in $S_2$ that contain $v_1, v_2,\cdots, v_k$ are each linearly independent over $\mathbb{F}_p$};
    \node [block, below of=choosevk] (choosevn) {Given $h, g, v_1, v_2,\cdots,v_{n-1}$, choose $v_n\in\mathbb{F}^{3\times1}_{p^n}$ such that the $3n$ columns in $S_1$ and the $3n$ columns in $S_2$ that contain $v_1, v_2,\cdots, v_n$ are each linearly independent over $\mathbb{F}_p$};
        \node [blocksmall, below of=choosevn, node distance=1.5cm] (stop1) {stop};
    \path [line] (begin) -- (decide);
   \path [line] (decide) -- node {yes}(stop);
   \path [line] (decide) -- node {no}(init);
   \path [line] (init) -- (choosev1);
      \path [line] (choosev1) -- (choosev2);
            \path[line,dashed] (choosev2) -- (choosevk);
                  \path[line,dashed] (choosevk) -- (choosevn);
                   \path [line] (choosevn) -- (stop1);
\end{tikzpicture}
\caption{Algorithm for the construction of precoding vectors.}\label{fig:x_recursion}
\end{figure}
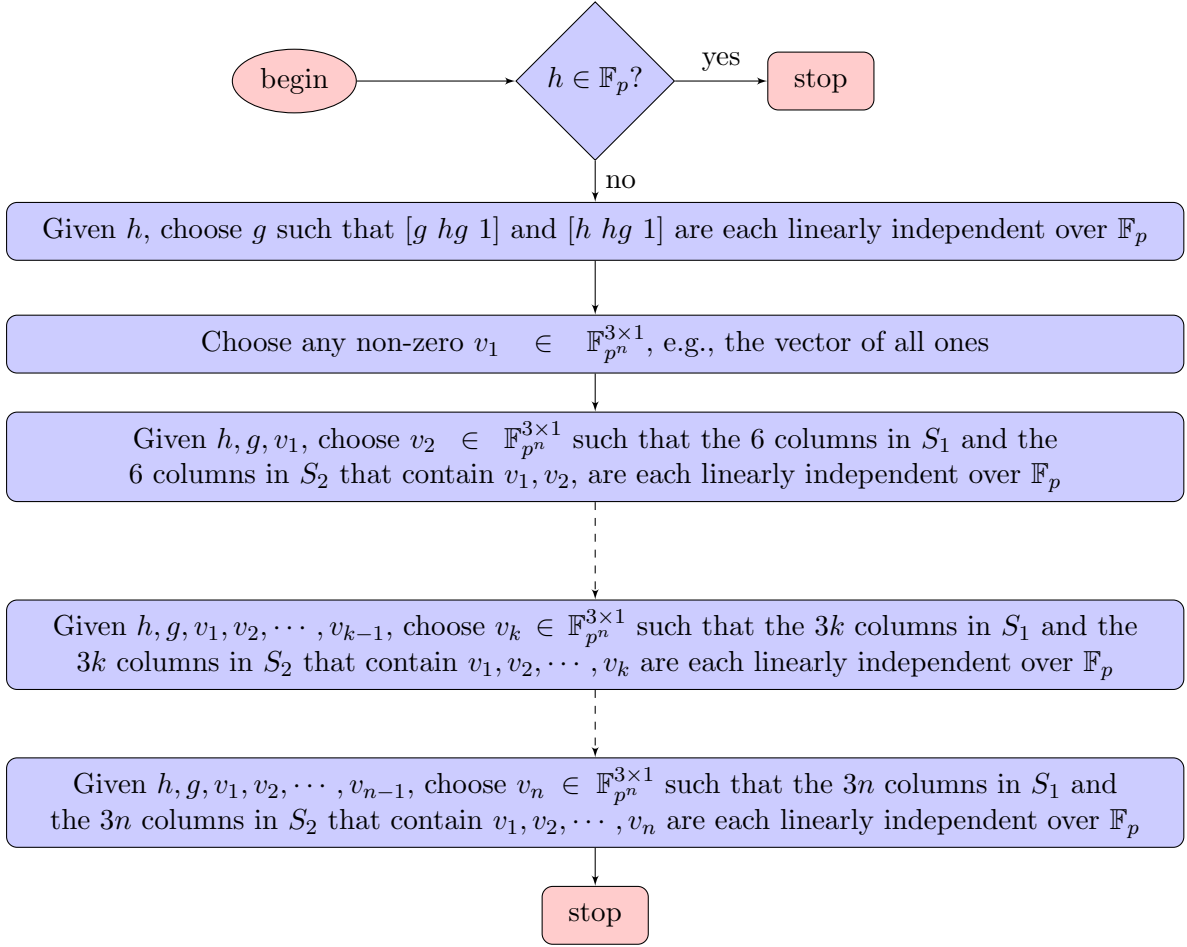

In Fig. \ref{fig:x_recursion}, we illustrate the recursive proof described hereafter.  

Choose $v_1$ as the all-ones vector. We first consider columns containing $v_1$. There are three such columns, and they need to be linearly independent in both $S_1$ and $S_2$. This requires that following vectors are linearly independent over $\mathbb{F}_p$.
\begin{eqnarray}
\text{From} \hspace{2mm} S_1: [{g}v_1 \hspace{3mm} {hg}v_1 \hspace{3mm} v_1] \\
\text{From} \hspace{2mm} S_2: [v_1 \hspace{3mm} {h}v_1 \hspace{3mm} {hg}v_1 ]
\end{eqnarray}
\indent Consider $S_1$. Note that $gv_1$ and $hgv_1$ are linearly independent over $\mathbb{F}_p$, since $h \notin \mathbb{F}_p$, i.e., $h$ is not a constant polynomial, and $g,v_1\neq 0$. Hence, elements of $S_1$ are linearly independent over $\mathbb{F}_p$ if 1 is not a linear combination of $g$ and $hg$ over $\mathbb{F}_p$, or equivalently, $\frac{1}{g}$ is not a linear combination of $1$ and $h$ over $\mathbb{F}_p$. This is guaranteed if 
\begin{eqnarray}\label{eqn_a1_n}
g\notin A&\triangleq &\left\{\frac{1}{\alpha+\beta h}: \alpha, \beta\in \mathbb{F}_p,  (\alpha,\beta)\neq(0,0)\right\}\cup\{0\}
\end{eqnarray}
\indent Similarly, consider $S_2$. Note that $v_1$ and $hv_1$ are linearly independent, since $h \notin \mathbb{F}_p$, i.e., $h$ is not a constant polynomial. Hence elements of $S_2$ are linearly independent over $\mathbb{F}_p$ if $hg$ is not a linear combination of $1$ and $h$ over $\mathbb{F}_p$, or equivalently, $g$ is not a linear combination of $1$ and $\frac{1}{h}$ over $\mathbb{F}_p$ . This is guaranteed if 
\begin{eqnarray}\label{eqn_a2_n}
g\notin B&\triangleq &\left\{\alpha+\frac{\beta}{ h}: \alpha, \beta\in \mathbb{F}_p,  (\alpha,\beta)\neq(0,0)\right\}\cup\{0\}
\end{eqnarray}
Since $|A|\leq p^{2}, |B|\leq p^2$ and $A$ and $B$ both contain all $p$ elements of $\mathbb{F}_p$, we must have $|A\cup B|\leq 2p^2-p$. Therefore, a choice of $g$ that satisfies both (\ref{eqn_a1_n}) and (\ref{eqn_a2_n}) is guaranteed to exist if 
\begin{eqnarray}
p^n>2p^2-p
\end{eqnarray}
which is true $\forall n\geq 3$.

If $v_k\neq 0$, the same choice of $g$ ensures that the following columns from $S_1$ and $S_2$ are linearly independent  over $\mathbb{F}_p$, $\forall k \in \{1,\ldots,n\}$. 
\begin{eqnarray}
\text{From} \hspace{2mm} S_1: [{g}v_k \hspace{3mm} {hg}v_k \hspace{3mm} v_k] \\
\text{From} \hspace{2mm} S_2: [v_k \hspace{3mm} {h}v_k \hspace{3mm} {hg}v_k ]
\end{eqnarray}
We  now present the recursive proof for linear independence over $\mathbb{F}_p$ of desired and interference symbols at destinations. At iteration $k$, column vector $v_{k+1}$ will be chosen based on previously chosen columns $v_1,\ldots,v_k$ and $g$. We already  chose $v_1$ to be the vector of ones. So now $v_2$ will be chosen such that following columns are linearly independent over $\mathbb{F}_p$ in $S_1$ and $S_2$ :
\begin{eqnarray}
\text{From} \hspace{2mm} S_1: [{g}v_1 \hspace{3mm} {hg}v_1 \hspace{3mm} v_1 \hspace{3mm} {g}v_2 \hspace{3mm} {hg}v_2 \hspace{3mm} v_2]\label{eq:recur1} \\
\text{From} \hspace{2mm} S_2: [{h}v_1 \hspace{3mm} {hg}v_1 \hspace{3mm} v_1 \hspace{3mm} {h}v_2 \hspace{3mm} {hg}v_2 \hspace{3mm} v_2]\label{eq:recur2}
\end{eqnarray}
Linear independence over $\mathbb{F}_p$ for (\ref{eq:recur1}) and (\ref{eq:recur2}) is guaranteed, respectively, if 
\begin{eqnarray}
v_2&\notin&A\triangleq\left\{\left(\frac{\alpha_1g+\alpha_2hg+\alpha_3}{\alpha_4g+\alpha_5hg+\alpha_6}\right)v_1:\alpha_1,\cdots,\alpha_6\in\mathbb{F}_p, (\alpha_4,\alpha_5,\alpha_6)\neq(0,0,0)\right\}\label{eq:recur3}\\
v_2&\notin&B\triangleq\left\{\left(\frac{\beta_1h+\beta_2hg+\beta_3}{\beta_4h+\beta_5hg+\beta_6}\right)v_1:\beta_1,\cdots,\beta_6\in\mathbb{F}_p, (\beta_4,\beta_5,\beta_6)\neq(0,0,0)\right\}\label{eq:recur4}
\end{eqnarray}
Now we note that
\begin{eqnarray}
 A\cap B&\supseteq&\left\{\left(\frac{\beta_{2}hg+\beta_{3}}{\beta_{5}hg+\beta_{6}}\right)v_1:\beta_1,\cdots,\beta_{6}\in\mathbb{F}_p, (\beta_{5},\beta_{6})\neq(0,0)\right\}\\
|A|&\leq&\frac{(p^3-1)p^3}{p-1}=p^5+p^4+p^3\\
|B|-|A\cap B|&\leq&\frac{(p^3-1)p^3}{p-1}-\frac{(p^2-1)p^2}{p-1} = p^5+p^4-p^2\\
|A\cup B| = |A|+|B|-|A\cap B| &\leq&2p^5+2p^4+p^3-p^2
\end{eqnarray}
Since there are $p^{3n}$ possible choices for $v_2$, there must exist at least one choice that satisfies both (\ref{eq:recur3}) and (\ref{eq:recur4}) if
\begin{eqnarray}
p^{3n}&>&2p^5+2p^4+p^3-p^2
\end{eqnarray}
which is true for all $p\geq 3$.

Similarly this recursion is carried out for choosing vectors $v_3,\ldots,v_{n-1}$. We will now describe the last stage of recursion, i.e., choosing vector $v_n$ for given $h,g,v_1,\ldots,v_{n-1}$. We want to design $v_n$ such that all $3n$ columns are linearly independent over $\mathbb{F}_p$ in $S_1$ and $S_2$ :
\begin{eqnarray}
\text{From} \hspace{2mm} S_1: [{g}v_1 \hspace{3mm} {hg}v_1 \hspace{3mm} v_1 \hspace{3mm} {g}v_2 \hspace{3mm} {hg}v_2 \hspace{3mm} v_2 \hspace{2mm}\ldots\hspace{2mm} {g}v_n \hspace{3mm} {hg}v_n \hspace{3mm} v_n] \\
\text{From} \hspace{2mm} S_2: [{h}v_1 \hspace{3mm} {hg}v_1 \hspace{3mm} v_1 \hspace{3mm} {h}v_2 \hspace{3mm} {hg}v_2 \hspace{3mm} v_2 \hspace{2mm}\ldots\hspace{2mm} {h}v_n \hspace{3mm} {hg}v_n \hspace{3mm} v_n]
\end{eqnarray}

\noindent The linear independence over $\mathbb{F}_p$ is guaranteed if
\begin{eqnarray}
v_n\notin A&\triangleq &\left\{\sum_{l=1}^{n-1}\left(\frac{\alpha_{3l-2}g+\alpha_{3l-1}hg+\alpha_{3l}}{\alpha_{3n-2}g+\alpha_{3n-1}hg+\alpha_{3n}}\right)v_l:\alpha_1,\cdots,\alpha_{3n}\in\mathbb{F}_p, (\alpha_{3n-2},\alpha_{3n-1},\alpha_{3n})\neq(0,0,0)\right\}\nonumber\\
&&\label{eq:recur5}\\
v_n\notin B&\triangleq& \left\{\sum_{l=1}^{n-1}\left(\frac{\beta_{3l-2}h+\beta_{3l-1}hg+\beta_{3l}}{\beta_{3n-2}h+\beta_{3n-1}hg+\beta_{3n}}\right)v_l:\beta_1,\cdots,\beta_{3n}\in\mathbb{F}_p, (\beta_{3n-2},\beta_{3n-1},\beta_{3n})\neq(0,0,0)\right\}\nonumber\\
&&\label{eq:recur6}\\
\Rightarrow A\cap B&\supseteq&\left\{\sum_{l=1}^{n-1}\left(\frac{\beta_{3l-1}hg+\beta_{3l}}{\beta_{3n-1}hg+\beta_{3n}}\right)v_l:\beta_1,\cdots,\beta_{3n}\in\mathbb{F}_p, (\beta_{3n-1},\beta_{3n})\neq(0,0)\right\}\label{eq:recur7}
\end{eqnarray}
Next we bound the cardinalities as follows.
\begin{eqnarray}
|A|&\leq&\frac{(p^3-1)p^{3n-3}}{p-1} = p^{3n-1}+p^{3n-2}+p^{3n-3}\\ \nonumber
|B|-|A\cap B|&\leq&\frac{(p^3-1)p^{3n-3}}{p-1}-\frac{(p^2-1)p^{2n-2}}{p-1}\\
&=&p^{3n-1}+p^{3n-2}+p^{3n-3}-p^{2n-1}-p^{2n-2}\\
|A\cup B| = |A|+|B|-|A\cap B| &\leq&2p^{3n-1}+2p^{3n-2}+2p^{3n-3}-p^{2n-1}-p^{2n-2}
\end{eqnarray}
Since there are $p^{3n}$ possible choices for $v_n$, there must exist at least one choice that satisfies both (\ref{eq:recur5}) and (\ref{eq:recur6}) if
\begin{eqnarray}
p^{3n}&>&2p^{3n-1}(1+\frac{1}{p}+\frac{1}{p^2})-p^{2n-1}-p^{2n-2}
\end{eqnarray}
which is easily shown to be true for all $p\geq 3$ as follows. If $p\geq 3$ then the RHS is bounded above by $2p^{3n-1}(1+\frac{1}{3}+\frac{1}{9})=\frac{26}{9}p^{3n-1}$ whereas the LHS is bounded below by $3p^{3n-1}$.
\hfill\QED
\end{proof} 
\vspace{2mm}
\\

\section{Interference Channel}
As noted previously, the impact of channel structure due to finite field operations in $\mathbb{F}_{p^n}$ is not evident in the capacity of the X channel as characterized in Theorem \ref{theorem:x_n}, because the capacity results for the $\mathbb{F}_{p^n}$ channels mimic the DoF results for the  generic $\mathbb{R}^{n\times n}$ real MIMO X channels in the wireless setting. In this section we will extend our study beyond the X channel, to the 3 user interference channel, where the distinction between a generic $\mathbb{R}^{n\times n}$ MIMO setting and the $\mathbb{F}_p^{n\times n}$ MIMO representations of the finite field $\mathbb{F}_{p^n}$ becomes evident. In particular, we will study the  linear sum-capacity, $C_{\mbox{\tiny linear}}$, of a finite field 3-user interference channel with 3 source nodes, 3 destination nodes and 3 independent messages as illustrated in Fig. \ref{fig:int_chan}. 

\begin{figure}[h]
\begin{center}
\includegraphics[scale=0.75]{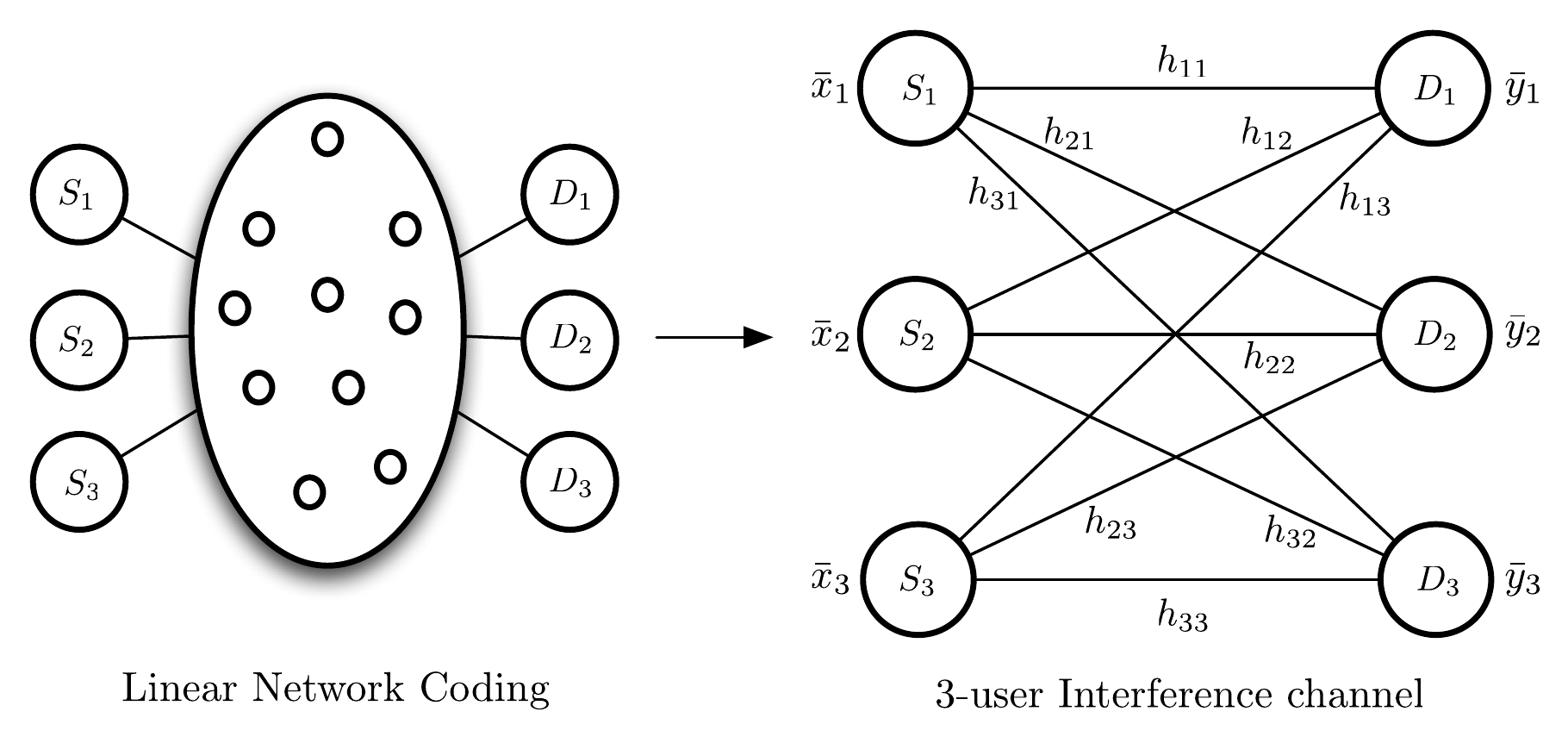}
\end{center}
\caption{Wired network modeled as 3-user interference channel}\label{fig:int_chan}
\end{figure}

\subsection{Prior Work}
The $K$ user interference channel, with $K>2$, has been extensively studied in recent years. Cadambe and Jafar showed in \cite{Cadambe_Jafar_int} that the $K$-user  fully connected interference channel with $M$ antennas at each node has $\frac{MK}{2}$ sum-DoF over a time-varying or frequency-selective channel, based on the CJ scheme. The DoF value of the 3 user constant complex MIMO interference channel with $M>1$ antennas at each node was also shown by Cadambe and Jafar, to be $\frac{3M}{2}$ using an eigenvector solution. The DoF of  asymmetric MIMO settings were characterized    in \cite{Gou_Jafar_MIMO,Ghasemi_Motahari_Khandani, Wang_Gou_Jafar_Subspace,Wang_Sun_Jafar} and the linear capacity of generic MIMO interference channels without symbol extensions was studied in \cite{Gomadam_Cadambe_Jafar, Yetis_Gou_Jafar_Kayran_TSP, Bresler_Cartwright_Tse, Razaviyayn_Lyubeznik_Luo, Gonzalez_Beltran_Santamaria, Ruan_Lau_Win,Wang_Gou_Jafar_Subspace,Bresler_Tse_Geometry}. 

For the complex constant 3 user SISO interference channel, Cadambe et al. showed in \cite{Cadambe_Jafar_Wang} that the linear DoF value is $\frac{6}{5}$ using asymmetric complex signaling scheme which precodes the real and imaginary parts of the signal separately. The constant complex SISO channel setting can be interpreted as having diversity  2.  Bresler and Tse characterized the DoF of the 3-user time-varying/frequency-selective interference channel as a function of the channel diversity, $L$, in \cite{Bresler_Tse_Diversity}. While DoF of $\frac{3}{2}$ can be achieved over channel with infinite diversity,  Bresler and Tse showed that the linear DoF of the 3-user interference channel with channel diversity $L$, is $\frac{3D}{2D+1}$ where $D = 2L-\lfloor L/2\rfloor-1$ is known as the alignment depth. Channel diversity, $L$, was shown to limit the extent to which interference signals can be aligned while maintaining the resolvability of the desired signals from interference.

In the context of network coding, the 3 unicast problem which is the counterpart of the 3 user interference channel, was studied   in \cite{Das_Vishwanath_Jafar_Markopoulou,Ramakrishnan_Das_Maleki_Markopoulou_Jafar_Vishwanath,Meng_Ramakrishnan_Markopoulou_Jafar} by Das et al., Ramakrishnan et al., and Meng et al., who introduced the Precoding-Based Network Alignment (PBNA) framework and found conditions under which half the source-destination min-cut was achievable for each user. The results were extended to networks with delay in \cite{Bavirisetti_Abhinav_Prasad_Rajan}. These works require time-varying channel coefficients due to a direct translation from the  CJ scheme originally designed for the time-varying interference channel. However, in this work we will focus only on the constant channel setting over $\mathbb{F}_{p^n}$, viewed as a constant $\mathbb{F}_p^{n\times n}$ MIMO setting. In particular, we wish to understand the significance of the channel structure.

\subsection{Finite Field Interference Channel Model}
Consider the finite field 3-user interference channel 
\begin{eqnarray*}
\bar{ y}_1(t)&=&{ { h}_{11}}{\bar{x}_{1}}(t)+{ { h}_{12}}{\bar{x}_{2}}(t)+{ { h}_{13}}{\bar{x}_{3}}(t) \\
\bar{ y}_2(t)&=&{ { h}_{21}}{\bar{x}_{1}}(t)+{ { h}_{22}}{\bar{x}_{2}}(t)+{ { h}_{23}}{\bar{x}_{3}}(t) \\
\bar{ y}_3(t)&=&{ { h}_{31}}{\bar{x}_{1}}(t)+{ { h}_{32}}{\bar{x}_{2}}(t)+{ { h}_{33}}{\bar{x}_{3}}(t)
\end{eqnarray*}
where, over the $t^{th}$ channel use, $\bar{x}_{i}(t)$ is the  symbol sent by source $i$, ${ h}_{ji}$ represents channel coefficient between source $i$ and destination $j$ and $\bar{y}_j$ represents the received symbol at destination $j$. All symbols $\bar{x}_{i}(t), {h}_{ji}, \bar{y}_j(t)$ and addition and multiplication operations are in a finite field $\mathbb{F}_{p^n}$. The channel coefficients ${ h}_{ji}$ are constant across $t$ channel uses and assumed to be perfectly known at all sources and  destinations. There are three independent messages, with $W_{i}$ denoting the message that originates at source $i$ and is intended for destination $i$.

A coding scheme over $T$ channel uses, that assigns to each message $W_{i}$ a rate $R_{i}$, measured in units of   $\mathbb{F}_{p^n}$  symbols  per channel use, corresponds to a encoding function at each source $i$ that maps the messages originating at that source into  a sequence of $T$ transmitted symbols, and a decoding function at each destination that maps the sequence of $T$ received symbols into decoded messages $\hat{W}_{i}$.
\begin{eqnarray}
\mbox{Encoder 1: }&& (W_{1})\rightarrow {\bar x}_1(1){\bar x}_1(2)\cdots{\bar x}_1(T)\\
\mbox{Encoder 2: }&& (W_{2})\rightarrow {\bar x}_2(1){\bar x}_2(2)\cdots{\bar x}_2(T)\\
\mbox{Encoder 3: }&& (W_{3})\rightarrow {\bar x}_3(1){\bar x}_3(2)\cdots{\bar x}_3(T)\\
\mbox{Decoder 1: } &&{\bar y}_1(1){\bar y}_1(2)\cdots{\bar y}_1(T)\rightarrow (\hat{W}_{1})\\
\mbox{Decoder 2: }&& {\bar y}_2(1){\bar y}_2(2)\cdots{\bar y}_2(T)\rightarrow (\hat{W}_{2})\\
\mbox{Decoder 3: }&& {\bar y}_3(1){\bar y}_3(2)\cdots{\bar y}_3(T)\rightarrow (\hat{W}_{3})
\end{eqnarray}
Each message $W_{i}$ is uniformly distributed over $\{1,2,\cdots, \lceil p^{nTR_{i}}\rceil\}$, $\forall  i\in\{1,2,3\}$. An error occurs if $(\hat{W}_{1}, \hat{W}_{2}, \hat{W}_{3})\neq(W_{1}, W_{2}, W_{3})$.  A rate tuple $(R_{1}, R_{2}, R_{3})$ is said to be achievable if there exist encoders and decoders such that the probability of error can be made arbitrarily small by choosing a sufficiently large $T$. The closure of all achievable rate pairs is the capacity region and the maximum value of $R_{1}+R_{2}+R_{3}$ across all rate tuples that belong to the capacity region, is the sum-capacity, C. Since we are interested in linear interference alignment, we will again define \emph{linear} capacity, $C_{linear}$, as the highest sum-rate possible through vector linear coding schemes  over the base field $\mathbb{F}_p$.

\subsection{Interference Channel Normalization} \label{interf_norm_sec}
As noted in the X channel, since the main insights come from the fully connected setting, we will assume that all channel coefficients are non-zero. Channel settings where some of the channels are zero are dealt with separately in the Appendix III. Without loss of generality, let us normalize the channel coefficients by invertible operations at the sources and destinations shown in Fig. \ref{fig:int_norm1}. Since these are invertible operations, they do not affect the channel capacity:
\\
Destination 1 normalizes symbols by $h_{12} :  \hspace{9mm} { y}_1 = \frac{\bar{ y}_1}{ h_{12}}$ 
\vspace{1mm}
\\
Destination 2 normalizes symbols by $\frac{h_{12}h_{23}}{h_{13}} : \hspace{5mm}  { y}_2 = \frac{\bar{ y}_2h_{13}}{h_{12}h_{23}}$ 
\vspace{1mm}
\\
Destination 3 normalizes symbols by $\frac{h_{12}h_{23}h_{31}}{h_{21}h_{13}} : { y}_3 = \frac{\bar{ y}_3h_{21}h_{13}}{h_{12}h_{23}h_{31}}$
\vspace{2mm}
\\
Source 1 normalizes symbols by $\frac{h_{13}h_{21}}{h_{12}h_{23}}$ : $ \hspace{12mm} { x}_{1} = \frac{\bar{ x}_{1}h_{12}h_{23}}{h_{13}h_{21}}$ 
\vspace{1mm}
\\
Source 2 performs no normalization : $ \hspace{16mm} { x}_{2}=\bar{ x}_{2}$ 
\vspace{1mm}
\\
Source 3 normalizes symbols by $\frac{h_{13}}{h_{12}}$ : $ \hspace{16mm} { x}_{3} = \frac{\bar{ x}_{3}h_{12}}{h_{13}}$
\begin{figure}[!ht]
\begin{center}
\includegraphics[scale=0.8]{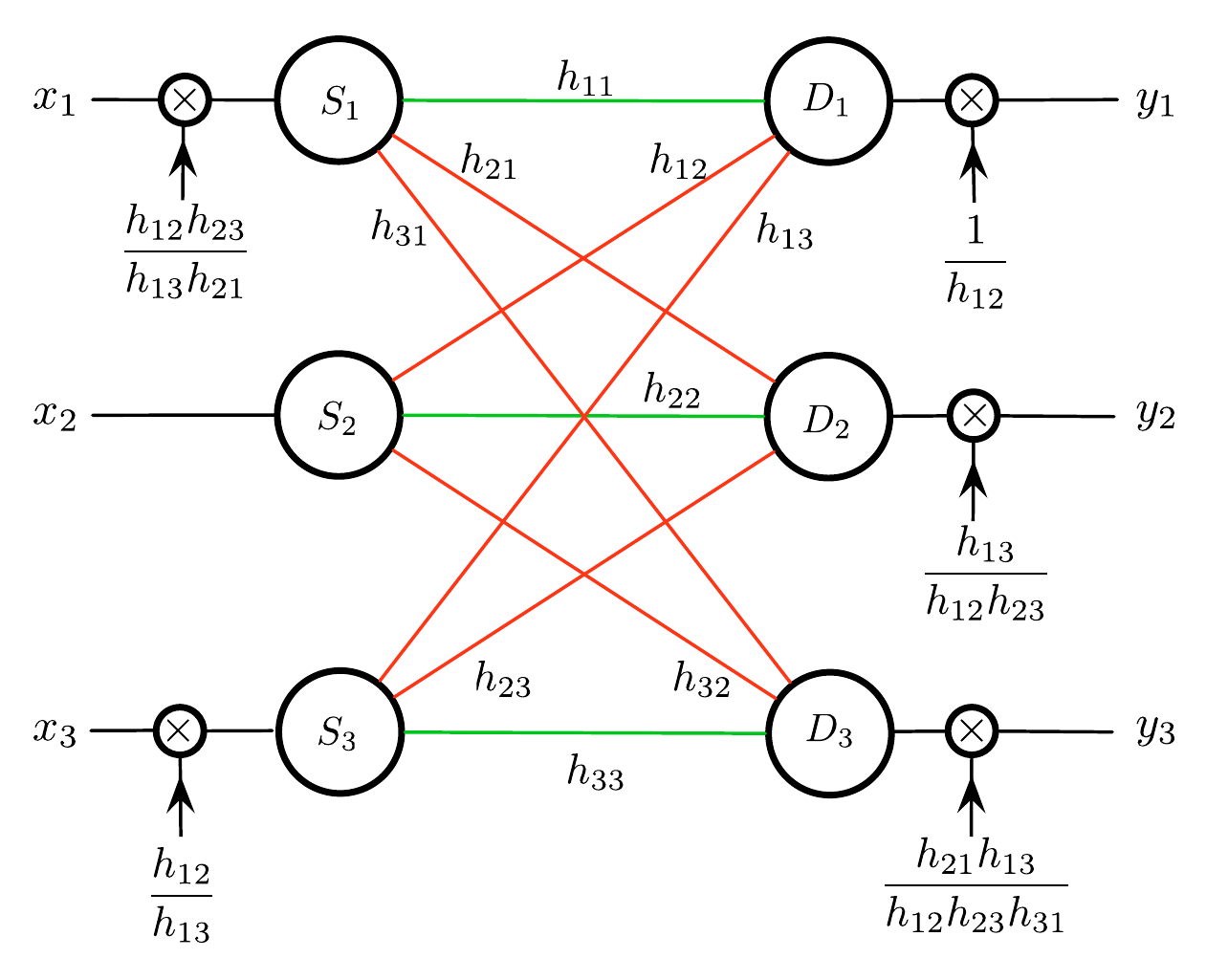}
\end{center}
\caption{Normalization in 3-user Interference Channel}\label{fig:int_norm1}
\end{figure}
\\
The normalized 3-user interference channel can be represented as 
\begin{eqnarray*}
{ y}_1&=&{\bar{h}_{11}{x}_{1}}+{{x}_{2}}+{{x}_{3}}\\
{ y}_2&=&{{x}_{1}}+{\bar{h}_{22}{x}_{2}}+{{x}_{3}} \\
{ y}_3&=&{{x}_{1}}+{\bar{h}{x}_{2}}+{\bar{h}_{33}{x}_{3}}
\end{eqnarray*}
\begin{figure}[h]
\begin{center}
\includegraphics[scale=0.8]{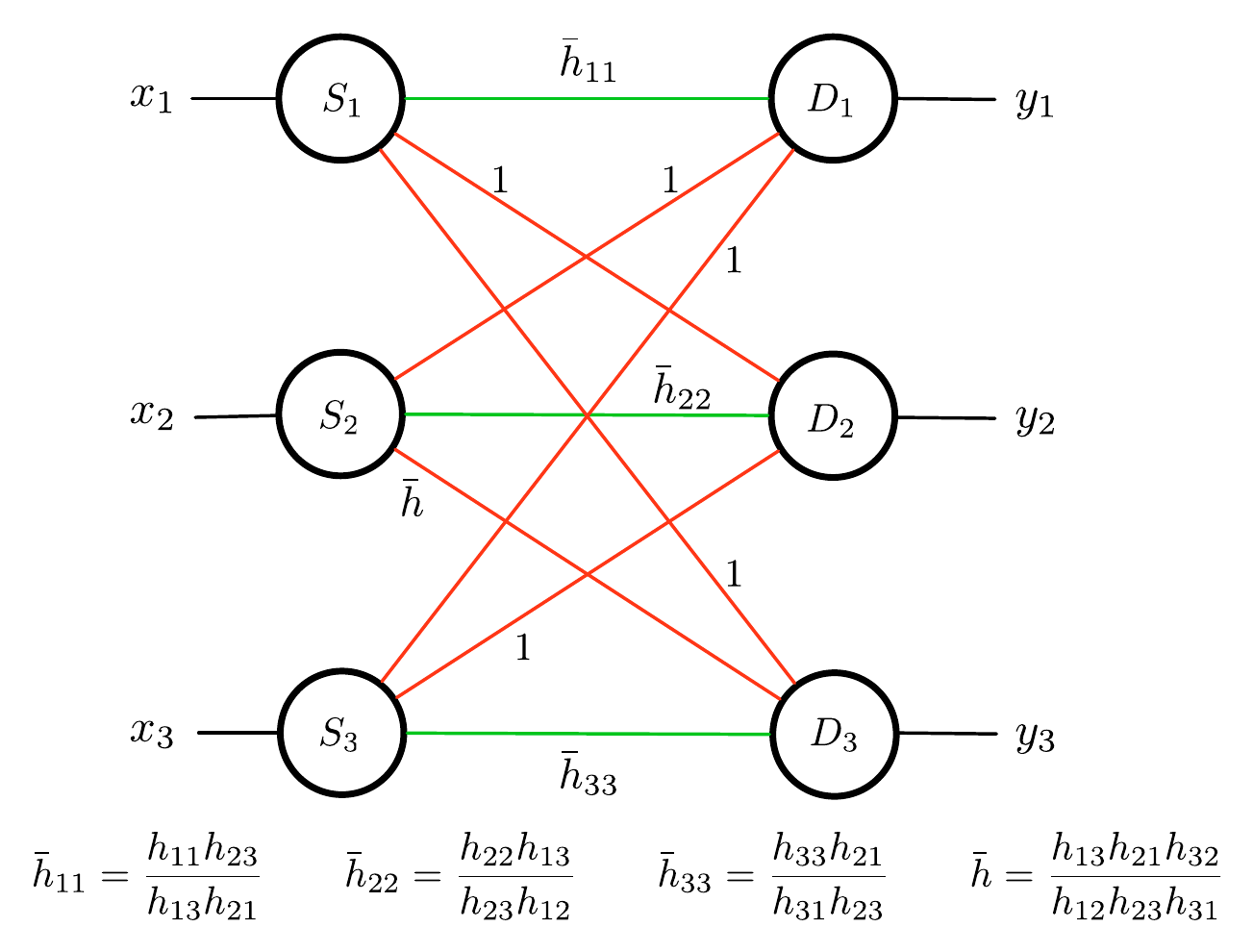}
\end{center}
\caption{Normalized 3-user Interference Channel}\label{fig:int_norm2}
\end{figure}
\\
wherein we have reduced channel parameters to four channel coefficients $\bar{h}_{11}, \bar{h}_{22}, \bar{h}_{33}, \bar{h}$, defined as 
\begin{equation}
\bar{h}_{11} = \frac{h_{11}h_{23}}{h_{13}h_{21}}, \hspace{8mm}
\bar{h}_{22} = \frac{h_{22}h_{13}}{h_{23}h_{12}}, \hspace{8mm}
\bar{h}_{33} = \frac{h_{33}h_{21}}{h_{31}h_{23}}, \hspace{8mm}
\bar{h} = \frac{h_{13}h_{21}h_{32}}{h_{12}h_{23}h_{31}}
\end{equation}

All symbols are still over $\mathbb{F}_{p^n}$ and we have the normalized interference channel illustrated in Fig. \ref{fig:int_norm2}.

\subsection{Linear-scheme Capacity of the Finite Field Interference Channel}
In the study of the X channel, we noted how scalar channels over $\mathbb{F}_{p^n}$ can be viewed as $n\times n$ MIMO channels over $\mathbb{F}_p$. Let us see if the same insight can be carried over to the 3 user interference channel. For the 3 user MIMO interference channel, an eigenvector based interference alignment solution that achieves the optimal DoF value, is introduced by Cadambe and Jafar in \cite{Cadambe_Jafar_int}. Let us see if the same solution applies in the finite field setting as well. As we will show, while the eigenvector solution holds in the wireless case for almost all channel realizations, because of channel structure in the finite field case, the solution holds only in certain `degenerate' settings, that are increasingly rare as the base field size increases, so that in the limit of infinite $p$, the eigenvector solution does not hold, almost surely.

\begin{theorem}\label{thm_3user_full32}
Fully connected 3-user interference channel over $\mathbb{F}_{p^n}$  has  capacity $C= C_{\mbox{\small linear}}=\frac{3}{2}$ for all $p>3$, if
\begin{eqnarray} 
\bar{h}_{kk}\notin \mathbb{F}_p, k \in\{1,2,3\} \\
\bar{h}\in \mathbb{F}_p
\end{eqnarray}
\end{theorem}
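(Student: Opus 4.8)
The plan is to prove the two bounds $C\le\frac32$ and $C_{\mbox{\tiny linear}}\ge\frac32$ and then chain them as $\frac32\le C_{\mbox{\tiny linear}}\le C\le\frac32$. The outer bound $C\le\frac32$ I would obtain exactly as in Theorem \ref{theorem:x_n}: import the sum-DoF outer bound of $\frac32$ for the $3$-user interference channel from the wireless setting (a combination of $2$-user Z-channel-type bounds) and make the minor entropy adjustments needed to pass from DoF over $\mathbb{R}^{n\times n}$ to exact rates over $\mathbb{F}_{p^n}$. The substance of the theorem is therefore the \emph{linear} achievability of $\frac32$, which I would establish by translating the Cadambe--Jafar eigenvector alignment into the finite-field MIMO picture.

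First I would pass to the $n\times n$ matrix representation over $\mathbb{F}_p$: write $H_{kk}$ for the matrix of $\bar h_{kk}$ and $H$ for the matrix of $\bar h$, so that after normalization all cross-links are $I$ except the link from source $2$ to destination $3$, which is $H$. I would use $2$ symbol extensions (to keep the construction uniform in $n$), giving ambient space $\mathbb{F}_p^{2n}$ and block-diagonal channel matrices $\mathcal H_{kk}=\mathrm{diag}(H_{kk},H_{kk})$ and $\mathcal H=\bar h\,I_{2n}$. Let each user transmit $n$ streams along a \emph{common} precoding subspace, i.e. set the column span of all three precoders equal to a single $n$-dimensional subspace $\mathcal V\subseteq\mathbb{F}_p^{2n}$. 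The decisive simplification is that $\bar h\in\mathbb{F}_p$ forces $\mathcal H=\bar h\,I_{2n}$, hence $\mathcal H\mathcal V=\mathcal V$; this makes the three alignment conditions collapse automatically, since at each receiver the two interferers then land in $\mathcal V$, no matter how $\mathcal V$ is chosen. What remains is decodability: at receiver $k$ the desired signal occupies $\mathcal H_{kk}\mathcal V$ and the interference occupies $\mathcal V$, so I need
\[
\mathcal H_{kk}\mathcal V\oplus\mathcal V=\mathbb{F}_p^{2n},\qquad k\in\{1,2,3\},
\]
equivalently that the $2n$ vectors $\{v_i\}\cup\{\mathcal H_{kk}v_i\}$ form a basis for each $k$, where $v_1,\dots,v_n$ span $\mathcal V$. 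Rate accounting then gives $n$ streams per user over $2$ channel uses, i.e. $\frac12$ $\mathbb{F}_{p^n}$-symbols per user per channel use, and $\frac32$ in total.

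The heart of the proof is the existence of a single $\mathcal V$ meeting all three complementarity conditions simultaneously, which I would build greedily, column by column, in the spirit of the recursion in Fig. \ref{fig:x_recursion}. Suppose $v_1,\dots,v_{j-1}$ have been chosen so that for each $k$ the $2(j-1)$ vectors $\{v_i\}_{i<j}\cup\{\mathcal H_{kk}v_i\}_{i<j}$ are independent, spanning $U_{j-1}^{(k)}$. A new vector $v_j$ preserves the invariant for receiver $k$ exactly when $v_j$ and $\mathcal H_{kk}v_j$ are independent modulo $U_{j-1}^{(k)}$, which fails only if $v_j\in U_{j-1}^{(k)}$ or $(\mathcal H_{kk}-cI)v_j\in U_{j-1}^{(k)}$ for some $c\in\mathbb{F}_p$. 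This is where the hypothesis $\bar h_{kk}\notin\mathbb{F}_p$ enters: the eigenvalues of $\mathcal H_{kk}$ are the Galois conjugates of $\bar h_{kk}$, none of which lies in $\mathbb{F}_p$, so $\mathcal H_{kk}-cI$ is invertible for every $c\in\mathbb{F}_p$ and each forbidden set $(\mathcal H_{kk}-cI)^{-1}U_{j-1}^{(k)}$ is a subspace of size $p^{2(j-1)}$. The bad set for receiver $k$ is thus contained in a union of $p+1$ such subspaces, and over the three receivers the number of forbidden $v_j$ is at most $3(p+1)p^{2(j-1)}$, largest at the final step $j=n$ where $2(j-1)=2n-2$.

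A valid $v_n$, hence a valid $\mathcal V$, therefore exists as long as $p^{2n}>3(p+1)p^{2n-2}$, i.e. $p^2>3(p+1)$, which holds for all $p>3$ and is precisely the condition in the statement. Combined with the outer bound this yields $C=C_{\mbox{\tiny linear}}=\frac32$. I expect the main obstacle to be exactly this existence step: one must verify that $\bar h_{kk}\notin\mathbb{F}_p$ genuinely rules out every $\mathbb{F}_p$-eigenvalue (so the bad sets are proper codimension-$2$ subspaces), and then check that the union bound over the three receivers and over $c\in\mathbb{F}_p$ is tight enough to leave room for a good choice down to $p=4$. The conceptual crux worth emphasizing is the clean division of labor: $\bar h\in\mathbb{F}_p$ is what trivializes alignment, while $\bar h_{kk}\notin\mathbb{F}_p$ is what guarantees resolvability of the desired signal from the aligned interference.
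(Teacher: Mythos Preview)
Your proposal is correct and follows essentially the same approach as the paper: both exploit $\bar h\in\mathbb{F}_p$ to make the common-precoder alignment automatic, then build $\mathcal V$ greedily by a union-bound counting that succeeds exactly when $p^2>3(p+1)$, i.e., $p>3$, with $\bar h_{kk}\notin\mathbb{F}_p$ ensuring each $\mathcal H_{kk}-cI$ is invertible. The only cosmetic differences are that the paper splits into even/odd $n$ (using no symbol extension when $n$ is even) and phrases the recursion in $\mathbb{F}_{p^n}$-scalar language rather than your $\mathbb{F}_p^{2n}$-matrix language, arriving at the identical bound $|A_1\cup A_2\cup A_3|\le 3(p+1)p^{2n-2}$.
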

\begin{proof}
The outer bound of $\frac{3}{2}$ extends from \cite{Cadambe_Jafar_int} with only  minor adjustments to account for operating over finite fields. Achievable scheme is presented here. Let us denote the $n\times n$ linear transformation corresponding to product by $\bar{h}$ as $H$. i.e.,  $\bar{h}\in\mathbb{F}_{p^n}$ and $H \in \mathbb{F}_{p}^{n\times n}$. The achievable scheme involves beamforming vectors $\bar{V}_1,\bar{V}_2,\bar{V}_3 \in \mathbb{F}_{p}^{n\times 1}$ at the 3 sources such that interference is aligned at all destinations. Note that we need eigenvectors of $H$ (and also the eigenvalues) to be in $\mathbb{F}_p$. This implies that the eigen vector solution of \cite{Cadambe_Jafar_int} can be used only when $\bar{h}\in \mathbb{F}_p$ to achieve linear-scheme capacity of  $\frac{3}{2}$. Note that this is analogous to the asymmetric complex signaling setting studied in \cite{Cadambe_Jafar_Wang} where because the scalar complex channels become rotation matrices over reals, they do not have eigenvectors over reals unless the rotation is identity. Since $\bar{h}\in \mathbb{F}_p$, $H$ is a scaled identity matrix,  and every vector is an eigenvector of this matrix. Let us choose the same beamforming matrices at the 3 sources, $\bar{V} = \bar{V}_1 = \bar{V}_2 = \bar{V}_3$. This ensures that interference is aligned at all destinations for the normalized 3-user interference channel. At destination 3, interference from source 2 ($\bar{h}\bar{V}$) spans the same space as interference from source 1 ($\bar{V}$), since $\bar{h} \in \mathbb{F}_p$. Having aligned interference at the destinations, we now discuss construction of the beamforming matrix for odd and even $n$, such that desired and interference symbols are linearly independent at all destinations. \\

\noindent
\emph{Achievability for even $n=2l$:} \\
When n is even ($n=2l$), we choose $\bar{V} \in \mathbb{F}_{p^n}^{1\times l}$ and send $l$ input symbols per channel use ($x_1, \ldots, x_l \in \mathbb{F}_p$) from each source. 
Since $\bar{V} = \bar{V}_1 = \bar{V}_2 = \bar{V}_3$, it can be noted that interference will be aligned at all destinations in $l$ dimensional space. Let us denote the $l$ columns of $\bar{V}$ as $V_1 , V_2 , \ldots , V_l$. Then, signal space at the three destinations can be represented as
\begin{eqnarray}
S_k = [ \bar{h}_{kk}\bar{V} \hspace{5mm} \bar{V}] = [\bar{h}_{kk}V_1 , \bar{h}_{kk}V_2 , \ldots , \bar{h}_{kk}V_l , \hspace{2mm} V_1 , V_2 , \ldots , V_l] , \hspace{3mm} k \in \{1,2,3\}
\end{eqnarray}
We now describe how to choose columns of $\bar{V}$ such that desired and interference symbols are linearly independent at all destinations.
Let us choose $V_1$ to be 1. This implies that the 2 columns $[\bar{h}_{kk}V_1 \hspace{2mm}  V_1] = [\bar{h}_{kk} \hspace{2mm}  1]$ in $S_k$ are linearly independent over $\mathbb{F}_p$ since $\bar{h}_{kk}\notin \mathbb{F}_p, k \in\{1,2,3\}$.  
Now let us construct $V_2$ such that 4 columns of $S_k$ are linearly independent over $\mathbb{F}_p$ for $k \in\{1,2,3\}$.   
\begin{eqnarray}
\text{From $S_k$, } \hspace{3mm}  V_2\notin A_k&\triangleq&\left\{  \frac{(\alpha_1 \bar{h}_{kk} + \alpha_2)V_1}{\beta_1 \bar{h}_{kk}+\beta_2}: \alpha_1, \alpha_2, \beta_1, \beta_2\in\mathbb{F}_p, (\beta_1,\beta_2)\neq(0,0)\right\} \label{eqn_n32_1}
\end{eqnarray}
Now we note that
\begin{eqnarray}
|A_k| &\leq& \frac{(p^2-1)p^{2}}{p-1} = p^3+p^2, k \in\{1,2,3\} \\
|A_1\cup A_2 \cup A_3| &\leq& 3(p^3 + p^2)
\end{eqnarray}
There are $p^{n}$ choices for $V_2$, and since $p^n > 3(p^3 + p^2)$ for all $p > 3$, there exist choices for $V_2$ such that all 3 conditions of \eqref{eqn_n32_1} hold. Choosing $V_2$ from those, we note that 4 columns of $S_1,S_2,S_3$ are linearly independent over $\mathbb{F}_p$. We proceed recursively in a similar manner, for choosing columns $V_3, V_4, \ldots, V_{l-1}$ such that $6, 8, \ldots, 2(l-1)$ columns are linearly independent over $\mathbb{F}_p$ respectively, in all $S_k, k \in\{1,2,3\}$. \\
Let us now discuss the last iteration wherein we choose column $V_l$ such that all $n=2l$ columns are linearly independent over $\mathbb{F}_p$ in all $S_k, k \in\{1,2,3\}$, given that $2l-2$ columns are already linearly independent with appropriate choices of $V_1, V_2, \ldots, V_{l-1}$.
\begin{align} \nonumber
\text{From $S_k$, }  \hspace{3mm} V_l \notin A_k \triangleq \{  \frac{(\alpha_1 \bar{h}_{kk} + \alpha_2)V_1 + (\alpha_3 \bar{h}_{kk} + \alpha_4)V_2 + \cdots + (\alpha_{2l-3} \bar{h}_{kk} + \alpha_{2l-2})V_{l-1}}{\beta_1 \bar{h}_{kk}+\beta_2}: \\
\alpha_i, \beta_1, \beta_2\in\mathbb{F}_p, i\in \{1,\ldots,2l-2\}, (\beta_1,\beta_2)\neq(0,0)\} \label{eqn_n32_2}
\end{align}
Now we note that
\begin{eqnarray}
|A_k| &\leq& \frac{(p^2-1)p^{2l-2}}{p-1} = p^{2l-1}+p^{2l-2}, k \in\{1,2,3\} \\
|A_1\cup A_2 \cup A_3| &\leq& 3(p^{2l-1}+p^{2l-2})
\end{eqnarray}
There are $p^{n} = p^{2l}$ choices for $V_l$, and since $p^{2l} > 3(p^{2l-1}+p^{2l-2})$ for all $p > 3$, there exist choices for $V_l$ such that all 3 conditions of \eqref{eqn_n32_2} hold. Choosing $V_l$ from those, we note that all $n$ columns of $S_1,S_2,S_3$ are linearly independent over $\mathbb{F}_p$. Hence, desired and interference symbols are linearly independent at all destinations. Thus, sum rate of $\frac{3}{2}$ is achieved over $\mathbb{F}_{p^n}$ for all even $n$ with $p>3$, if $\bar{h}_{kk}\notin \mathbb{F}_p, k \in\{1,2,3\}$ and $\bar{h}\in \mathbb{F}_p$. \\
 
\noindent
\emph{Achievability for odd $n=2l+1$:} \\
Consider a 2 symbol extension of the channel with $2n$ dimensions of order $p$ at each destination. We choose $\bar{V} \in \mathbb{F}_{p^n}^{2\times n}$ and send $n$ input symbols over 2 channel uses ($x_1, \ldots, x_n \in \mathbb{F}_p$) from each source. 
Interference will be aligned at all destinations in an $n$ dimensional space. The signal space at the three destinations can be represented as
\begin{eqnarray}
S_k = [ \bar{h}_{kk}\bar{V} \hspace{5mm} \bar{V}] = [\bar{h}_{kk}V_1 , \bar{h}_{kk}V_2 , \ldots , \bar{h}_{kk}V_n , \hspace{2mm} V_1 , V_2 , \ldots , V_n] , \hspace{3mm} k \in \{1,2,3\}
\end{eqnarray}
Let us choose $V_1$ to be vector of ones. This implies that the 2 columns $[\bar{h}_{kk}V_1 \hspace{2mm}  V_1]$ in $S_k$ are linearly independent over $\mathbb{F}_p$ since $\bar{h}_{kk}\notin \mathbb{F}_p, k \in\{1,2,3\}$.  
Now let us construct $V_2$ such that 4 columns of $S_k$ are linearly independent over $\mathbb{F}_p$ for $k \in\{1,2,3\}$.   
\begin{eqnarray}
\text{From $S_k$, } \hspace{3mm}  V_2\notin A_k&\triangleq&\left\{  \frac{(\alpha_1 \bar{h}_{kk} + \alpha_2)V_1}{\beta_1 \bar{h}_{kk}+\beta_2}: \alpha_1, \alpha_2, \beta_1, \beta_2\in\mathbb{F}_p, (\beta_1,\beta_2)\neq(0,0)\right\} \label{eqn_n32_3}
\end{eqnarray}
Now we note that
\begin{eqnarray}
|A_k| &\leq& \frac{(p^2-1)p^{2}}{p-1} = p^3+p^2, k \in\{1,2,3\} \\
|A_1\cup A_2 \cup A_3| &\leq& 3(p^3 + p^2)
\end{eqnarray}
There are $p^{2n}$ choices for $V_2$, and since $p^{2n} > 3(p^3 + p^2)$ for all $p$, there exist choices for $V_2$ such that all 3 conditions of \eqref{eqn_n32_3} hold. Choosing $V_2$ from those, we note that 4 columns of $S_1,S_2,S_3$ are linearly independent over $\mathbb{F}_p$. We proceed recursively in a similar manner, for choosing columns $V_3, V_4, \ldots, V_{n-1}$ such that $6, 8, \ldots, 2(n-1)$ columns are linearly independent over $\mathbb{F}_p$ respectively, in all $S_k, k \in\{1,2,3\}$. \\
Let us now discuss the last iteration wherein we choose column $V_n$ such that all $n$ columns are linearly independent over $\mathbb{F}_p$ in all $S_k, k \in\{1,2,3\}$, given that $2n-2$ columns are already linearly independent with appropriate choices of $V_1, V_2, \ldots, V_{n-1}$.
\begin{align} \nonumber
\text{From $S_k$, }  \hspace{3mm} V_n \notin A_k \triangleq \{  \frac{(\alpha_1 \bar{h}_{kk} + \alpha_2)V_1 + (\alpha_3 \bar{h}_{kk} + \alpha_4)V_2 + \cdots + (\alpha_{2n-3} \bar{h}_{kk} + \alpha_{2n-2})V_{n-1}}{\beta_1 \bar{h}_{kk}+\beta_2}: \\
\alpha_i, \beta_1, \beta_2\in\mathbb{F}_p, i\in \{1,\ldots,2n-2\}, (\beta_1,\beta_2)\neq(0,0)\} \label{eqn_n32_4}
\end{align}
Now we note that
\begin{eqnarray}
|A_k| &\leq& \frac{(p^2-1)p^{2n-2}}{p-1} = p^{2n-1}+p^{2n-2}, k \in\{1,2,3\} \\
|A_1\cup A_2 \cup A_3| &\leq& 3(p^{2n-1}+p^{2n-2})
\end{eqnarray}
There are $p^{2n}$ choices for $V_l$, and since $p^{2n} > 3(p^{2n-1}+p^{2n-2})$ for all $p > 3$, there exist choices for $V_n$ such that all 3 conditions of \eqref{eqn_n32_4} hold. Choosing $V_n$ from those, we note that all $n$ columns of $S_1,S_2,S_3$ are linearly independent over $\mathbb{F}_p$. Hence, desired and interference symbols are linearly independent at all destinations. Thus, sum rate of $\frac{3}{2}$ is achieved over $\mathbb{F}_{p^n}$ for all odd $n$ with $p>3$, if $\bar{h}_{kk}\notin \mathbb{F}_p, k \in\{1,2,3\}$ and $\bar{h}\in \mathbb{F}_p$.

The fraction of channel realizations for which the conditions $\bar{h}_{kk}\notin \mathbb{F}_p, k \in\{1,2,3\}$ and $\bar{h}\in \mathbb{F}_p$ hold, is given by
\begin{eqnarray}
\frac{p}{p^n} \times (\frac{p^n-p}{p^n})^3.
\end{eqnarray}
which goes to $0$ as $p\rightarrow\infty$.
$\hfill\QED$
\end{proof}
\\
The implications of the structure of the channel become evident now. While we have $n\times n$ MIMO channels, they behave like channels with diversity $n$, e.g, like diagonal channels, where also the eigenvector solution does not work except over a measure 0 set. To strengthen this insight, we explore the 3-user interference channel further.

\subsubsection{Insight: Channel Diversity}
As noted for X networks earlier, the finite field $\mathbb{F}_{p^n}$ is analogous to a $n \times n$ MIMO network with special channel structure. The main insight that arises out of exploring the 3-user interference channel is that \emph{$n$ is analogous to channel diversity}. This is similar to saying that a scalar channel over $\mathbb{F}_{p^n}$ is analogous to $n$ parallel channels over $\mathbb{F}_p$. In the remainder of this work, we will focus only on linear capacity $C_{linear}$ and reinforce the parallels between $n$ and channel diversity.

\subsubsection{Main Result}

It is known from \cite{Bresler_Tse_Diversity} that the 3-user interference channel over $\mathbb{F}_{p^n}$ has channel diversity $n$, and so has linear capacity of $\frac{3D}{2D+1}$ when using linear beam forming schemes with alignment depth $D=2n-\lfloor n/2 \rfloor -1$.  The alignment depth, i.e., the length of the longest chain of one-to-one alignments, which is a function of channel diversity, is the primary limiting factor impacting both achievability and converse arguments. The achievable scheme is essentially the asymptotic interference alignment scheme of \cite{Cadambe_Jafar_int}. Outer bounds for linear schemes come from the argument that that the alignment depth cannot be more than $D$, and suppose it were, then desired signal would lie in span of the interference signal at the receivers. The result translates into the finite field setting as follows. We will focus mainly on the case where $n$ is odd (the cases where $n$ is even follow similarly and will be touched upon briefly).

\begin{theorem}\label{thm_3user_odd}
The 3-user interference channel over $\mathbb{F}_{p^n}$ with odd  $n=2l+1$ has linear  capacity  $C_{linear}=\frac{3l+1}{2l+1}$ if
\begin{eqnarray}
\bar{h}_{11}\notin A&\triangleq&\left\{\frac{\alpha_0+\alpha_1\bar{h}+\ldots+\alpha_{l-1}\bar{h}^{l-1}}{\beta_0+\beta_1\bar{h}+\ldots+\beta_l\bar{h}^l}: \alpha_k,\beta_m\in\mathbb{F}_p, (\beta_0,\ldots,\beta_l)\neq(0,\ldots,0)\right\} \\
\bar{h}_{22}\notin B&\triangleq&\left\{\frac{\alpha_0+\alpha_1\bar{h}+\ldots+\alpha_l\bar{h}^l}{\beta_0+\beta_1\bar{h}+\ldots+\beta_{l-1}\bar{h}^{l-1}}: \alpha_k,\beta_m\in\mathbb{F}_p, (\beta_0,\ldots,\beta_{l-1})\neq(0,\ldots,0)\right\} \\
\bar{h}_{33}\notin C&\triangleq&\left\{\frac{\alpha_0+\alpha_1\bar{h}+\ldots+\alpha_l\bar{h}^l}{\beta_0+\beta_1\bar{h}+\ldots+\beta_{l-1}\bar{h}^{l-1}}: \alpha_k,\beta_m\in\mathbb{F}_p, (\beta_0,\ldots,\beta_{l-1})\neq(0,\ldots,0)\right\} \\
&&\beta_l\bar{h}^l+\ldots+\beta_1\bar{h}+\beta_0 \neq 0 : \beta_0,\ldots,\beta_l\in\mathbb{F}_p, (\beta_0,\ldots,\beta_l)\neq(0,\ldots,0)
\end{eqnarray}
\end{theorem}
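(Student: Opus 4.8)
The plan is to establish $C_{linear}=\frac{3l+1}{2l+1}$ by matching an achievability argument with a converse. For the converse I would invoke the finite-diversity outer bound of Bresler and Tse \cite{Bresler_Tse_Diversity}: since the $n\times n$ matrices representing multiplication by field elements all lie in the commutative algebra $\mathbb{F}_p[\bar h]$, whose $\mathbb{F}_p$-dimension is at most $n$, the channel has diversity at most $n$, so the alignment depth cannot exceed $D=2n-\lfloor n/2\rfloor-1=3l+1$. The standard impossibility argument---that an alignment chain longer than $D$ would force a desired signal into the span of interference at some receiver---then caps the linear sum-rate at $\frac{3D}{2D+1}=\frac{3l+1}{2l+1}$. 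The care needed here is to port the chain argument from the generic wireless setting to $\mathbb{F}_p$, exploiting commutativity and the fact that powers of $\bar h$ become dependent exactly once the degree of $\bar h$ is exhausted.

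For achievability, since $n=2l+1$, a sum-rate of $\frac{3l+1}{2l+1}$ in $\mathbb{F}_{p^n}$ symbols per use amounts to packing $3l+1$ independent $\mathbb{F}_p$-streams into the $3n=6l+3$ received dimensions, with no symbol extension. I would allocate $d_1=l+1$ streams to user $1$ and $d_2=d_3=l$ streams to users $2$ and $3$, precoding by powers of $\bar h$ acting on a generic generator (e.g.\ the field element $1$): set $V_2=V_3=\mbox{span}\{1,\bar h,\ldots,\bar h^{l-1}\}$ and $V_1=\mbox{span}\{1,\bar h,\ldots,\bar h^{l}\}$. Using the normalized channel (all cross-links equal to $1$ except the gain $\bar h$ from source $2$ to destination $3$) together with commutativity, alignment is immediate: at destination $1$ the interference from sources $2$ and $3$ coincides as $V_2=V_3$ ($l$ dimensions); at destination $2$ the interference from sources $1$ and $3$ is $V_1\supseteq V_3$ ($l+1$ dimensions); and at destination $3$ the interference is $V_1\supseteq \bar h V_2$ ($l+1$ dimensions). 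In every case desired plus interference occupies exactly $n$ dimensions.

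It then remains to prove resolvability, i.e.\ that at each destination the desired space (the direct gain $\bar h_{kk}$ times the transmit space) is linearly independent of the aligned interference space over $\mathbb{F}_p$. Writing out a hypothetical dependence at destination $1$ gives $\bar h_{11}\sum_{i=0}^{l}\beta_i\bar h^i=\sum_{j=0}^{l-1}\alpha_j\bar h^j$, which (when the denominator polynomial is nonzero) forces $\bar h_{11}$ into the set $A$; hence $\bar h_{11}\notin A$ is exactly the condition guaranteeing resolvability at destination $1$, and the analogous computations at destinations $2$ and $3$ yield the sets $B$ and $C$. The auxiliary requirement that $\beta_l\bar h^l+\cdots+\beta_0\neq 0$ for all nonzero coefficient vectors is precisely what guarantees that $1,\bar h,\ldots,\bar h^l$ are independent, so that $V_1$ genuinely has dimension $l+1$ and the dimension counting above is tight.

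I expect the converse to be the main obstacle. The achievability reduces, as sketched, to a clean and essentially mechanical translation of linear dependence into rational expressions in $\bar h$, with the stated hypotheses being exactly the negations of the resulting degenerate cases. By contrast, the outer bound requires adapting the alignment-depth impossibility result to the structured finite-field channel---verifying that the relevant diversity is indeed $n$, that the longest one-to-one alignment chain is bounded by $3l+1$, and that commutativity does not open additional alignment opportunities that would loosen the bound.
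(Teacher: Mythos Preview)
Your proposal is correct and follows essentially the same route as the paper: the same CJ-style beamforming with powers of $\bar h$ (user~1 gets $l+1$ streams, users~2,3 get $l$ each), the same alignment pattern on the normalized channel, and the same reduction of resolvability to $\bar h_{kk}\notin A,B,C$; the converse is likewise the Bresler--Tse alignment-depth bound ported to $\mathbb{F}_p$. One small sharpening: in the paper the outer bound is not obtained from a generic ``diversity $\le n$'' invocation but by showing directly that, \emph{under the theorem's hypotheses}, a chain of depth $D+1$ forces the extra vector $\bar h^{l}V$ into the span of $[\bar h_{11}V_1\;\;V_2]$ while the condition on $\bar h$ keeps it out of $\mathrm{span}(V_2)$, so it must overlap the desired signal---i.e., the conditions $\bar h_{kk}\notin A,B,C$ and the nonvanishing of $\sum\beta_i\bar h^i$ are used in the converse as well as the achievability, which is worth making explicit since channels violating them can have strictly larger linear capacity (cf.\ the $\bar h\in\mathbb{F}_p$ case).
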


The outer bound on linear capacity is  presented in Section \ref{lin_outer_bound_sec}.
The achievable scheme is presented next.

\subsection{Achievability}
Over $\mathbb{F}_{p^{2l+1}}$, we will show that $3l+1$ symbols can be transmitted ($l+1$ symbols from source 1 and $l$ symbols each from sources 2 and 3), and all desired symbols are resolvable at the destinations.
Symbol extensions will not be necessary here. Note that $\bar{h}$ is equivalent to the $T$ matrix used in the CJ scheme \cite{Cadambe_Jafar_int}, since beamforming directions are identified with varying powers of $\bar{h}$. 

We will first discuss the achievable scheme over $\mathbb{F}_{p^3}$ and then show how it extends to all odd $n$, $\mathbb{F}_{p^{2l+1}}$. 

\subsubsection{Achievability over $\mathbb{F}_{p^3}$}
\begin{proof} 
Let us consider the normalized 3-user interference channel  over $\mathbb{F}_{p^3}$ so that $\bar{h}_{11}, \bar{h}_{22}, \bar{h}_{33}, \bar{h} \in \mathbb{F}_{p^3}$. We will show that linear schemes can achieve the rate of $\frac{4}{3}$. Consider the finite field network wherein source 1 sends 2 symbols, $x_1^1,x_1^2 \in \mathbb{F}_{p}$, while sources 2 and 3 send only one symbol each, $x_2,x_3 \in \mathbb{F}_{p}$. 

Because of the channel normalization, we use the same beamforming direction $v \in $ $\mathbb{F}_{p^3}$ for symbols sent by sources 2 and 3, so that interference is aligned at destination 1 ($v_2=v_3=v$). At source 1, we use 2 beam forming directions $\bar{h}v$ and $v$ so that, one symbol aligns at destination 2, and another aligns at destination 3 ($v_1^1=v, v_1^2=\bar{h}v$). With these choices for beamforming directions, the received symbols  can be represented as
\begin{eqnarray*}
{ y}_1&=&{\bar{h}_{11}({vx_1^1+\bar{h}vx_1^2}})+v{{x}_{2}}+v{{x}_{3}}\\
{ y}_2&=&{vx_1^1+\bar{h}vx_1^2}+{\bar{h}_{22}vx_2}+v{x_3} \\
{ y}_3&=&{vx_1^1+\bar{h}vx_1^2}+{\bar{h}v{x}_{2}}+{\bar{h}_{33}v{x}_{3}}
\end{eqnarray*}
Note that interference is aligned along $v$ at destinations 1 and 2, while interference at destination 3 is aligned along $\bar{h}v$.  There is additional unaligned interference at destinations 2 and 3, but they both have only a single input symbol to be decoded. We have 3 dimensions at each destination over $\mathbb{F}_p$, within which desired and interference symbols need to be resolved. In order to resolve desired symbols at the destinations, the signal spaces containing desired and interference symbols need to have linearly independent elements.
\begin{eqnarray}
S_1 = [\bar{h}_{11}\bar{h}v \hspace{5mm} \bar{h}_{11}v \hspace{5mm} v ] =  \bar{h}_{11}[\bar{h} \hspace{5mm} 1 \hspace{5mm} \frac{1}{\bar{h}_{11}} ] v  \\
S_2 = [\bar{h}_{22}v \hspace{5mm} \bar{h}v \hspace{5mm} v ] =  [\bar{h}_{22} \hspace{5mm} \bar{h} \hspace{5mm} 1] v  \\
S_3 = [\bar{h}_{33}v \hspace{5mm} \bar{h}v \hspace{5mm} v ] =  [\bar{h}_{33} \hspace{5mm} \bar{h} \hspace{5mm} 1] v 
\end{eqnarray}
\begin{figure}[t]
\begin{center}
\includegraphics[scale=0.85]{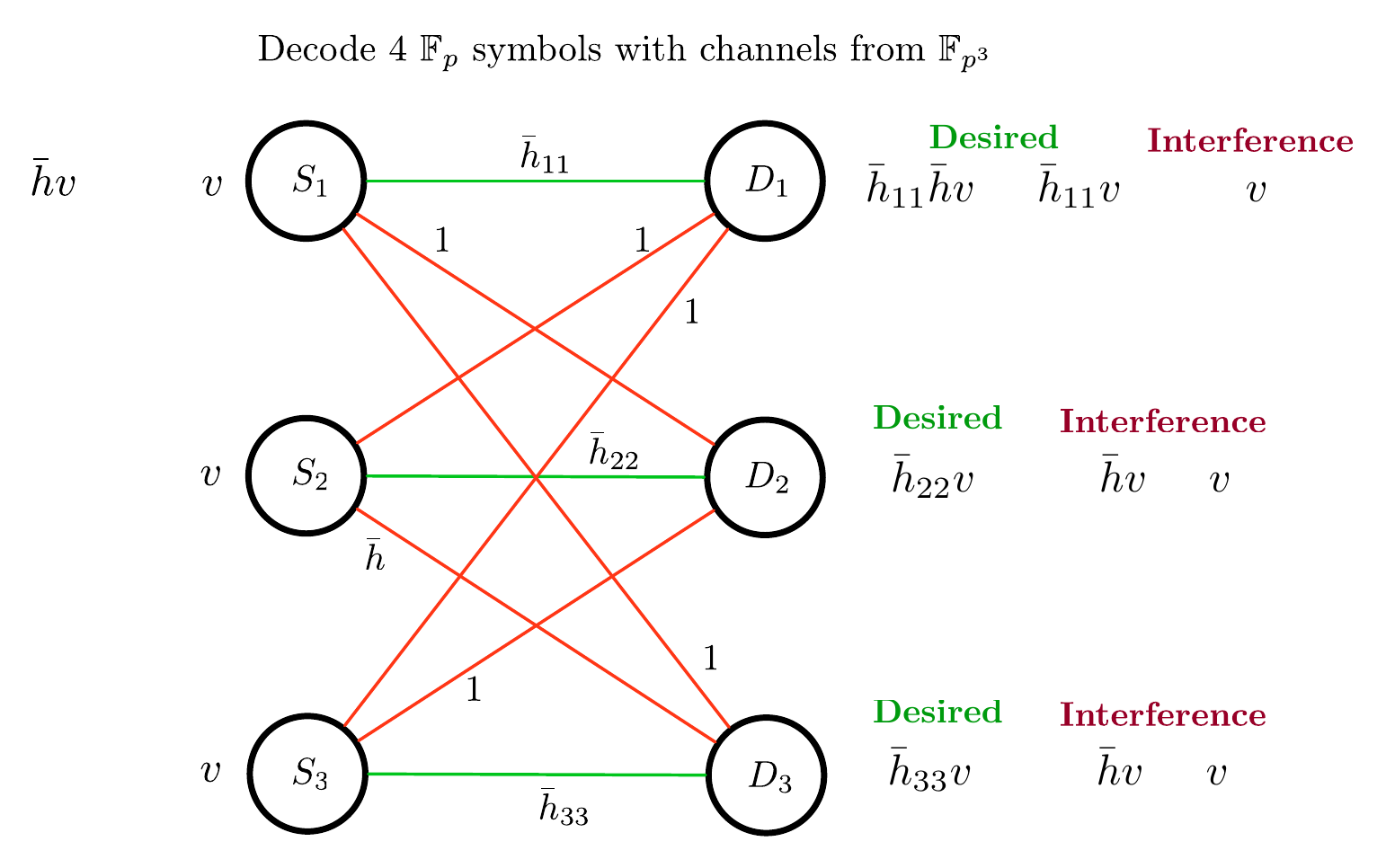}
\end{center}
\caption{3-user Interference channel over $\mathbb{F}_{p^3}$}\label{fig:int_chan_p3}
\end{figure}
When $\bar{h} \notin \mathbb{F}_p$,  $\bar{h}$ and 1 are linearly independent over $\mathbb{F}_p$. Hence, elements of $S_1$ can be linearly dependent only if $\frac{1}{\bar{h}_{11}}$ is a linear combination of $\bar{h}$ and 1. Similarly elements of $S_2$ and $S_3$ can be linearly dependent only if  
$\bar{h}_{22}$ or $\bar{h}_{33}$ is a linear combination of $\bar{h}$ and 1, respectively. Thus, the scheme works when the following conditions are satisfied.
\begin{eqnarray}
\bar{h}_{11}\notin A&\triangleq &\left\{\frac{1}{\beta_0+\beta_1\bar{h}}: \beta_0, \beta_1\in\mathbb{F}_p, (\beta_0,\beta_1)\neq(0,0)\right\}\cup\{0\} \\
\bar{h}_{22}\notin B&\triangleq &\left\{\alpha_0+\alpha_1\bar{h}: \alpha_0, \alpha_1\in\mathbb{F}_p\right\}\\
\bar{h}_{33}\notin C&\triangleq &\left\{\alpha_0+\alpha_1\bar{h}: \alpha_0, \alpha_1\in\mathbb{F}_p\right\}\\
\bar{h} \notin \mathbb{F}_p
\end{eqnarray}
Hence we can achieve the rate of $4$ $\mathbb{F}_p$ symbols per channel use, i.e., $\frac{4}{3}$ $\mathbb{F}_{p^3}$ symbols per channel use.  Fig. \ref{fig:int_chan_p3} illustrates the achievable scheme described for $\mathbb{F}_{p^3}$. $\hfill\QED$
\end{proof}

{\it Remark 1: }
We can rewrite the conditions  in terms of original channel coefficients as follows.

\begin{eqnarray}
\frac{1}{h_{11}} \notin A&\triangleq &\left\{\alpha_1\frac{h_{32}}{h_{12}h_{31}} + \beta_1\frac{h_{23}}{h_{13}h_{21}} : \alpha_1, \beta_1\in\mathbb{F}_p\right\}\\
h_{22} \notin B&\triangleq &\left\{\alpha_2\frac{h_{21}h_{32}}{h_{31}} + \beta_2\frac{h_{12}h_{23}}{h_{13}} : \alpha_2, \beta_2\in\mathbb{F}_p\right\}\\
h_{33}  \notin C&\triangleq &\left\{\alpha_3\frac{h_{13}h_{32}}{h_{12}} + \beta_3\frac{h_{31}h_{23}}{h_{21}} : \alpha_3, \beta_3\in\mathbb{F}_p\right\}
\end{eqnarray}
These conditions, which are obtained for the constant channel setting, are similar to the conditions for feasibility of PBNA derived in \cite{Meng_Ramakrishnan_Markopoulou_Jafar} for the time-varying setting, wherein 6 cofactors of off-diagonal channel coefficients are involved in the feasibility criteria. However, note that in this finite field channel, the combining coefficients $\alpha_k,\beta_k,k\in\{1,2,3\}$ can be from $\mathbb{F}_p$ whereas in \cite{Meng_Ramakrishnan_Markopoulou_Jafar}, only binary coefficients were involved. \\

{\it Remark 2:} Each of the direct channels $h_{ii}$ can take one of  $p^3$ values. At most $p^2$ of these can be linear combination of the cross channel functions. Hence, there are at least $p^3-p^2$ choices for each direct channel such that the linear independence conditions are met and so desired symbols are resolvable. The fraction of channel realizations for which $h_{ii}$ is not a linear combination of cross channel functions, is therefore at least
\begin{eqnarray}
\frac{p^3-p^2}{p^3} = 1-\frac{1}{p} \to 1 \text{ for large } p.
 \end{eqnarray}
The fraction of channels for which the scheme works, considering all conditions simultaneously  is therefore at least
 \begin{eqnarray} \nonumber
(\frac{p^3-p}{p^3}) \times (1-\frac{1}{p})^3 =  (1-\frac{1}{p^2}) \times (1-\frac{1}{p})^3 \to 1 \text{ for large } p. \hspace{4mm}     
 \end{eqnarray}
 Note that unlike the wireless case where the DoF results are proved in an almost surely sense, the guarantee on the fraction of channels for which the scheme works is much more interesting.

\subsubsection{Achievability over $\mathbb{F}_{p^n}$, $n=2l+1$}
\begin{proof}
Now let us show that the sum-rate of $\frac{3l+1}{2l+1}$ can be achieved  over $\mathbb{F}_{p^{2l+1}}$, which generalizes the proof for $\mathbb{F}_{p^3}$ discussed earlier, to any odd $n$. Suppose source 1 sends $l+1$ symbols, $x_1^1, x_1^2, \ldots x_1^{l+1} \in \mathbb{F}_p$, while sources 2 and 3 sends $l$ symbols each, $x_2^1, \ldots, x_2^l, x_3^1, \ldots, x_3^l \in \mathbb{F}_p$.  

\indent We use the same set of beamforming directions, $\bar{h}^{l-1}v, \ldots,\bar{h}v,  v$ with $v \in \mathbb{F}_{p^{2l+1}}$ for the $l$ symbols sent by sources 2 and 3, so that interference is aligned at destination 1 in span($[\bar{h}^{l-1}v \hspace{2mm} \ldots \hspace{2mm} \bar{h}v \hspace{2mm} v]$). At source 1, we use $l+1$ beamforming directions $\bar{h}^lv, \ldots, \bar{h}v, v$ so that, $l$ symbols align at destination 2, and $l$ symbols align at destination 3. With these choices of beamforming directions for input symbols, the received symbols at the destinations can be represented as

\begin{eqnarray*}
{ y}_1&=&{\bar{h}_{11}({\bar{h}^lvx_1^{l+1}+\ldots+\bar{h}vx_1^2+vx_1^{1}}})+\bar{h}^{l-1}v{{x}_2^l}+\ldots+v{{x}_2^{1}}+\bar{h}^{l-1}v{{x}_3^l}+\ldots+v{x}_3^1 \\
{ y}_2&=&{\bar{h}^lvx_1^{l+1}+\ldots+\bar{h}vx_1^2+vx_1^{1}}+{\bar{h}_{22}\bar{h}^{l-1}v{x_2^l}+\ldots+\bar{h}_{22}v{x_2^1}}+\bar{h}^{l-1}v{x_3^l}+\ldots+v{x}_3^1 \\
{ y}_3&=&{\bar{h}^lvx_1^{l+1}+\ldots+\bar{h}vx_1^2+vx_1^{1}}+{\bar{h}^{l-1}v{x_2^l}+\ldots+v{x_2^1}}+{\bar{h}_{33}\bar{h}^{l-1}v{x_3^l}+\ldots+\bar{h}_{33}v{x}_3^1}
\end{eqnarray*}
\indent In order to resolve desired symbols at the destinations, signal spaces containing desired and interference symbols need to have linearly independent entries.
 
\begin{eqnarray}
S_1 = [\bar{h}_{11}\bar{h}^lv  \hspace{2mm} \ldots \hspace{2mm} \bar{h}_{11}\bar{h}v \hspace{2mm} \bar{h}_{11}v \hspace{2mm} \bar{h}^{l-1}v \hspace{2mm} \ldots \hspace{2mm} \bar{h}v \hspace{2mm} v ] =   [\bar{h}_{11}\bar{h}^l  \hspace{2mm} \ldots \hspace{2mm} \bar{h}_{11}\bar{h} \hspace{2mm} \bar{h}_{11} \hspace{2mm} \bar{h}^{l-1} \hspace{2mm} \ldots \hspace{2mm} \bar{h} \hspace{2mm} 1 ]v   \\
S_2 = [\bar{h}_{22}\bar{h}^{l-1}v  \hspace{2mm} \ldots \hspace{2mm} \bar{h}_{22}\bar{h}v \hspace{2mm} \bar{h}_{22}v \hspace{2mm} \bar{h}^{l}v \hspace{2mm} \ldots \hspace{2mm} \bar{h}v \hspace{2mm} v ] =   [\bar{h}_{22}\bar{h}^{l-1}  \hspace{2mm} \ldots \hspace{2mm} \bar{h}_{22}\bar{h} \hspace{2mm} \bar{h}_{22} \hspace{2mm} \bar{h}^{l} \hspace{2mm} \ldots \hspace{2mm} \bar{h} \hspace{2mm} 1 ]v   \\
S_3 = [\bar{h}_{33}\bar{h}^{l-1}v  \hspace{2mm} \ldots \hspace{2mm} \bar{h}_{33}\bar{h}v \hspace{2mm} \bar{h}_{33}v \hspace{2mm} \bar{h}^{l}v \hspace{2mm} \ldots \hspace{2mm} \bar{h}v \hspace{2mm} v ] =   [\bar{h}_{33}\bar{h}^{l-1}  \hspace{2mm} \ldots \hspace{2mm} \bar{h}_{33}\bar{h} \hspace{2mm} \bar{h}_{33} \hspace{2mm} \bar{h}^{l} \hspace{2mm} \ldots \hspace{2mm} \bar{h} \hspace{2mm} 1 ]v  
\end{eqnarray}
\begin{figure}[t]
\begin{center}
\includegraphics[scale=0.79]{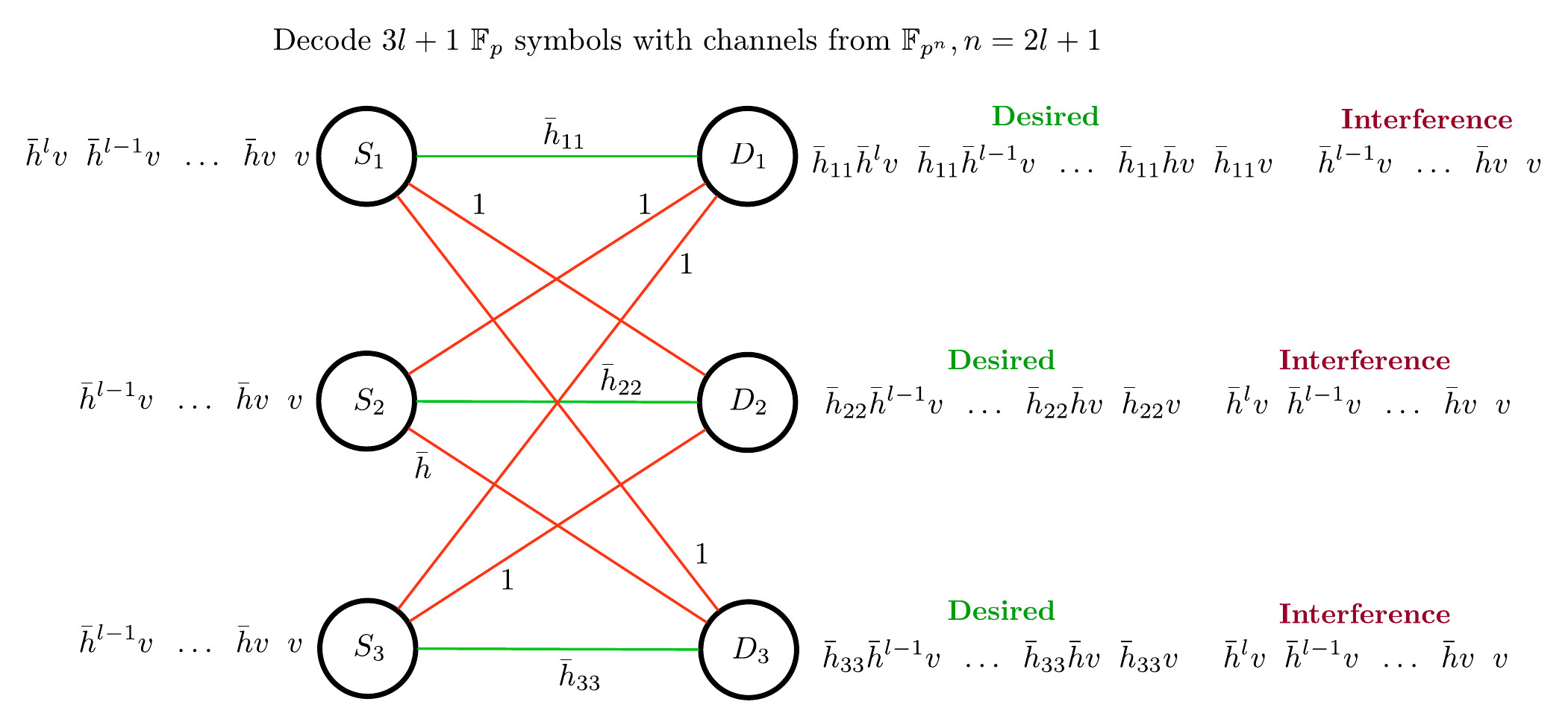}
\end{center}
\caption{3-user Interference channel over $\mathbb{F}_{p^n}$, $n=2l+1$}\label{fig:int_chan_pl}
\end{figure}
\indent The desired and interference symbols are resolvable and $3l+1$ symbols can be decoded at the destinations when the foliowing conditions are satisfied.
\begin{eqnarray}
\bar{h}_{11}\notin A&\triangleq&\left\{\frac{\alpha_0+\alpha_1\bar{h}+\ldots+\alpha_{l-1}\bar{h}^{l-1}}{\beta_0+\beta_1\bar{h}+\ldots+\beta_l\bar{h}^l}: \alpha_k,\beta_m\in\mathbb{F}_p, (\beta_0,\ldots,\beta_l)\neq(0,\ldots,0)\right\} \\
\bar{h}_{22}\notin B&\triangleq&\left\{\frac{\alpha_0+\alpha_1\bar{h}+\ldots+\alpha_l\bar{h}^l}{\beta_0+\beta_1\bar{h}+\ldots+\beta_{l-1}\bar{h}^{l-1}}: \alpha_k,\beta_m\in\mathbb{F}_p, (\beta_0,\ldots,\beta_{l-1})\neq(0,\ldots,0)\right\} \\
\bar{h}_{33}\notin C&\triangleq&\left\{\frac{\alpha_0+\alpha_1\bar{h}+\ldots+\alpha_l\bar{h}^l}{\beta_0+\beta_1\bar{h}+\ldots+\beta_{l-1}\bar{h}^{l-1}}: \alpha_k,\beta_m\in\mathbb{F}_p, (\beta_0,\ldots,\beta_{l-1})\neq(0,\ldots,0)\right\} \\
&&\beta_l\bar{h}^l+\ldots+\beta_1\bar{h}+\beta_0 \neq 0 : \beta_0,\ldots,\beta_l\in\mathbb{F}_p, (\beta_0,\ldots,\beta_l)\neq(0,\ldots,0)
\end{eqnarray}

Fig. \ref{fig:int_chan_pl} illustrates the achievable scheme described for $\mathbb{F}_{p^n}$ with $n=2l+1$. Note that a $\mathbb{F}_{p}$ symbol represents $\frac{1}{2l+1}$ of an $\mathbb{F}_{p^{2l+1}}$ symbol and rate is measured in $\mathbb{F}_{p^{2l+1}}$ units. Hence we have proved achievability of linear  capacity of $\frac{3l+1}{2l+1}$ for all odd $n=2l+1$.  $\hfill\QED$
\end{proof}
\\
\\

{\it Remark 3:} Each of the direct channels $h_{ii}$ can be from one of the $p^{2l+1}$ choices. 
The fraction of channel realizations for which direct channels satisfy the conditions is at least
\begin{eqnarray} \nonumber
&&\text{Fraction of channels with }h_{11} \text{ not in A} \geq \frac{p^{2l+1}-(p^{2l}+\ldots+p^l)}{p^{2l+1}} \\
&&= 1-\{\frac{1}{p}+\frac{1}{p^2}+\ldots+\frac{1}{p^{l+1}}\} \to 1 \text{ for large } p. \\ \nonumber
&&\text{Fraction of channels with }h_{22} \text{ or } h_{33} \text{ not in B or C} = \frac{p^{2l+1}-(p^{2l}+\ldots+p^{l+1})}{p^{2l+1}} \\
&& = 1-\{\frac{1}{p}+\frac{1}{p^2}+\ldots+\frac{1}{p^l}\} \to 1 \text{ for large } p.
 \end{eqnarray}
Also, following condition on cross channel $\bar{h}$ needs to be met
 \begin{eqnarray}
\beta_l\bar{h}^l+\ldots+\beta_1\bar{h}+\beta_0 \neq 0 : \beta_0,\ldots,\beta_l\in\mathbb{F}_p, (\beta_0,\ldots,\beta_l)\neq(0,\ldots,0)\label{eq:barh}
 \end{eqnarray}
The $l+1$ combining coefficients can repesent no more than $p^{l+1}$ distinct polynomials, and since each has degree $l$ or less, each polynomial can have at most $l$ zeros. Therefore, the number of possible $\bar{h}$ that can violate (\ref{eq:barh}) is no more than $lp^{l+1}$. So, the fraction of $\bar{h}$ values for which the scheme works is at least
\begin{eqnarray}
\frac{p^{2l+1}-lp^{l+1}}{p^{2l+1}}&=&1-\frac{l}{p^l}
\end{eqnarray}
which approaches 1 as either $p$ or $l$ approaches infinity. Putting everything together, the fraction of all channels for which the scheme works is at least
 \begin{eqnarray}
 (1-\frac{l}{p^{l}}) (1-\{\frac{1}{p}+\frac{1}{p^2}+\ldots+\frac{1}{p^{l+1}}\})
 (1-\{\frac{1}{p}+\frac{1}{p^2}+\ldots+\frac{1}{p^l}\})^2  \to 1 \text{ for large } p
 \end{eqnarray}

{\it Remark 4:} Using Lemma \ref{lemma:prime_lemma} in Appendix I, the condition on the cross channel in Theorem \ref{thm_3user_odd} can be simplified as $\bar{h} \notin \mathbb{F}_p$ for all prime $n$, since 
 \begin{eqnarray}
\beta_l\bar{h}^l+\ldots+\beta_1\bar{h}+\beta_0 \neq 0  \iff \bar{h} \notin \mathbb{F}_p
 \end{eqnarray}
So fraction of channel values for which scheme works with $n$ being prime, is at least
  \begin{eqnarray}
 (1-\frac{1}{p^{2l}}) (1-\{\frac{1}{p}+\frac{1}{p^2}+\ldots+\frac{1}{p^{l+1}}\})
 (1-\{\frac{1}{p}+\frac{1}{p^2}+\ldots+\frac{1}{p^l}\})^2  \to 1 \text{ for large } p. \hspace{4mm}     
 \end{eqnarray}
 
\subsubsection{Achievability over $\mathbb{F}_{p^2}$}
Having established the achievability proof over $\mathbb{F}_{p^n}$ for odd $n$, we will omit the general case of even $n$, except to mention that it can be translated from \cite{Bresler_Tse_Diversity} using the same  principles as illustrated for odd $n$ and  does not offer new insights. However, we will present the achievability proof for the case of $n=2$ because the corresponding result in \cite{Cadambe_Jafar_Wang} uses the asymmetric complex signaling approach which may be of interest.
As before, $\mathbb{F}_{p^2}$  can be viewed as a 2-dimensional vector space over subfield $\mathbb{F}_{p}$, much like the field of complex numbers can be viewed as a 2-dimensional vector space over reals, so that an achievable scheme similar to asymmetric complex signaling of \cite{Cadambe_Jafar_Wang} can be used. Hence, we translate the DoF result of \cite{Cadambe_Jafar_Wang} into the finite field setting as follows.   

\begin{theorem}\label{theorem:int_n2}
The 3-user interference channel over $\mathbb{F}_{p^2}$ has linear capacity, $C_{linear}=\frac{6}{5}$, if
\begin{eqnarray} \nonumber
\bar{h}_{11} = \frac{h_{11}h_{23}}{h_{13}h_{21}} \notin \mathbb{F}_p, \hspace{8mm} {\bar{h}}{\bar{h}}_{11} = \frac{h_{11}h_{32}}{h_{12}h_{31}} \notin \mathbb{F}_p \\ \nonumber
\bar{h}_{22} = \frac{h_{22}h_{13}}{h_{23}h_{12}} \notin \mathbb{F}_p, \hspace{8mm}  \frac{\bar{h}}{{\bar{h}}_{22}} = \frac{h_{21}h_{32}}{h_{22}h_{31}} \notin \mathbb{F}_p \\ \nonumber
\bar{h}_{33} = \frac{h_{33}h_{21}}{h_{31}h_{23}} \notin \mathbb{F}_p, \hspace{8mm} \frac{\bar{h}}{{\bar{h}}_{33}} = \frac{h_{32}h_{13}}{h_{33}h_{12}} \notin \mathbb{F}_p
\end{eqnarray}
\end{theorem}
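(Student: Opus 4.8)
The plan is to translate the asymmetric complex signaling scheme of Cadambe, Jafar and Wang \cite{Cadambe_Jafar_Wang} (equivalently the diversity $L=2$, alignment-depth $D=2$ construction of Bresler and Tse \cite{Bresler_Tse_Diversity}, which gives $\frac{3D}{2D+1}=\frac{6}{5}$) into $\mathbb{F}_{p^2}$, using the same bridge that underlies the odd-$n$ proof: multiplication by $\bar h\in\mathbb{F}_{p^2}\setminus\mathbb{F}_p$ acts on the $2$-dimensional $\mathbb{F}_p$-space as a matrix $H$ that is not a scaled identity and has no $\mathbb{F}_p$-eigenvector, so $v$ and $\bar h v$ are $\mathbb{F}_p$-linearly independent for every nonzero $v$. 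This is the exact finite field analogue of the fact that a constant complex channel becomes a real rotation with no real eigenvector, and it is what replaces the ``generic/almost surely'' guarantees available in the wireless case.

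First I would settle the dimension bookkeeping. Since $\frac{6}{5}$ exceeds the single-use min-cut of one $\mathbb{F}_{p^2}$ symbol, a symbol extension is unavoidable; the natural choice is $5$ channel uses, giving a received space $\mathbb{F}_p^{10}$ (equivalently $\mathbb{F}_{p^2}^{5}$) at each destination, with each user carrying $4$ symbols of $\mathbb{F}_p$, for a sum of $12$ symbols of $\mathbb{F}_p$ $=6$ symbols of $\mathbb{F}_{p^2}$ over $5$ uses, i.e. $\frac{6}{5}$. Over the extension the cross channel acts as $I_5\otimes H$, so multiplication by $\bar h$ still commutes with the $\mathbb{F}_p$-structure and beamforming directions can again be built from $\mathbb{F}_p$-combinations of $v$ and $\bar h v$.

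Next I would fix the beamforming directions so that, at each receiver, the interference from the two unintended users aligns as tightly as the depth-$2$ constraint permits, and then assemble the desired-plus-aligned-interference spaces $S_1,S_2,S_3$ as matrices whose columns are the effective directions (each a channel coefficient times a beamforming vector). Decodability is exactly the requirement that the columns of each $S_k$ be $\mathbb{F}_p$-linearly independent, and the design should be arranged so that this collapses, receiver by receiver, to the statement that certain pairs of directions form an $\mathbb{F}_p$-basis of $\mathbb{F}_{p^2}$: at receiver $1$ the desired directions are $\bar h_{11}v$ and $\bar h_{11}\bar h v$, forcing $\{1,\bar h_{11}\}$ and $\{1,\bar h\,\bar h_{11}\}$ each to be a basis, i.e. $\bar h_{11}\notin\mathbb{F}_p$ and $\bar h\,\bar h_{11}\notin\mathbb{F}_p$; at receivers $2$ and $3$ the analogous bookkeeping requires $\{1,\bar h_{22}\}$ and $\{\bar h,\bar h_{22}\}$ (respectively $\{1,\bar h_{33}\}$ and $\{\bar h,\bar h_{33}\}$) each to be a basis, i.e. $\bar h_{22},\ \bar h/\bar h_{22}\notin\mathbb{F}_p$ and $\bar h_{33},\ \bar h/\bar h_{33}\notin\mathbb{F}_p$. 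These are precisely the six hypotheses. As in the earlier proofs, existence of one valid beamforming vector for finite $p$ is then secured by a counting argument: the ``bad'' choices form a union of low-dimensional sets whose total size is strictly below $p^{\dim}$, so a good vector exists.

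The main obstacle I anticipate is not the alignment idea but the resolvability step for finite $p$ under the diversity-$2$ constraint. Because the space has diversity only $2$, interference cannot be fully aligned, so the scheme must realize a genuine depth-$2$ \emph{partial} alignment across the $5$-fold extension, and one must verify that the resulting $S_k$ keep full column rank over $\mathbb{F}_p$ rather than merely over $\mathbb{F}_{p^2}$ (over $\mathbb{F}_{p^2}$ every scalar channel is trivial and nothing useful aligns). Converting the improper-signaling ``no real eigenvector'' argument into the precise independence conditions above, and checking that the counting bound of the form $|A\cup B\cup\cdots|<p^{\dim}$ holds for all relevant $p$ (which is what lets the theorem be stated without an almost-sure caveat), is where the real work lies.
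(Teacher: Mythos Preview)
Your plan is correct and matches the paper's approach: a $5$-symbol extension giving a $10$-dimensional $\mathbb{F}_p$-space at each receiver, $4$ $\mathbb{F}_p$-streams per user, depth-$2$ partial alignment in the style of \cite{Cadambe_Jafar_Wang}, the six hypotheses emerging as exactly the $\mathbb{F}_p$-linear independence conditions on the signal spaces $S_1,S_2,S_3$, and existence of good beamforming vectors established by a recursive counting argument. Two points of detail to align your write-up with what actually happens: first, the scheme does not run off a single base vector $v$ as your wording in places suggests---there are six free vectors $V_k^i\in\mathbb{F}_{p^2}^{5\times1}$, $i\in\{1,2\}$, $k\in\{1,2,3\}$, with the remaining beamformers fixed by the alignment constraints, and these six are chosen one at a time by the counting bound; second, the final counting inequality $p^{10}>3p^{9}+2p^{8}$ only closes for $p>3$, so the paper disposes of $p=2,3$ by exhaustive numerical verification rather than by the bound.
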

\begin{proof} 
The outer bound follows from \cite{Cadambe_Jafar_Wang} in much the same fashion as the outer bound for the previous section follows from \cite{Bresler_Tse_Diversity}. Here we present only the achievability proof. Consider a 5 symbol extension of the normalized 3-user interference channel over $\mathbb{F}_{p^2}$. Over this 5 symbol extensions, 4 input symbols denoted by $x_k^1, x_k^2, x_k^3, x_k^4$ are precoded and transmitted at source $k$. Each input symbol $x_k^i, i\in \{1,2,3,4\},k\in\{1,2,3\}$ is from $\mathbb{F}_{p}$. Corresponding $5\times 1$ beam forming vectors are denoted using vectors $V_k^1, V_k^2, V_k^3, V_k^4 \in \mathbb{F}_{p^2}^{5\times 1}, k\in\{1,2,3\}$.
Each destination has 10 dimensions of order $p$ over the symbol extended channel. Desired symbols from corresponding source would occupy 4 dimensions and for resolvability, interference need to occupy only $6$ dimensions of order $p$. Hence at each destination, two of the 8 interference vectors from 2 unintended sources, need to be aligned. To this end, we make the following choices for certain beam forming vectors.
\begin{eqnarray}
V_1^3  = \bar{h} V_2^1, \hspace{8mm} V_1^4 = V_3^2, \hspace{8mm}
V_2^3  = V_3^1, \hspace{8mm} V_2^4 = \frac{1}{\bar{h}}V_1^2, \hspace{8mm}
V_3^3  = V_1^1, \hspace{8mm} V_3^4 = V_2^2
\end{eqnarray}

\noindent Desired and Interference signal space at the destinations can now be represented as follows.
\begin{eqnarray}
S_1 = [\bar{h}_{11}V_1^1 \hspace{2mm} \bar{h}_{11}V_1^2 \hspace{2mm} \bar{h}_{11}V_1^3 \hspace{2mm} \bar{h}_{11}V_1^4 \hspace{2mm} V_2^1 \hspace{2mm} V_2^2 \hspace{2mm} V_2^3 \hspace{2mm} V_2^4 \hspace{2mm} V_3^2 \hspace{2mm} V_3^3] \\
S_2 = [\bar{h}_{22}V_2^1 \hspace{2mm} \bar{h}_{22}V_2^2 \hspace{2mm} \bar{h}_{22}V_2^3 \hspace{2mm} \bar{h}_{22}V_2^4 \hspace{2mm} V_3^1 \hspace{2mm} V_3^2 \hspace{2mm} V_3^3 \hspace{2mm} V_3^4 \hspace{2mm} V_1^2 \hspace{2mm} V_1^3] \\
S_3 = [\bar{h}_{33}V_3^1 \hspace{2mm} \bar{h}_{33}V_3^2 \hspace{2mm} \bar{h}_{33}V_3^3 \hspace{2mm} \bar{h}_{33}V_3^4 \hspace{2mm} V_1^1 \hspace{2mm} V_1^2 \hspace{2mm} V_1^3 \hspace{2mm} V_1^4 \hspace{2mm} \bar{h}V_2^2 \hspace{2mm} \bar{h}V_2^3] 
\end{eqnarray}
Due to interference alignment, these matrices can be equivalently  re-written as  
\begin{eqnarray}
S_1 = [\bar{h}_{11}V_1^1 \hspace{2mm} \bar{h}_{11}V_1^2 \hspace{2mm} \bar{h}_{11}\bar{h}V_2^1 \hspace{2mm} \bar{h}_{11}V_3^2 \hspace{2mm} V_2^1 \hspace{2mm} V_2^2 \hspace{2mm} V_3^1 \hspace{2mm} \frac{1}{\bar{h}}V_1^2 \hspace{2mm} V_3^2 \hspace{2mm} V_1^1] \\
S_2 = [\bar{h}_{22}V_2^1 \hspace{2mm} \bar{h}_{22}V_2^2 \hspace{2mm} \bar{h}_{22}V_3^1 \hspace{2mm} \frac{\bar{h}_{22}}{\bar{h}}V_1^2 \hspace{2mm} V_3^1 \hspace{2mm} V_3^2 \hspace{2mm} V_1^1 \hspace{2mm} V_2^2 \hspace{2mm} V_1^2 \hspace{2mm} \bar{h}V_2^1] \\
S_3 = [\bar{h}_{33}V_3^1 \hspace{2mm} \bar{h}_{33}V_3^2 \hspace{2mm} \bar{h}_{33}V_1^1 \hspace{2mm} \bar{h}_{33}V_2^2 \hspace{2mm} V_1^1 \hspace{2mm} V_1^2 \hspace{2mm} \bar{h}V_2^1 \hspace{2mm} V_3^2 \hspace{2mm} \bar{h}V_2^2 \hspace{2mm} \bar{h}V_3^1] 
\end{eqnarray}
In order to resolve desired signals at all destinations, the columns of these 3 matrices need to be linearly independent over $\mathbb{F}_p$. The following six conditions are required.
\begin{eqnarray} \nonumber
\bar{h}_{11} \notin \mathbb{F}_p, \hspace{8mm} {\bar{h}}{\bar{h}}_{11} \notin \mathbb{F}_p, \hspace{8mm} 
\bar{h}_{22} \notin \mathbb{F}_p, \hspace{8mm}  \frac{\bar{h}}{{\bar{h}}_{22}} \notin \mathbb{F}_p, \hspace{8mm} 
\bar{h}_{33} \notin \mathbb{F}_p, \hspace{8mm} \frac{\bar{h}}{{\bar{h}}_{33}} \notin \mathbb{F}_p
\end{eqnarray}

We will now choose beam forming vectors $V_k^i, i\in \{1,2\},k\in\{1,2,3\}$, such that all three matrices $S_k$ have their 10 columns linearly independent.
\begin{figure}[h]
\begin{center}
\includegraphics[scale=0.78]{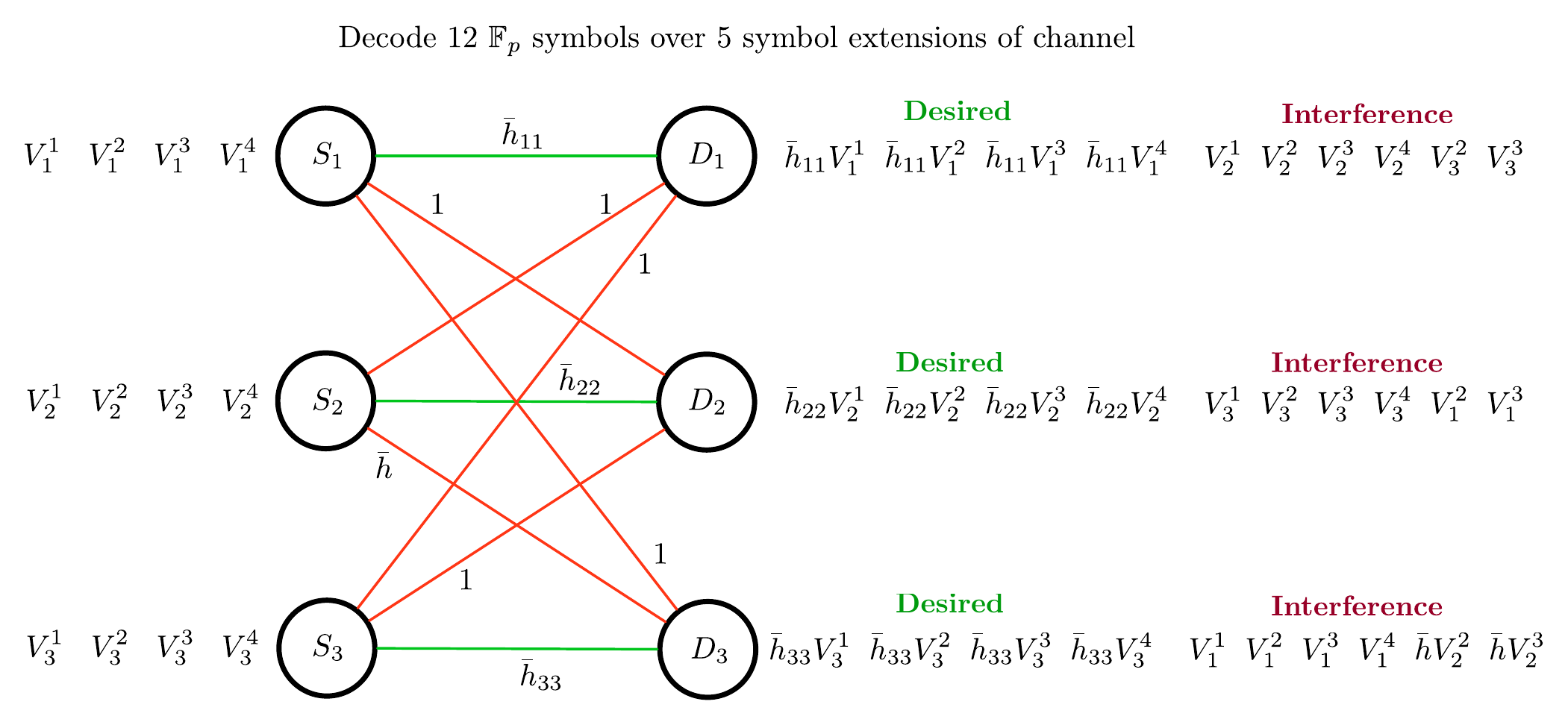}
\end{center}
\caption{3-user Interference channel over $\mathbb{F}_{p^2}$}\label{fig:int_chan_p2}
\end{figure}

We choose $V_1^1$ to be the vector of ones. Since $\bar{h}_{11},\bar{h}_{33} \notin \mathbb{F}_p$, vectors in $S_1: [\bar{h}_{11}V_1^1  \hspace{2mm}  V_1^1] $ are linearly independent and so are similar vectors in $S_3: [\bar{h}_{33}V_1^1  \hspace{2mm}  V_1^1] $. We now choose vector $V_1^2$ such that following conditions hold.

\begin{eqnarray}
\text{From $S_1$, } V_1^2\notin A&\triangleq&\left\{  \frac{(\alpha_1 \bar{h}_{11} + \alpha_2)V_1^1}{\beta_1 \bar{h}_{11}+\beta_2\frac{1}{\bar{h}}}: \alpha_1, \alpha_2, \beta_1, \beta_2\in\mathbb{F}_p, (\beta_1,\beta_2)\neq(0,0)\right\} \label{eqn_n2_21} \\
\text{From $S_2$, } V_1^2\notin B&\triangleq&\left\{  \frac{\alpha_1V_1^1}{\beta_1+\beta_2\frac{\bar{h}_{22}}{\bar{h}}}: \alpha_1,\beta_1, \beta_2\in\mathbb{F}_p, (\beta_1,\beta_2)\neq(0,0)\right\} \label{eqn_n2_22} \\
\text{From $S_3$, } V_1^2\notin C&\triangleq&\left\{  (\alpha_1 \bar{h}_{33} + \alpha_2)V_1^1: \alpha_1, \alpha_2\in\mathbb{F}_p \right\} \label{eqn_n2_23}
\end{eqnarray}
Now we note that
\begin{eqnarray}
|A| \leq \frac{(p^2-1)p^{2}}{p-1} = p^3+p^2, \hspace{10mm} |B| \leq \frac{(p^2-1)p}{p-1} = p^2+p, \hspace{10mm} |C| \leq p^2
\end{eqnarray}
\begin{eqnarray}
|A\cup B \cup C| \leq p^3 + 3p^2 + p
\end{eqnarray}
There are $p^{10}$ choices for $V_1^2\in\mathbb{F}_{p^2}^{5\times 1}$, and since 
\begin{eqnarray}
p^{10} &>& p^3 + 3p^2 + p
\end{eqnarray}
for all $p$, there exist choices for $V_1^2$ such that all 3 conditions \eqref{eqn_n2_21},\eqref{eqn_n2_22},\eqref{eqn_n2_23} hold. Choosing $V_1^2$ from those, we note that 4 columns of $S_1$ and 3 columns each of $S_2,S_3$ are linearly independent over $\mathbb{F}_p$.

Now we choose $V_2^1$ similarly such that following conditions hold
\begin{align}
V_2^1\notin A \triangleq &\{\frac{(\alpha_1 \bar{h}_{11} + \alpha_2)V_1^1 + (\alpha_3 \bar{h}_{11} + \frac{1}{\bar{h}}\alpha_4)V_1^2}{\beta_1 \bar{h}_{11}\bar{h}+\beta_2}:\alpha_1, \alpha_2, \alpha_3, \alpha_4, \beta_1, \beta_2\in\mathbb{F}_p, (\beta_1,\beta_2)\neq(0,0)\} \label{eqn_n2_31} \\ 
V_2^1\notin B \triangleq &\{\frac{\alpha_1V_1^1 + (\alpha_2 + \alpha_3\frac{\bar{h}_{22}}{\bar{h}})V_1^2}{\beta_1\bar{h}_{22}+\beta_2\bar{h}}:\alpha_1,\alpha_2,\alpha_3,\beta_1, \beta_2\in\mathbb{F}_p, (\beta_1,\beta_2)\neq(0,0)\} \label{eqn_n2_32} \\ 
V_2^1\notin C \triangleq &\{\frac{(\alpha_1 \bar{h}_{33} + \alpha_2)V_1^1 + \alpha_3V_1^2}{\bar{h}}: \alpha_1, \alpha_2, \alpha_3\in\mathbb{F}_p\} \label{eqn_n2_33}
\end{align}
Now we note that
\begin{center}
\begin{eqnarray}
|A| \leq \frac{(p^2-1)p^4}{p-1} = p^5+p^4, \hspace{10mm} |B| \leq \frac{(p^2-1)p^3}{p-1} = p^4+p^3, \hspace{10mm} |C| \leq p^3
\end{eqnarray}
\begin{eqnarray}
 |A\cup B \cup C| \leq p^5 + 2p^4 + 2p^3
\end{eqnarray}
\end{center}
There are $p^{10}$ choices for $V_2^1$, and since 
\begin{eqnarray}
p^{10} &>& p^5 + 2p^4 + 2p^3
\end{eqnarray}
for all $p$, there exist choices for $V_2^1$ such that all 3 conditions \eqref{eqn_n2_31},\eqref{eqn_n2_32},\eqref{eqn_n2_33} hold. Choosing $V_2^1$ from those, we note that 6 columns of $S_1$, 5 columns of $S_2$ and 4 columns of $S_3$ are linearly independent over $\mathbb{F}_p$.

Now we choose $V_2^2$ similarly such that following conditions hold
\begin{align}
V_2^2\notin A\triangleq & \{  (\alpha_1 \bar{h}_{11} + \alpha_2)V_1^1 + (\alpha_3 \bar{h}_{11} + \frac{1}{\bar{h}}\alpha_4)V_1^2 + (\alpha_5 \bar{h}_{11}\bar{h} + \alpha_6)V_2^1: \alpha_k \in\mathbb{F}_p, k\in\{1,\ldots,6\} \} \label{eqn_n2_41} \\ \nonumber
V_2^2\notin B\triangleq & \{ \frac{\alpha_1V_1^1 + (\alpha_2 + \alpha_3\frac{\bar{h}_{22}}{\bar{h}})V_1^2 + (\alpha_4\bar{h} + \alpha_5\bar{h}_{22})V_2^1 }{\beta_1\bar{h}_{22}+\beta_2}:  \\ & \qquad \alpha_k,\beta_1, \beta_2\in\mathbb{F}_p, k\in\{1,\ldots,5\}, (\beta_1,\beta_2)\neq(0,0) \} \label{eqn_n2_42} \\ 
V_2^2\notin C\triangleq & \{  \frac{(\alpha_1 \bar{h}_{33} + \alpha_2)V_1^1 + \alpha_3V_1^2 + \alpha_4\bar{h}V_2^1}{\beta_1\bar{h}_{33}+\beta_2\bar{h}}: \alpha_k,\beta_1,\beta_2\in\mathbb{F}_p, k\in\{1,\ldots,4\}, (\beta_1,\beta_2)\neq(0,0) \} \label{eqn_n2_43}
\end{align}
Now we note that
\begin{center}
\begin{eqnarray}
|A| \leq p^6, \hspace{10mm} |B| \leq \frac{(p^2-1)p^5}{p-1} = p^6+p^5, \hspace{10mm} |C| \leq \frac{(p^2-1)p^4}{p-1} = p^5+p^4
\end{eqnarray}
\begin{eqnarray}
 |A\cup B \cup C| \leq 2p^6 + 2p^5 + p^4
\end{eqnarray}
\end{center}
There are $p^{10}$ choices for $V_2^2$, and since 
\begin{eqnarray}
p^{10} &>& 2p^6 + 2p^5 + p^4
\end{eqnarray}
for all $p$, there exist choices for $V_2^2$ such that all 3 conditions \eqref{eqn_n2_41},\eqref{eqn_n2_42},\eqref{eqn_n2_43} hold. Choosing $V_2^2$ from those, we note that 7 columns each of $S_1, S_2$, and 6 columns of $S_3$ are linearly independent over $\mathbb{F}_p$.

Now we choose $V_3^1$ similarly such that following conditions hold
\begin{align}\nonumber
V_3^1\notin A\triangleq&\{  (\alpha_1 \bar{h}_{11} + \alpha_2)V_1^1 + (\alpha_3 \bar{h}_{11} + \frac{1}{\bar{h}}\alpha_4)V_1^2 + (\alpha_5 \bar{h}_{11}\bar{h} + \alpha_6)V_2^1 + \alpha_7 V_2^2:  \\ & \qquad \alpha_k \in\mathbb{F}_p, k\in\{1,\ldots,7\}  \} \label{eqn_n2_51} \\ \nonumber
V_3^1\notin B\triangleq&\{  \frac{\alpha_1V_1^1 + (\alpha_2 + \alpha_3\frac{\bar{h}_{22}}{\bar{h}})V_1^2 + (\alpha_4\bar{h} + \alpha_5\bar{h}_{22})V_2^1 + (\alpha_6\bar{h}_{22} + \alpha_7)V_2^2 }{\beta_1\bar{h}_{22}+\beta_2}:  \\ & \qquad \alpha_k,\beta_1, \beta_2\in\mathbb{F}_p, k\in\{1,\ldots,7\} , (\beta_1,\beta_2)\neq(0,0)\} \label{eqn_n2_52} \\ \nonumber
V_3^1\notin C\triangleq&\{  \frac{(\alpha_1 \bar{h}_{33} + \alpha_2)V_1^1 + \alpha_3V_1^2 + \alpha_4\bar{h}V_2^1 + (\alpha_5\bar{h}_{33} + \alpha_6\bar{h})V_2^2 }{\beta_1\bar{h}_{33}+\beta_2\bar{h}}:  \\ & \qquad \alpha_k,\beta_1,\beta_2\in\mathbb{F}_p, k\in\{1,\ldots,6\}, (\beta_1,\beta_2)\neq(0,0) \} \label{eqn_n2_53}
\end{align}
Now we note that
\begin{center}
\begin{eqnarray}
|A| \leq p^7, \hspace{10mm} |B| \leq \frac{(p^2-1)p^7}{p-1} = p^8+p^7, \hspace{10mm} |C| \leq \frac{(p^2-1)p^6}{p-1} = p^7+p^6
\end{eqnarray}
\begin{eqnarray}
 |A\cup B \cup C| \leq p^8 + 3p^7 + p^6
\end{eqnarray}
\end{center}
There are $p^{10}$ choices for $V_3^1$, and since 
\begin{eqnarray}
p^{10} &>& p^8 + 3p^7 + p^6
\end{eqnarray}
for all $p$, there exist choices for $V_3^1$ such that all 3 conditions \eqref{eqn_n2_51},\eqref{eqn_n2_52},\eqref{eqn_n2_53} hold. Choosing $V_3^1$ from those, we note that 8 columns each of $S_1,S_3$, and 9 columns of $S_2$ are linearly independent over $\mathbb{F}_p$.

Now we choose $V_3^2$ similarly such that following conditions hold
\begin{align} \nonumber
V_3^2\notin A\triangleq&\{\frac{(\alpha_1 \bar{h}_{11} + \alpha_2)V_1^1 + (\alpha_3 \bar{h}_{11} + \frac{1}{\bar{h}}\alpha_4)V_1^2 + (\alpha_5 \bar{h}_{11}\bar{h} + \alpha_6)V_2^1 + \alpha_7 V_2^2+ \alpha_8 V_3^1}{\beta_1\bar{h}_{11}+\beta_2}:  \\ & \qquad \alpha_k \in\mathbb{F}_p, k\in\{1,\ldots,8\}, (\beta_1,\beta_2)\neq(0,0) \} \label{eqn_n2_61} \\ \nonumber
V_3^2\notin B\triangleq&\{\alpha_1V_1^1 + (\alpha_2 + \alpha_3\frac{\bar{h}_{22}}{\bar{h}})V_1^2 + (\alpha_4\bar{h} + \alpha_5\bar{h}_{22})V_2^1 + (\alpha_6\bar{h}_{22} + \alpha_7)V_2^2+(\alpha_8\bar{h}_{22} + \alpha_9)V_3^1:  \\ & \qquad \alpha_k,\beta_1, \beta_2\in\mathbb{F}_p, k\in\{1,\ldots,9\}\} \label{eqn_n2_62} \\ \nonumber
V_3^2\notin C\triangleq&\{  \frac{(\alpha_1 \bar{h}_{33} + \alpha_2)V_1^1 + \alpha_3V_1^2 + \alpha_4\bar{h}V_2^1 + (\alpha_5\bar{h}_{33} + \alpha_6\bar{h})V_2^2 +(\alpha_7\bar{h}_{33} + \alpha_8\bar{h})V_3^1 }{\beta_1\bar{h}_{33}+\beta_2}:  \\ & \qquad \alpha_k,\beta_1,\beta_2\in\mathbb{F}_p, k\in\{1,\ldots,8\}, (\beta_1,\beta_2)\neq(0,0) \} \label{eqn_n2_63}
\end{align}
Now we note that
\begin{center}
\begin{eqnarray}
|A| \leq \frac{(p^2-1)p^8}{p-1} = p^9+p^8, \hspace{10mm} |B| \leq p^9, \hspace{10mm} |C| \leq \frac{(p^2-1)p^8}{p-1} = p^9+p^8
\end{eqnarray}
\begin{eqnarray}
 |A\cup B \cup C| \leq 3p^9 + 2p^8
\end{eqnarray}
\end{center}
There are $p^{10}$ choices for $V_3^2$, and since 
\begin{eqnarray}
p^{10} &>& 3p^9 + 2p^8
\end{eqnarray}
for $p>3$, there exist choices for $V_3^2$ such that all 3 conditions \eqref{eqn_n2_61},\eqref{eqn_n2_62},\eqref{eqn_n2_63} hold. Choosing $V_3^2$ from those, we note that all columns each of $S_1,S_2,S_3$ are linearly independent over $\mathbb{F}_p$.

Therefore, we have constructed beam forming vectors such that desired and interference signals are linearly independent at all destinations. This proves the achievability of linear-scheme capacity of $\frac{6}{5}$ for 3-user interference channel over $\mathbb{F}_{p^2}$ for all $p>3$ when the specified conditions are met. For p=2 and p=3, we are able to exhaustively solve all possible cases numerically using MATLAB, completing the achievability proof of sum-rate $\frac{6}{5}$ for channel over $\mathbb{F}_{p^2}$ for all $p$ under the conditions of Theorem \ref{theorem:int_n2}. 

The conditions can be also re-written in terms of the original channels as follows.
\begin{eqnarray} \nonumber
\bar{h}_{11} = \frac{h_{11}h_{23}}{h_{13}h_{21}} \notin \mathbb{F}_p, \hspace{8mm} {\bar{h}}{\bar{h}}_{11} = \frac{h_{11}h_{32}}{h_{12}h_{31}} \notin \mathbb{F}_p \\ \nonumber
\bar{h}_{22} = \frac{h_{22}h_{13}}{h_{23}h_{12}} \notin \mathbb{F}_p, \hspace{8mm}  \frac{\bar{h}}{{\bar{h}}_{22}} = \frac{h_{21}h_{32}}{h_{22}h_{31}} \notin \mathbb{F}_p \\ \nonumber
\bar{h}_{33} = \frac{h_{33}h_{21}}{h_{31}h_{23}} \notin \mathbb{F}_p, \hspace{8mm} \frac{\bar{h}}{{\bar{h}}_{33}} = \frac{h_{32}h_{13}}{h_{33}h_{12}} \notin \mathbb{F}_p
\end{eqnarray}
$\hfill\QED$
\end{proof}
\\

{\it Remark 5:} Note that these 6 conditions are equivalent to the 6 conditions on the phase differences between channel coefficients in the asymmetric complex signing scheme for wireless networks, as described in \cite{Cadambe_Jafar_Wang} to achieve DoF of $\frac{6}{5}$. 
\\

{\it Remark 6:} Each of the direct channels satisfy $\bar{h}_{ii} \notin \mathbb{F}_p, i\in \{1,2,3\}$
The fraction of channel realizations for which direct channels satisfy the 3 conditions is at least 
\begin{eqnarray} 
(\frac{p^2-p}{p^2})^3 = (1-\frac{1}{p})^3 \to \text{ 1 for large } p 
\end{eqnarray}
Further cross channel $\bar{h}$ should satisfy the conditions $\bar{h} \neq \frac{\alpha}{\bar{h}_{11}}, \bar{h} \neq \beta\bar{h}_{22}, \bar{h} \neq \gamma\bar{h}_{33}$ for $\alpha,\beta,\gamma \in \mathbb{F}_p$. There are atmost $3p$ channels such that one of these 3 conditions on $\bar{h}$ is violated. Hence there are at least $p^2-3p$ valid channel realizations for $\bar{h}$ for $p>3$.  Putting everything together, the fraction of all channels for which the scheme works for $p>3$ is at least
\begin{eqnarray} 
 (1-\frac{1}{p})^3 (\frac{p^2-3p}{p^2}) =  (1-\frac{1}{p})^3 (1-\frac{3}{p}) \to \text{ 1 for large }p 
\end{eqnarray}
\subsection{Linear outer bound} \label{lin_outer_bound_sec}
In this section, we will prove  the linear outer bounds. The proof follows along the lines of \cite{Bresler_Tse_Diversity} by showing that the alignment depth can be at most $D$, which is a function of channel diversity (in case of finite fields, $n$). 

\subsubsection{Linear outer bound over $ \mathbb{F}_{p^n}, n=2l+1$}
\begin{lemma} \label{outer_lemma1}
Alignment depth is at most $D = 2n-\lfloor\frac{n}{2}\rfloor-1$ for the normalized 3-user interference channel, wherein channels $\bar{h},\bar{h}_{kk}\in \mathbb{F}_{p^n}$ for odd $n=2l+1$ and satisfy
\begin{eqnarray}
\bar{h}_{11}\notin A&\triangleq&\left\{\frac{\alpha_0+\alpha_1\bar{h}+\ldots+\alpha_{l-1}\bar{h}^{l-1}}{\beta_0+\beta_1\bar{h}+\ldots+\beta_l\bar{h}^l}: \alpha_k,\beta_m\in\mathbb{F}_p, (\beta_0,\ldots,\beta_l)\neq(0,\ldots,0)\right\} \\
\bar{h}_{22}\notin B&\triangleq&\left\{\frac{\alpha_0+\alpha_1\bar{h}+\ldots+\alpha_l\bar{h}^l}{\beta_0+\beta_1\bar{h}+\ldots+\beta_{l-1}\bar{h}^{l-1}}: \alpha_k,\beta_m\in\mathbb{F}_p, (\beta_0,\ldots,\beta_{l-1})\neq(0,\ldots,0)\right\} \\
\bar{h}_{33}\notin C&\triangleq&\left\{\frac{\alpha_0+\alpha_1\bar{h}+\ldots+\alpha_l\bar{h}^l}{\beta_0+\beta_1\bar{h}+\ldots+\beta_{l-1}\bar{h}^{l-1}}: \alpha_k,\beta_m\in\mathbb{F}_p, (\beta_0,\ldots,\beta_{l-1})\neq(0,\ldots,0)\right\} \\
&&\beta_l\bar{h}^l+\ldots+\beta_1\bar{h}+\beta_0 \neq 0 : \beta_0,\ldots,\beta_l\in\mathbb{F}_p, (\beta_0,\ldots,\beta_l)\neq(0,\ldots,0)\label{eq:aaa}
\end{eqnarray}
\end{lemma}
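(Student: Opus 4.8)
The plan is to adapt the Bresler--Tse alignment-depth argument to the field-extension structure, where the single normalized cross-channel $\bar h$ plays the role of the CJ $T$-matrix and the ambient diversity equals $n$. The first step is to pin down that diversity. Condition (\ref{eq:aaa}) states that $1,\bar h,\ldots,\bar h^{l}$ are linearly independent over $\mathbb{F}_p$, so the minimal polynomial of $\bar h$ over $\mathbb{F}_p$ has degree at least $l+1$. Since that degree must divide $n=2l+1$ and exceeds $\lfloor n/2\rfloor=l$, and since the only divisor of the odd number $2l+1$ exceeding $l$ is $2l+1$ itself, the degree must equal $n$. Hence $\{1,\bar h,\ldots,\bar h^{2l}\}$ is a basis of $\mathbb{F}_{p^n}$ over $\mathbb{F}_p$, so for any fixed $v\neq 0$ every direction $\bar h^{k}v$ with $0\le k\le 2l$ is a genuinely new dimension. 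This is the finite-field incarnation of ``diversity $n$''.

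Next I would make the alignment graph explicit. In the normalized channel every interfering term carries coefficient $1$ except source $2$ at receiver $3$, which carries $\bar h$; so the only one-to-one alignment relations are: a source-$2$ vector equals a source-$3$ vector at receiver $1$; a source-$1$ vector equals a source-$3$ vector at receiver $2$; and a source-$1$ vector equals $\bar h$ times a source-$2$ vector at receiver $3$. Crucially, these relations involve only $\bar h$, never the direct gains $\bar h_{kk}$. Following the links shows that any chain consists of directions that are consecutive powers of $\bar h$ on a common base vector $v$: the chain threads through the source-$1$ direction $\bar h^{k}v$, the source-$3$ direction $\bar h^{k}v$, the source-$2$ direction $\bar h^{k}v$, the source-$1$ direction $\bar h^{k+1}v$, and so on, the exponent advancing by one exactly at each source-$2$ to source-$1$ step through receiver $3$. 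A chain anchored from the source-$1$ direction $v$ to the source-$1$ direction $\bar h^{a}v$ therefore uses source-$1$ exponents $\{0,\ldots,a\}$ and source-$2$, source-$3$ exponents $\{0,\ldots,a-1\}$, giving $3a+1$ nodes.

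The crux is the per-receiver dimension count. At receiver $1$ the desired directions are $\{\bar h_{11}\bar h^{k}v:0\le k\le a\}$ and the aligned interference directions are $\{\bar h^{k}v:0\le k\le a-1\}$; because multiplication by $\bar h_{11}$ is an $\mathbb{F}_p$-linear bijection, the diversity computation above makes each set independent, so jointly they occupy $2a+1$ dimensions. The same count $2a+1$ recurs at receivers $2$ and $3$ after the two aligned interferers are merged. Resolvability forces $2a+1\le n=2l+1$, i.e.\ $a\le l$, whence every chain has at most $3l+1=D$ nodes. Conversely, if a chain reached depth exceeding $D$, so $a>l$ and $2a+1>n$, then those $2a+1$ desired-plus-interference directions from that one chain are linearly dependent; since the desired set and the interference set are each independent, the dependence must mix them, placing a desired direction inside the interference span and destroying resolvability. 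This is precisely the ``desired signal lies in the span of the interference'' failure invoked in the statement.

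The step I expect to be the main obstacle is the bookkeeping certifying that $2a+1\le n$ is the binding constraint for \emph{every} chain shape, not merely the symmetric source-$1$-anchored one. Chains that terminate on a source-$2$ or source-$3$ node, and configurations carrying several parallel chains into the same receiver, must be checked to confirm none can be stretched past $3l+1$ without pushing some receiver's desired-plus-interference dimension above $n$; one verifies that the source-$1$-anchored chain is genuinely the longest. It is worth emphasizing that only condition (\ref{eq:aaa}) is used here (to keep $1,\bar h,\ldots,\bar h^{2l}$ independent); the direct-link conditions on $\bar h_{11},\bar h_{22},\bar h_{33}$ are needed for achievability, not for this depth bound, and are carried in the hypothesis only to match the regime of Theorem \ref{thm_3user_odd}.
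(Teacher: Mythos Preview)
Your plan matches the paper's approach: both translate the Bresler--Tse alignment-chain argument to the $\mathbb{F}_{p^n}$ setting, tracing a chain along powers of $\bar h$ on a common seed vector and deriving a contradiction at a receiver once the chain exceeds $D$. Two differences are worth flagging. First, the paper's contradiction at receiver~$1$ proceeds by showing that $[\bar h_{11}V_1\;\;V_2]$ already spans the full $n$-dimensional $\mathbb{F}_p$-orbit $\mathbb{F}_{p^n}\cdot V$, and this step \emph{does} invoke $\bar h_{11}\notin A$ (and the analogous chains at the other receivers bring in $\bar h_{22}\notin B$, $\bar h_{33}\notin C$). Your pure dimension count---desired span has dimension $l+1$, interference span (after the extra vector) also $l+1$, ambient $2l+1$---avoids those hypotheses and is cleaner; your remark that only condition~(\ref{eq:aaa}) is needed for the depth bound is correct, though it is not how the paper argues. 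Second, your explicit analysis parametrizes chains as length $3a+1$ anchored at source~$1$ on both ends; the first forbidden length is $D+1=3l+2$, which is not of that form, and is exactly the case the paper works out explicitly (one extra source-$3$ vector $\gamma_{l+1}\bar h^{l}V$). You correctly flag this as the remaining bookkeeping, and your dimension argument extends to it: the extra source-$3$ vector raises the interference span at receiver~$1$ to $l+1$ dimensions, giving $(l+1)+(l+1)>2l+1$ and forcing overlap. One wording issue: ``jointly they occupy $2a+1$ dimensions'' does not follow from independence of each set alone; what you want is that resolvability makes the desired and interference spans intersect trivially, whence $(a+1)+a\le n$.
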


\begin{proof}
Let us consider the normalized channel as described in section \ref{interf_norm_sec} for odd $n=2l+1$, and at source 1, denote a vector of dimension $m\times 1$ as $V$ with entries from $\mathbb{F}_{p^n}$. Since this is a converse proof, we assume that the desired symbols can be decoded at all the destinations. Here $m$ denotes the number of symbol extensions of the channel. This vector of source 1 needs to be aligned with a vector from source 3 at destination 2, we can denote the vector at source 3 as $\gamma_1V$ with $\gamma_1 \in \mathbb{F}_p$.  Vector $\gamma_1V$ aligns with a vector from source 2 at destination 1, say $\beta_1V$ with $\beta_1 \in \mathbb{F}_p$. Vector $\beta_1V$ aligns with a vector from source 1 at destination 3, say $\alpha_1\bar{h}V$ with $\alpha_1 \in \mathbb{F}_p$. So far, alignment chain length can be seen to be 4, and such an alignment chain can be extended upto length D when operating in field of order $p^n$. With $n=2l+1$ this results in source 1 using $l+1$ vectors, and sources 2 and 3 using $l$ vectors each such that the alignment chain length is $D=3l+1$. 
Then the vectors chosen so far at the 3 sources can be represented as
\begin{eqnarray} 
V_1 = [\alpha_{l}\bar{h}^{l}V \hspace{5mm} \alpha_{l-1}\bar{h}^{l-1}V \hspace{5mm} \ldots \hspace{5mm} \alpha_{1}\bar{h}V \hspace{5mm} V] \\
V_2 = [\beta_{l}\bar{h}^{l-1}V \hspace{5mm} \beta_{l-1}\bar{h}^{l-2}V \hspace{5mm} \ldots \hspace{5mm} \beta_{2}\bar{h}V \hspace{5mm} \beta_{1}V] \\
V_3 = [\gamma_{l}\bar{h}^{l-1}V \hspace{5mm} \gamma_{l-1}\bar{h}^{l-2}V \hspace{5mm} \ldots \hspace{5mm} \gamma_{2}\bar{h}V \hspace{5mm} \gamma_{1}V]
\end{eqnarray}
wherein $V$ is an $m \times 1$ vector with entries from $\mathbb{F}_{p^n}$ and $\alpha_i,\beta_i,\gamma_i \in \mathbb{F}_p$, $\forall i \in \{1,\ldots,l\}$. 
We will now argue that alignment chain length cannot be extended beyond $D$. Suppose on the contrary, alignment chain length was greater than $D$, say $D+1$. Then without loss of generality, we can choose additional vector at source 3 such that at destination 2, it aligns with the vector $\alpha_l\bar{h}^lV$ used at source 1. This additional vector at source 3 can be represented as $\gamma_{l+1}\bar{h}^lV$. Then the vectors sent by source 3 can be represented as
\begin{eqnarray} 
\bar{V}_3 = [\gamma_{l+1}\bar{h}^lV \hspace{5mm} \gamma_{l}\bar{h}^{l-1}V \hspace{5mm} \gamma_{l-1}\bar{h}^{l-2}V \hspace{5mm} \ldots \hspace{5mm} \gamma_{2}\bar{h}V \hspace{5mm} \gamma_{1}V]
\end{eqnarray}

Let us consider the signal space at destination 1, $S_1 = [\bar{h}_{11}V_1 \hspace{3mm} V_2 \hspace{3mm} \bar{V}_3]$. Since $l$ vectors from source 3 align with $l$ vectors from source 2, we can denote the signal space as $S_1 = [\bar{h}_{11}V_1 \hspace{3mm} V_2 \hspace{3mm} \gamma_{l+1}\bar{h}^lV]$. Now we claim that $\bar{h}_{11}V_1$ and $V_2$ spans the channel space, since all vectors are linearly independent. 
\begin{eqnarray} \nonumber
[\bar{h}_{11}V_1 \hspace{2mm} V_2] = [\alpha_{l}\bar{h}_{11}\bar{h}^{l}V \hspace{2mm} \alpha_{l-1}\bar{h}_{11}\bar{h}^{l-1}V \hspace{2mm} \ldots \hspace{2mm} \alpha_{1}\bar{h}_{11}\bar{h}V \hspace{2mm} \bar{h}_{11}V \hspace{2mm} \beta_{l}\bar{h}^{l-1}V \hspace{2mm} \beta_{l-1}\bar{h}^{l-2}V \hspace{2mm} \ldots \hspace{2mm} \beta_{2}\bar{h}V \hspace{2mm} \beta_{1}V] 
\end{eqnarray}
It can be noted that columns of above matrix are linearly independent when all entries listed below are linearly independent, since $V$ is scaled by different powers of $\bar{h}, \bar{h}_{11}$ and other coefficients.
\begin{eqnarray} \nonumber
[\alpha_{l}\bar{h}_{11}\bar{h}^{l} \hspace{2mm} \alpha_{l-1}\bar{h}_{11}\bar{h}^{l-1} \hspace{2mm} \ldots \hspace{2mm} \alpha_{1}\bar{h}_{11}\bar{h} \hspace{2mm} \bar{h}_{11} \hspace{2mm} \beta_{l}\bar{h}^{l-1} \hspace{2mm} \beta_{l-1}\bar{h}^{l-2} \hspace{2mm} \ldots \hspace{2mm} \beta_{2}\bar{h} \hspace{2mm} \beta_{1}] 
\end{eqnarray} 
This is  true when following conditions on $\bar{h},\bar{h}_{11}$ are met. 
\begin{eqnarray}
\bar{h}_{11}\notin A&\triangleq&\left\{\frac{\alpha_0+\alpha_1\bar{h}+\ldots+\alpha_{l-1}\bar{h}^{l-1}}{\beta_0+\beta_1\bar{h}+\ldots+\beta_l\bar{h}^l}: \alpha_k,\beta_m\in\mathbb{F}_p, (\beta_0,\ldots,\beta_l)\neq(0,\ldots,0)\right\} \\
&&\beta_l\bar{h}^l+\ldots+\beta_1\bar{h}+\beta_0 \neq 0 : \beta_0,\ldots,\beta_l\in\mathbb{F}_p, (\beta_0,\ldots,\beta_l)\neq(0,\ldots,0)
\end{eqnarray}
Since $n=2l+1$ columns of $[\bar{h}_{11}V_1 \hspace{2mm} V_2]$ are linearly independent,  additional vector chosen $\gamma_{l+1}\bar{h}^lV$ must lie in span of $[\bar{h}_{11}V_1 \hspace{2mm} V_2]$.
It cannot lie in the space spanned by $V_2$ because that would contradict (\ref{eq:aaa}). But if it does not lie in the space spanned by $V_2$ then the desired signal space $\bar{h}_{11}V_1$ cannot be resolvable from interference. This is a contradiction, since in the converse we assume that the desired signal is resolvable from interference. Therefore additional vector $\gamma_{l+1}\bar{h}^lV$ cannot be chosen at source 3 such that it aligns at destination 1, i.e., alignment depth cannot be greater than $D=3l+1$. This is illustrated in Fig. \ref{fig:align_depth}. Similarly alignment chains originating at other sources and ending at other destinations can be shown to be of depth not greater than D. Consolidating the linear independence conditions for all such chains, we note that alignment depth is at most D for channels satisfying following conditions.
\begin{eqnarray}
\bar{h}_{11}\notin A&\triangleq&\left\{\frac{\alpha_0+\alpha_1\bar{h}+\ldots+\alpha_{l-1}\bar{h}^{l-1}}{\beta_0+\beta_1\bar{h}+\ldots+\beta_l\bar{h}^l}: \alpha_k,\beta_m\in\mathbb{F}_p, (\beta_0,\ldots,\beta_l)\neq(0,\ldots,0)\right\} \\
\bar{h}_{22}\notin B&\triangleq&\left\{\frac{\alpha_0+\alpha_1\bar{h}+\ldots+\alpha_l\bar{h}^l}{\beta_0+\beta_1\bar{h}+\ldots+\beta_{l-1}\bar{h}^{l-1}}: \alpha_k,\beta_m\in\mathbb{F}_p, (\beta_0,\ldots,\beta_{l-1})\neq(0,\ldots,0)\right\} \\
\bar{h}_{33}\notin C&\triangleq&\left\{\frac{\alpha_0+\alpha_1\bar{h}+\ldots+\alpha_l\bar{h}^l}{\beta_0+\beta_1\bar{h}+\ldots+\beta_{l-1}\bar{h}^{l-1}}: \alpha_k,\beta_m\in\mathbb{F}_p, (\beta_0,\ldots,\beta_{l-1})\neq(0,\ldots,0)\right\} \\
&&\beta_l\bar{h}^l+\ldots+\beta_1\bar{h}+\beta_0 \neq 0 : \beta_0,\ldots,\beta_l\in\mathbb{F}_p, (\beta_0,\ldots,\beta_l)\neq(0,\ldots,0)
\end{eqnarray}
Thus, we have proved Lemma \ref{outer_lemma1}.
\end{proof}

\begin{figure}[h]
\begin{center}
\includegraphics[scale=0.7]{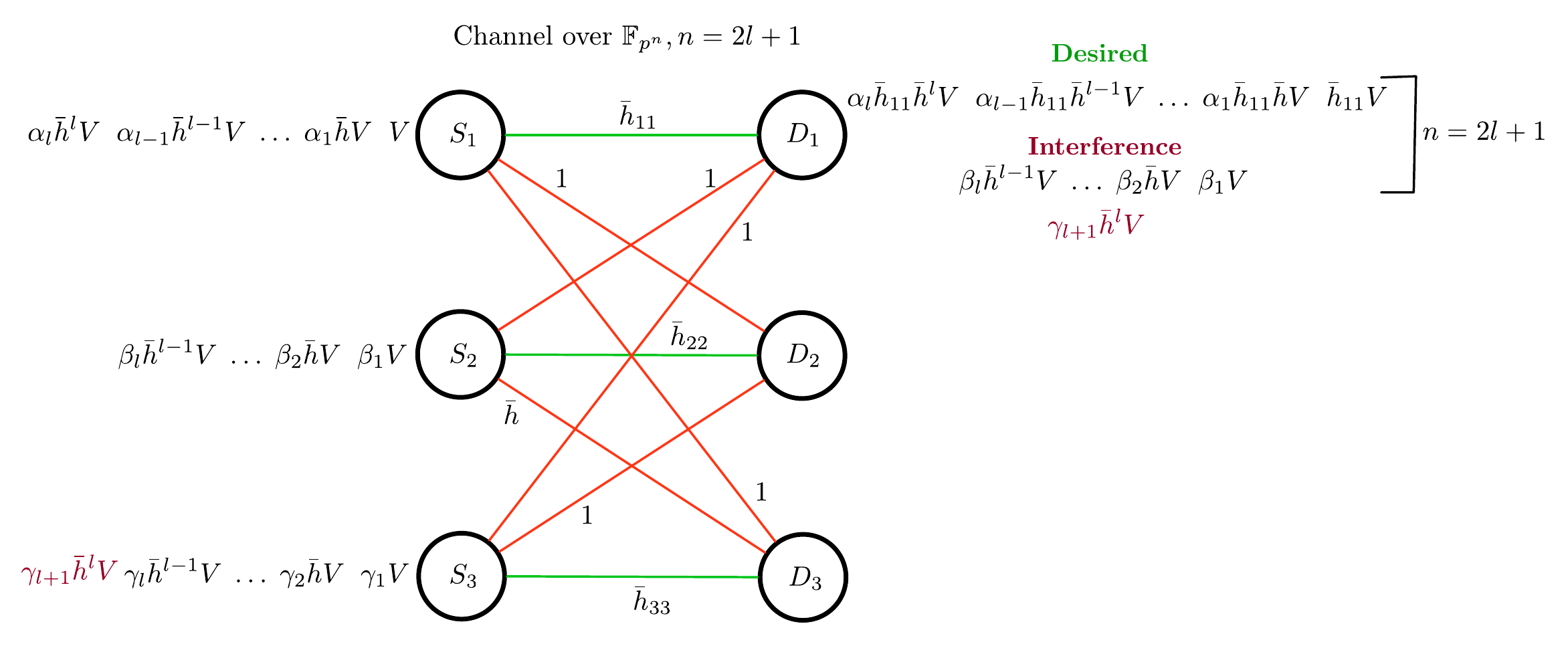}
\end{center}
\caption{Alignment depth in 3-user Interference channel}\label{fig:align_depth}
\end{figure}

We now show the outer bound on linear-scheme capacity for 3-user interference channel to be $\frac{3D}{2D+1}$. The proof of this part is almost identical to that in \cite{Bresler_Tse_Diversity}, so it is summarized  only for the sake of completeness.
\begin{theorem}\label{theorem:int_outer}
For the 3-user interference channel over $\mathbb{F}_{p^n}$,  outer bound on linear-scheme capacity is given by $\frac{3D}{2D+1}$, with $D = 2n-\lfloor\frac{n}{2}\rfloor-1$ for odd $n=2l+1$ wherein channels satisfy the following conditions
\begin{eqnarray}
\bar{h}_{11}\notin A&\triangleq&\left\{\frac{\alpha_0+\alpha_1\bar{h}+\ldots+\alpha_{l-1}\bar{h}^{l-1}}{\beta_0+\beta_1\bar{h}+\ldots+\beta_l\bar{h}^l}: \alpha_k,\beta_m\in\mathbb{F}_p, (\beta_0,\ldots,\beta_l)\neq(0,\ldots,0)\right\} \\
\bar{h}_{22}\notin B&\triangleq&\left\{\frac{\alpha_0+\alpha_1\bar{h}+\ldots+\alpha_l\bar{h}^l}{\beta_0+\beta_1\bar{h}+\ldots+\beta_{l-1}\bar{h}^{l-1}}: \alpha_k,\beta_m\in\mathbb{F}_p, (\beta_0,\ldots,\beta_{l-1})\neq(0,\ldots,0)\right\} \\
\bar{h}_{33}\notin C&\triangleq&\left\{\frac{\alpha_0+\alpha_1\bar{h}+\ldots+\alpha_l\bar{h}^l}{\beta_0+\beta_1\bar{h}+\ldots+\beta_{l-1}\bar{h}^{l-1}}: \alpha_k,\beta_m\in\mathbb{F}_p, (\beta_0,\ldots,\beta_{l-1})\neq(0,\ldots,0)\right\} \\
&&\beta_l\bar{h}^l+\ldots+\beta_1\bar{h}+\beta_0 \neq 0 : \beta_0,\ldots,\beta_l\in\mathbb{F}_p, (\beta_0,\ldots,\beta_l)\neq(0,\ldots,0)
\end{eqnarray}
\end{theorem}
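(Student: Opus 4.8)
The plan is to reduce the capacity statement to a pure dimension-counting problem and then feed in the alignment-depth bound of Lemma \ref{outer_lemma1}. First I would fix an arbitrary vector linear scheme operating over $m$ symbol extensions of the normalized channel, so that each destination observes an $M$-dimensional space over $\mathbb{F}_p$ with $M=mn$, and let $s_k=\dim V_k$ denote the number of $\mathbb{F}_p$ streams carried by the beamforming subspace $V_k$ of source $k$. Since $2l+1$ symbols of $\mathbb{F}_p$ constitute one symbol of $\mathbb{F}_{p^n}$, the sum-rate of this scheme, measured in $\mathbb{F}_{p^n}$ symbols per channel use, is exactly $(s_1+s_2+s_3)/M$. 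Hence it suffices to establish the purely linear-algebraic bound $s_1+s_2+s_3\le \frac{3D}{2D+1}M$ for every collection of decodable beamforming subspaces, which will then translate directly into $C_{\mbox{\tiny linear}}\le\frac{3D}{2D+1}$.

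Next I would write down the resolvability constraints. At each destination the desired subspace must meet the interference subspace trivially, so the desired dimension plus the interference dimension cannot exceed $M$; for the normalized channel this reads $s_1+\dim(V_2+V_3)\le M$ at destination 1, $s_2+\dim(V_1+V_3)\le M$ at destination 2, and $s_3+\dim(V_1+\bar hV_2)\le M$ at destination 3. Writing each interference dimension as the sum of the two interfering dimensions minus their overlap, the only mechanism that can push $s_1+s_2+s_3$ above $M$ at a given receiver is the creation of alignment, i.e.\ nontrivial overlap between the two (channel-transformed) interfering subspaces. Because the channel multiplies by powers of $\bar h$ as one passes from one receiver to the next, an overlap created at one destination forces an overlap at the next, and iterating this produces precisely the alignment chains analyzed in Lemma \ref{outer_lemma1}.

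The heart of the argument, following \cite{Bresler_Tse_Diversity}, is then to convert the depth bound into a bound on the total aligned dimension. Invoking Lemma \ref{outer_lemma1}, every chain of one-to-one alignments has length at most $D=2n-\lfloor n/2\rfloor-1=3l+1$, since extending it to length $D+1$ would force a desired vector into the interference span and contradict decodability. A counting argument over the maximal alignment chains --- each visiting the three transmit subspaces cyclically and contributing only a bounded amount of overlap before it must terminate --- shows that the three resolvability inequalities cannot be simultaneously slack by more than the fraction permitted by chains of length at most $D$. Aggregating this over all chains and all three receivers yields $s_1+s_2+s_3\le\frac{3D}{2D+1}M$, and hence $C_{\mbox{\tiny linear}}\le\frac{3D}{2D+1}=\frac{3l+1}{2l+1}$, matching the achievable value proved in Section \ref{sec:pn} via the scheme of Fig.\ \ref{fig:int_chan_pl}.

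I expect the chain-counting step to be the main obstacle: translating the single-chain length bound of Lemma \ref{outer_lemma1} into the exact aggregate fraction $\frac{3D}{2D+1}$ requires the full Bresler--Tse bookkeeping across all chains and all three receivers at once, rather than a naive sum of the three resolvability inequalities. The one point that genuinely needs checking in the finite-field setting is that the structured, commutative channel does not manufacture spurious additional alignments: the hypotheses of the theorem --- that $\bar h_{kk}$ avoid the stated sets of rational functions of $\bar h$, and that no nonzero polynomial of degree at most $l$ in $\bar h$ vanishes --- are exactly what guarantee that successive links of a chain stay linearly independent up to length $D$. Consequently the combinatorics reduce to the generic wireless case of \cite{Bresler_Tse_Diversity}, and the outer bound carries over essentially verbatim.
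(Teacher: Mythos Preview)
Your proposal is correct and follows essentially the same route as the paper's proof: both set up the resolvability constraints at each destination, invoke Lemma~\ref{outer_lemma1} to bound the alignment depth by $D$, and then defer to the Bresler--Tse dimension-counting machinery (the filtration $d_{i\uparrow k}=\dim V_{i\uparrow k}$, the recursion $c_k\ge 2c_{k-1}-c_{k-2}$ for $c_k=\sum_i d_{i\uparrow k}$, and the resulting inequality $(D-1)c_0\ge Dc_1$) to convert the depth bound into $\sum_k d_k/N\le \frac{3D}{2D+1}$. The paper explicitly remarks that this part is ``almost identical to that in \cite{Bresler_Tse_Diversity}'' and is ``summarized only for the sake of completeness,'' which is precisely the bookkeeping step you flagged as the main obstacle.
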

\begin{proof}
Let $V_{i\uparrow k}$ denote the signal space of user $i$ (part of $V_i$) aligned to depth $k+1$ and $d_i = \dim(V_i), d_{i\uparrow k} = \dim(V_{i\uparrow k})$. Lemma 8 of \cite{Bresler_Tse_Diversity} follows since we have finite dimensional subspaces, i.e., $d_{i\uparrow k} \geq d_{i\uparrow k+a}+d_{i-b\uparrow k+b}-d_{i-b\uparrow k+a+b}$. For $a=-1,b=-1$, we have
\begin{eqnarray} \label{outer_eq1}
d_{i\uparrow k} \geq d_{i\uparrow k-1}+d_{i+1\uparrow k-1}-d_{i+1\uparrow k-2}
\end{eqnarray}
Since alignment depth is at most D (Lemma \ref{outer_lemma1}), $V_{i\uparrow D}=\{0\}$ for each i, and so similar to lemma 9 of \cite{Bresler_Tse_Diversity}, we have
\begin{eqnarray} \label{outer_eq2}
d_i \geq d_{i-1\uparrow 1} + d_{i\uparrow D-1}
\end{eqnarray}
Let us denote $c_k = \sum_{i=1}^{3} d_{i\uparrow k}$.  Then using \ref{outer_eq1}, we have $c_k \geq 2c_{k-1}-c_{k-2}$. Using induction, it can be deduced that $c_k \geq ic_{k-i+1}-(i-1)c_{k-i}$. For $i=k=D-1$, we have
\begin{eqnarray} \label{outer_eq3}
(D-2)c_0 \geq (D-1)c_{1}-c_{D-1}
\end{eqnarray}
Using \ref{outer_eq2}, it can be shown that $c_0 \geq c_1 + c_{D-1}$. Combining with \ref{outer_eq3}, we have $(D-1)c_0 \geq Dc_1$. Let total dimension at each destination be denoted by $N = mn$ where $m$ symbol extensions of the channel is considered with channels from $\mathbb{F}_{p^n}$.
Since interference span must be linearly independent of desired signal, and considering N dimensions at destination 1, we have
\begin{eqnarray} \label{outer_eq4}
\text{Destination 1:}& \hspace{5mm} \dim(\bar{h}_{11}V_1 + V_2 + V_3)  = &d_1 + d_2 + d_3 - d_{2\uparrow 1} \leq N \\
\text{Destination 2:}& \hspace{5mm} \dim(V_1 + \bar{h}_{22}V_2 + V_3)  = &d_1 + d_2 + d_3 - d_{3\uparrow 1} \leq N \\
\text{Destination 3:} &\hspace{5mm} \dim(V_1 + \bar{h}V_2 + \bar{h}_{33}V_3)  =  &d_1 + d_2 + d_3 - d_{1\uparrow 1} \leq N
\end{eqnarray}
Adding above inequalities and using $(D-1)c_0 \geq Dc_1$, we can deduce as in \cite{Bresler_Tse_Diversity} that 
\begin{eqnarray} 
\frac{d_1 + d_2 + d_3}{N} \leq \frac{3D}{2D+1}
\end{eqnarray}
Thus we have proved the outer bound on linear-scheme capacity for 3-user interference channel over $\mathbb{F}_{p^n}$ with channels satisfying aforementioned linear independence constraints. $\hfill\QED$
\end{proof}

\section{Conclusion}\label{section:conclusion} 
Linear capacity results are explored for the X channel and the 3 user interference channel over the finite field $\mathbb{F}_{p^n}$, by translating precoding based interference alignment schemes from corresponding DoF results for the wireless setting. The main insight is that the finite field $\mathbb{F}_{p^n}$ can be viewed as analogous to diagonal $n\times n$ wireless channels with diversity $n$.  This insight appears to be broadly true for linear precoding based schemes. While the linear capacity is fully characterized, the information theoretic capacity remains open for finite field networks over $\mathbb{F}_p$, i.e., for $n=1$, where diversity is only 1. We expect that signal level alignment schemes and combinatorial outer bound arguments such as those presented in \cite{Etkin_Ordentlich} should be useful in these cases.

\section{Appendix}\label{section:appendix} 

\subsection{Appendix I - X channel over $\mathbb{F}_{p^3}$ : Alternate proof}

Here we discuss an alternate proof for achievability of sum rate of $\frac{4}{3}$ for 2-user X channel over $\mathbb{F}_{p^3}$. Let us first state a lemma. \\

\noindent
\emph{Definition 1:} Let $\mathbb{F}_{p^n}$ be the field extension of $\mathbb{F}_{p}$, and $p(x)$ the ring of polynomials in $x$ over $\mathbb{F}_{p}$. The minimal polynomial of $h \in \mathbb{F}_{p^n}$ is the monic polynomial of least degree among all polynomials such that $p(h) = 0$.

\begin{lemma}\label{lemma:prime_lemma}
For $h \in \mathbb{F}_{p^n}$ with both $p, n$ being prime, $1, h, h^2, \ldots, h^{n-1}$ are linearly independent over $\mathbb{F}_p$ if and only if $h \notin \mathbb{F}_p$
\begin{eqnarray}
\alpha_0 + \alpha_1h + \alpha_2h^2 + \cdots + \alpha_{n-1}h^{n-1} \neq 0  \iff h \notin \mathbb{F}_p
\end{eqnarray}
wherein $\alpha_k \in \mathbb{F}_p, k \in \{0,1,\ldots,n-1\}$. 
\end{lemma}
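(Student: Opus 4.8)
The plan is to reduce the statement to two standard facts from field theory: the correspondence between $\mathbb{F}_p$-linear relations among the powers of $h$ and the minimal polynomial of $h$, and the multiplicativity of degrees in a tower of field extensions. First I would let $m(x)$ denote the minimal polynomial of $h$ over $\mathbb{F}_p$ (Definition 1) and set $d = \deg m$. The key reformulation is that $1, h, h^2, \ldots, h^{n-1}$ are $\mathbb{F}_p$-linearly independent if and only if $d = n$: any nontrivial relation $\sum_{i=0}^{n-1}\alpha_i h^i = 0$ is a nonzero polynomial of degree at most $n-1$ vanishing at $h$, which exists precisely when $d \leq n-1$, so independence is equivalent to $d \geq n$, and since $h \in \mathbb{F}_{p^n}$ forces $d \leq n$, this is $d = n$. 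I would state this equivalence explicitly rather than take it for granted.

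Next I would bring in the tower $\mathbb{F}_p \subseteq \mathbb{F}_p(h) \subseteq \mathbb{F}_{p^n}$. Since $\mathbb{F}_p(h) \cong \mathbb{F}_p[x]/(m(x))$ has degree $d$ over $\mathbb{F}_p$ and is a subfield of $\mathbb{F}_{p^n}$, multiplicativity of degrees gives that $d = [\mathbb{F}_p(h):\mathbb{F}_p]$ divides $[\mathbb{F}_{p^n}:\mathbb{F}_p] = n$. This is exactly where the primality hypothesis on $n$ is used: the only divisors of a prime $n$ are $1$ and $n$, so $d \in \{1, n\}$ with no intermediate possibility. (The primality of $p$ is used only to ensure that $\mathbb{F}_p$ is the prime subfield, so that $\mathbb{F}_{p^n}$ has degree exactly $n$ over it; the essential arithmetic hypothesis is that $n$ is prime.)

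With these in hand both directions are immediate. If $h \notin \mathbb{F}_p$ then $d \neq 1$, since $d = 1$ would mean $m(x) = x - h$ with $h \in \mathbb{F}_p$; hence $d = n$, and by the reformulation the powers $1, h, \ldots, h^{n-1}$ are linearly independent. Conversely I would argue the contrapositive: if $h \in \mathbb{F}_p$, then $1$ and $h$ are already $\mathbb{F}_p$-dependent (as $h = h\cdot 1$ with $h \in \mathbb{F}_p$), so for $n \geq 2$ the list $1, h, \ldots, h^{n-1}$ is dependent. This yields the displayed equivalence $\alpha_0 + \alpha_1 h + \cdots + \alpha_{n-1}h^{n-1} \neq 0$ for all nonzero coefficient tuples if and only if $h \notin \mathbb{F}_p$.

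I do not expect a substantive obstacle here; the content is entirely the divisibility $d \mid n$ combined with $n$ prime. The only point requiring genuine care is the first step, namely justifying cleanly that $\mathbb{F}_p$-linear independence of the power list is equivalent to $\deg m = n$, since the rest of the argument hinges on translating the independence question into a statement about the minimal polynomial degree.
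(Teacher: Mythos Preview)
Your proposal is correct and follows essentially the same approach as the paper: both arguments identify the minimal polynomial of $h$, use that its degree divides $n$ (the paper cites this as Theorem~3.33 of \cite{FiniteFields}, while you derive it via the tower $\mathbb{F}_p \subseteq \mathbb{F}_p(h) \subseteq \mathbb{F}_{p^n}$ and multiplicativity of degrees), and then invoke primality of $n$ to force the degree to be $1$ or $n$. Your version is slightly more explicit about the equivalence between linear independence of the power list and $\deg m = n$, and your observation that primality of $p$ is inessential (only primality of $n$ matters) is a nice sharpening.
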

\begin{proof}
When $1, h, h^2, \ldots, h^{n-1}$ are linearly independent over $\mathbb{F}_p$, it is trivial to note that $h \notin \mathbb{F}_p$.
For the other direction, let us consider $h \notin \mathbb{F}_p$. Suppose on the contrary, $1, h, h^2, \ldots, h^{n-1}$ were linearly dependent over $\mathbb{F}_p$, then there exist $\alpha_k \in \mathbb{F}_p, k \in \{0,1,\ldots,n-1\}$ such that
\begin{eqnarray}
\alpha_0 + \alpha_1h + \alpha_2h^2 + \cdots + \alpha_{n-1}h^{n-1} = 0 \label{minimal_poly}
\end{eqnarray}
If \eqref{minimal_poly} holds, one can identify a monic irreducible polynomial of degree $k \leq n-1$, which is then the minimal polynomial of $h \in \mathbb{F}_{p^n}$ according to definition 1. From Theorem 3.33 of \cite{FiniteFields}, we note that degree of minimal polynomial of an element $h \in \mathbb{F}_{p^n}$ (in our case, degree is $k$), divides $n$. As a result, since we consider only prime $n$, $k>1$ is not possible. $k$ cannot be 1 since $h \notin \mathbb{F}_p$. Hence \eqref{minimal_poly} is a contradiction, and so $1, h, h^2, \ldots, h^{n-1}$ are linearly independent over $\mathbb{F}_p$ when $h \notin \mathbb{F}_p$. $\hfill\QED$
\end{proof} 
\\

\noindent
\emph{Achievability for Theorem \ref{theorem:x_n} over $\mathbb{F}_{p^3}$:} \\
For the fully connected X channel over $\mathbb{F}_{p^3}$ if $h = \frac{h_{12}h_{21}}{h_{11}h_{22}} \notin \mathbb{F}_p$,
then $C=C_{\mbox{\tiny linear}}=\frac{4}{3}$. in units of $\mathbb{F}_{p^3}$ symbols per channel use. \\

Alternate 
\begin{proof}
Like in section \ref{sec:p3}, for the 2-user X channel, the received symbols after precoding using beamforming vectors $v_{ji} \in \mathbb{F}_{p^3}$ for input symbols $x_{ji} \in \mathbb{F}_p$, are expressed as
\begin{eqnarray*}
{ y}_1&=&{v_{11}{x}_{11}}+{v_{12}{x}_{12}}+{v_{22}{x}_{22}}+{v_{21}{x}_{21}} \\
{ y}_2&=&{v_{22}{x}_{22}}+{ {h}}{v_{21}{x}_{21}}+{ {h}}{v_{11}{x}_{11}}+{v_{12}{x}_{12}}
\end{eqnarray*}
wherein $h, y_j \in \mathbb{F}_{p^3}$.
Interference is aligned at each destination along one dimension by setting ${{v}_{22}} = {{v}_{21}}$ and ${{v}_{12}} = { h}{{ v}_{11}}$. At the destinations, signal spaces are represented using matrices $S_1$ and $S_2$.
\begin{eqnarray}
S_1 = [{ v}_{11} \hspace{3mm} { v}_{12} \hspace{3mm} { v}_{21}] = [{ v}_{11} \hspace{3mm} {h}}{{ v}_{11} \hspace{3mm} { v}_{21}] \\
S_2 = [{ v}_{22} \hspace{3mm} {{h}}{ v}_{21} \hspace{3mm} {v}_{12}] = [{v}_{21} \hspace{3mm} {{h}}{v}_{21} \hspace{3mm} {h}}{{ v}_{11} ]
\end{eqnarray}

\indent  Let us choose $v_{21}=1, v_{11}=h$.  Then $S_1$, $S_2$ are identical, given by
\begin{eqnarray}
S_1 = S_2 = [1 \hspace{3mm} h \hspace{3mm} {h^2}]
\end{eqnarray}
Using Lemma \ref{lemma:prime_lemma}, it follows that for all $h \in \mathbb{F}_{p^3}$, field elements $1, h, h^2$ are linearly independent over $\mathbb{F}_p$ if $h \notin \mathbb{F}_p$. Hence, desired and interfering symbols are linearly independent over $\mathbb{F}_p$ when $h \notin \mathbb{F}_p$.

Thus, we have proved the achievability of rate $\frac{1}{3}$ per message, and a sum-rate of $\frac{4}{3}$, which matches the capacity outer bound.  $\hfill\QED$
\end{proof}

\subsection{Appendix II - X Channel over $\mathbb{F}_{p^2}$}
$\mathbb{F}_{p^2}$  can be viewed as a 2-dimensional vector space over subfield $\mathbb{F}_{p}$, much like the field of complex numbers can be viewed as a 2-dimensional vector space over reals ($\mathbb{R}$), which is also the essential idea behind the asymmetric complex signaling scheme used in \cite{Cadambe_Jafar_Wang} to achieve $4/3$ DoF for the constant SISO wireless X channel with complex coefficients. We can represent each element of $\mathbb{F}_{p^2}$ as
\begin{eqnarray}
z = x+y\sqrt{c} \hspace{3mm} \text{or}\hspace{3mm} x+ys
\end{eqnarray}
wherein $z \in \mathbb{F}_{p^2}$, $x,y \in \mathbb{F}_{p}$ and $c$ is a quadratic non-residue (an element that does not have a square root in $\mathbb{F}_{p}$) similar to $-1$ (which does not have a square root over reals) in the field of  complex numbers. ($s=\sqrt{c}\equiv j$). 

For example, consider $\mathbb{F}_{3^2}$ with prime subfield $\mathbb{F}_{3}$ which has $c=-1$(mod 3) $=2$ as the quadratic non-residue, since $\sqrt{2}$ does not exist in $\mathbb{F}_{3}$.    
Field $\mathbb{F}_{3^2}$ contains $9$ elements and every element $a_1s+a_0$ can be written in a vector notation with coefficients $[a_1;a_0]$ wherein $a_1,a_0 \in \mathbb{F}_{3} =\{0,1,2\}$ and assigned a scalar integer label $\{0,1,\ldots,8\}$ as $3a_1+a_o$. For example, the field element labeled $a=7$ can be represented as $[2~;~1]$ in vector notation, as $2s+1$ in polynomial notation, or as $2\sqrt{2}+1$ in the quadratic non-residue notation. \\
Here,  product with ${h}$ can be represented using a $2\times 2$ linear transformation (MIMO equivalent). Let ${h} = h_1s+h_0, \hspace{3mm} {x} =  x_1s+x_0$ and $h_i,x_i \in \mathbb{F}_{3}$. Then the product ${y}={hx} \in \mathbb{F}_{3^2}$ can be written as
\begin{eqnarray}
{y=hx} = (h_1s+h_0)(x_1s+x_0) =
s^2(h_1x_1)+s(h_1x_0+h_0x_1)+(h_0x_0)
\end{eqnarray}
and in vector notation as
\begin{eqnarray}
\mathbf{{y}} = \mathbf{{H}{x}} = 
\begin{bmatrix}
h_0 & 2h_1  \\
h_1 & h_0
\end{bmatrix}
\begin{bmatrix}
x_1 \\
x_0
\end{bmatrix}
\end{eqnarray}
wherein $\mathbf{{x}} \in \mathbb{F}_{3}^{2\times 1}$ and $\mathbf{{H}} \in \mathbb{F}_{3}^{2\times 2}$. 
It can be noted that above $2\times 2$ linear transformation is equivalent to complex multiplication and stacking the resulting real and imaginary parts in a $2 \times 1$ vector. Note that $\mathbb{F}_{2}$ is a special case because there is no quadratic non-residue, where the scheme is equivalent to having a $2 \times 2$ MIMO channel, but not to asymmetric complex signaling. \\

\noindent
\textbf{Achievability proof for X-channel over $\mathbb{F}_{p^2}$} \\
\begin{proof} 
Now we prove that sum rate of $\frac{4}{3}$ is achievable (part of Theorem \ref{theorem:x_n} proof) for 2-user X-channel over $\mathbb{F}_{p^2}.$
We consider the X channel with $3$ symbol extensions, wherein we can represent the channel between source $i$ and destination $j$ as $H_{ji}=h_{ji}I_3$ where $I_3$ is the $3 \times 3$ identity matrix and $h_{ji}$ is the scalar channel coefficient from $\mathbb{F}_{p^2}$. 
The inputs $x_{ji}$ are chosen from $\mathbb{F}_{p}$ and outputs $y_j$ over $\mathbb{F}_{p^2}$ and three channel uses can be seen as a $6$ dimensional vector space over $\mathbb{F}_{p}$ within which $4$ desired symbols and $4$ interference symbols are present at each destination. In order to achieve  capacity, interference should be aligned within 2 dimensions at each destination. To this end, we will construct beamforming vectors at each source such that interference is aligned. Received symbols at the destinations, in vector notation, are given by
\begin{eqnarray*}
{\bf  Y}_1&=&{{\bf V}_{11}{X}_{11}}+{{\bf V}_{12}{X}_{12}}+{{\bf V}_{22}{X}_{22}}+{{\bf V}_{21}{X}_{21}} \\
{ \bf Y}_2&=&{{\bf V}_{22}{X}_{22}}+{ \mathbf{\bar H}}{{\bf V}_{21}{X}_{21}}+{ \mathbf{\bar H}}{{\bf V}_{11}{X}_{11}}+{{\bf V}_{12}{X}_{12}}
\end{eqnarray*}
Here ${\bf Y}_j \in \mathbb{F}_{p}^{6\times 1}, {\bf V}_{ji} \in \mathbb{F}_{p}^{6\times 2}$, and $X_{ji} \in \mathbb{F}_{p}^{2\times 1}$ represents the symbols sent by source $i$ for destination $j$.  $\bar{\mathbf{H}} \in \mathbb{F}_{p}^{6\times 6}$ is the linear transformation which is equivalent to multiplication by $h \in \mathbb{F}_{p^2}$.
Over 3 symbol extensions of the channel, linear transformation $\bar{\mathbf{H}}$ for $p>2$, is given by
\begin{eqnarray}
\bar{\mathbf{H}} =
\begin{bmatrix}
h_0 & 0 & 0 & ch_1 & 0 & 0 \\
0 & h_0 & 0 & 0 & ch_1 & 0 \\
0 & 0 & h_0 & 0 & 0 & ch_1 \\
h_1 & 0 & 0 & h_0 & 0 & 0 \\
0 & h_1 & 0 & 0 & h_0 & 0 \\
0 & 0 & h_1 & 0 & 0 & h_0
\end{bmatrix}
\end{eqnarray}
\\
\indent Note that above matrix is the 3-symbol extension of the linear transformation $H = [h_0 \hspace{2mm} ch_1; h_1 \hspace{2mm} h_0]$. Here, $c$ is the quadratic non-residue which exists for all $p>2$. In order to achieve sum rate of $\frac{4}{3}$, interference should be aligned at both destinations:
\begin{eqnarray}
\mbox{span}({{\bf V}_{22}}) \equiv \mbox{span}({{\bf V}_{21}}) \hspace{3mm} \& \hspace{3mm} 
\mbox{span}({{\bf V}_{12}}) \equiv \mbox{span}({ \bar{\mathbf{H}}}{{\bf V}_{11}}) 
\end{eqnarray}
For every choice of ${\bf V}_{21}, {\bf V}_{11}$, we set 
\begin{eqnarray}
{{\bf V}_{22}} = {{\bf V}_{21}} \hspace{6mm} \& \hspace{6mm} 
{{\bf V}_{12}} = { \bar{\mathbf{H}}}{{\bf V}_{11}}
\end{eqnarray}
At each destination, the two desired signal vectors and the aligned interference vector can be represented using $6\times 6$ matrices, $S_1$ and $S_2$.
\begin{eqnarray}
S_1 = [{\bf V}_{11} \hspace{3mm} {\bf V}_{12} \hspace{3mm} {\bf V}_{21}] = [{\bf V}_{11} \hspace{3mm} \bar{\mathbf{H}}}{{\bf V}_{11} \hspace{3mm} {\bf V}_{21}] \\
S_2 = [{\bf V}_{22} \hspace{3mm} \bar{\mathbf{H}}{\bf V}_{21} \hspace{3mm} {\bf V}_{12}] = [{\bf V}_{21} \hspace{3mm} \bar{\mathbf{H}}{\bf V}_{21} \hspace{3mm} \bar{\mathbf{H}}{\bf V}_{11}]
\end{eqnarray}
\noindent We now choose ${\bf V}_{11}$ and ${\bf V}_{21}$ as follows.
\begin{eqnarray}
{\bf V}_{11} = 
\begin{bmatrix}
1 & 1 \\
1 & 0 \\
0 & 0 \\
1 & 1 \\
0 & 1 \\
0 & 1
\end{bmatrix} \hspace{40mm}
{\bf V}_{21} = 
\begin{bmatrix}
1 & 0 \\
1 & 0 \\
1 & 1 \\
0 & 0 \\
1 & 1 \\
1 & 1
\end{bmatrix}
\end{eqnarray}
\\
With above choice of beamforming matrices, matrices $S_1$ and $S_2$ can be written as \\
\small
\[S_1= \left[\begin{array}{cccccc}
1 & 1 & h_0+ch_1 & h_0+ch_1 & 1 & 0 \\
1 & 0 & h_0 & ch_1 & 1 & 0 \\
0 & 0 & 0 & ch_1 & 1 & 1 \\
1 & 1 & h_0+h_1 & h_0+h_1 & 0 & 0 \\
0 & 1 & h_1 & h_0 & 1 & 1 \\
0 & 1 & 0 & h_0 & 1 & 1
\end{array} \right] \hspace{6mm}
S_2= \left[\begin{array}{cccccc}
h_0 & 0 & h_0+ch_1 & h_0+ch_1 & 1 & 0 \\
h_0+ch_1 & ch_1 & h_0 & ch_1 & 1 & 0 \\
h_0+ch_1 & h_0+ch_1 & 0 & ch_1 & 1 & 1 \\
h_1 & 0 & h_0+h_1 & h_0+h_1 & 0 & 0 \\
h_0+h_1 & h_0 & h_1 & h_0 & 1 & 1 \\
h_0+h_1 & h_0+h_1 & 0 & h_0 & 1 & 1
\end{array} \right]
\]
\normalsize
\\
Evaluating determinant of the above two matrices, we get the following polynomials
\begin{eqnarray}
|S_1| = ch_1^2\\
|S_2| = h_1^2(ch_1^2-h_0^2)
\end{eqnarray}
Determinant of matrix $S_1$ is non-zero since $h_1 \neq 0$ when $h\notin \mathbb{F}_{p}$, and a non-zero quadratic non-residue exists for all $p>2$, i.e., $c \neq 0$. When considering determinant polynomial of matrix $S_2$, term $h_1^2 \neq 0$ when $h\notin\mathbb{F}_{p}$. Therefore, $|S_2| = 0$ only when $c=\frac{h_0^2}{h_1^2}$. But this is clearly not possible since the quadratic non-residue, c cannot be a square of any element in $\mathbb{F}_{p}$ ($\frac{h_0}{h_1} \in \mathbb{F}_{p}$). Hence, columns of matrices $S_1$ and $S_2$ are linearly independent over $\mathbb{F}_{p}$, implying that the desired and interference signals do not overlap. 

For the case of p=2, we are able to solve all possible cases numerically using MATLAB by constructing beamforming matrices ${\bf V}_{11}$ and ${\bf V}_{21}$ such that the columns of matrices $S_1$ and $S_2$ are linearly independent.
Thus, when $h\notin \mathbb{F}_{p}$, we have shown that the desired signals are resolvable, and sum rate of $\frac{4}{3}$ is achievable for channels over $\mathbb{F}_{p^2}$ for all $p$.
\end{proof}

\subsection{Appendix III - Zero Channels in 3-user Interference channel}
Here, we deal with realizations of the 3-user interference channel where some of the channel coefficients are zero. 

\begin{theorem}\label{theorem:zero_int}
For the 3 user interference channel over $\mathbb{F}_{p^n}$, if one or more of the channel coefficients $h_{ji}$ is equal to zero, the capacity results are given as follows:
\begin{enumerate}
\item If all three direct channels are zero, then $C = C_{linear} = 0$.
\item If any two direct channels are zero, then $C = C_{linear} = 1$.
\item If exactly one direct channel is zero, then $C = C_{linear} = 1$ or $C = C_{linear} = 2$, depending on whether any of the cross-channels between the other two users takes a non-zero value or they are all zero, respectively.
\item If all direct channels are non-zero and all 6 cross channels are zero, then $C = C_{linear} = 3$.
\item If all direct channels are non-zero and either 4 or 5 cross channels are zero, then $C = C_{linear} = 2$.
\item If all direct channels are non-zero and either 2 or 3 cross channels are zero, and if $h_{ij} = h_{ji} = 0$ for any one $\{i, j\} \in \{1,2,3\}$, then $C = C_{linear} = 2$.

\item In all other cases, the linear capacity is either $1$ or $1.5$ for channels over $\mathbb{F}_{p^n}$ with $p>3$ (the specific cases for each are identified in the proof).
\end{enumerate}
\end{theorem}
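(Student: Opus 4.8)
The plan is to prove the statement by a case analysis organized first on the number of zero \emph{direct} channels $h_{ii}$, and then, when all three direct channels are non-zero, on the zero-pattern of the six \emph{cross} channels. The single fact that drives the first three cases is a decodability observation: since $\bar x_i$ enters $\bar y_i$ only through $h_{ii}$, if $h_{ii}=0$ then $\bar y_i$ is independent of $W_i$ and $R_i=0$. Thus Case~1 forces all rates to zero ($C=0$); Case~2 leaves a single surviving user whose source--destination link has min-cut $1$ ($C=1$); and Case~3 collapses the problem to the two-user sub-channel on the remaining pair, which gives $C=2$ when their two mutual cross channels vanish (two interference-free links) and $C=1$ otherwise, the latter bound being the two-user interference/$Z$-channel sum-capacity inherited, exactly as in Theorem~\ref{theorem:zero}, from the wireless DoF converse.

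For Cases~4--6 all direct channels are non-zero and I would use a common template. \emph{Achievability} comes from orthogonal scheduling: in Case~4 the three links are interference-free so $C=3$; in Cases~5 and~6 one suitably chosen user is silenced so that the remaining two become interference-free point-to-point links, giving $C=2$. \emph{The converse} is a dimension-counting (cut-set) argument: whenever $h_{ji}\neq0$, the symbol $\bar x_i$ injects at destination $j$ an interference subspace whose dimension equals $\dim(\bar x_i)$ (because multiplication by a non-zero field element is invertible), and resolvability forces desired-plus-interference dimensions at each destination to stay within the $n$ available per channel use. In Case~5 this caps the interfered destination's rate once the two clean users are at rate one, bounding the sum by $2$; in Case~6 the two mutually non-interfering users each trade off against the central user through a bound of the form $R_1+R_m\le1$, again summing to $2$. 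All these bounds hold for every $p$ and match achievability, establishing $C=C_{\mbox{\tiny linear}}$ in Cases~1--6.

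The core of the theorem, and the main obstacle, is Case~7: all direct channels non-zero, at least one zero cross channel, and a cross-pattern not of the symmetric form of Case~6 (one, two, or three zero cross channels with no pair $h_{ij}=h_{ji}=0$). Here I would enumerate the cross-channel zero-patterns up to relabeling of the users and, for each, decide between $C_{\mbox{\tiny linear}}=1$ and $C_{\mbox{\tiny linear}}=\tfrac32$. Two mechanisms appear. For a \emph{cyclic} interference graph (each destination sees exactly one interferer, e.g.\ $h_{12},h_{23},h_{31}\neq0$), the resolvability constraints read $d_i+d_{i+1}\le n$ around the cycle and sum to $\sum_i d_i\le\tfrac{3n}{2}$ (the specialization of Theorem~\ref{theorem:int_outer} to a depth-one graph); this is met with equality by choosing three beamforming subspaces of dimension $n/2$ (a symbol extension handling odd $n$) so that at each destination the desired and the single interference subspace are complementary, which is feasible because each destination has a single interferer and hence nothing to align. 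For a \emph{non-cyclic} pattern where some destination carries two interferers while a chained source must still be served, achieving $\tfrac32$ requires those two interferers to be \emph{aligned} into one subspace at that destination; this forces a product of the non-zero coefficients to lie outside $\mathbb{F}_p$ (the finite-field analogue of the generic-channel hypotheses of Theorems~\ref{thm_3user_full32} and~\ref{thm_3user_odd}, and of the $h\notin\mathbb{F}_p$ condition of the X channel), and when that product lies in $\mathbb{F}_p$ the interference collapses onto the desired signal and $C_{\mbox{\tiny linear}}=1$ follows from the same alignment-depth converse.

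The main obstacle is thus entirely inside Case~7 and is twofold. First, the dichotomy $1$ versus $\tfrac32$ is not purely combinatorial: for the patterns that admit alignment it is governed by whether an explicit product of non-zero coefficients lies in $\mathbb{F}_p$, and for each such pattern I must (i) produce the aligning beamforming subspaces and (ii) guarantee a valid complementary choice by a counting bound of the form ``number of forbidden vectors $<$ number of candidates,'' which is exactly where the hypothesis $p>3$ enters and where the small primes $p\in\{2,3\}$ may need to be checked separately as in the proofs of Theorems~\ref{thm_3user_full32} and~\ref{theorem:int_outer}. Second, in every $C_{\mbox{\tiny linear}}=1$ sub-case I must show that no cleverer linear scheme beats $1$; I expect to obtain this by reusing the alignment-depth bookkeeping of Lemma~\ref{outer_lemma1} and Theorem~\ref{theorem:int_outer}, which already encodes that over $\mathbb{F}_{p^n}$ interference cannot be aligned more deeply than the sparse channel structure permits. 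Tabulating the finitely many residual patterns and verifying both directions for each is the tedious but essential remaining work.
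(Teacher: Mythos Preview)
Your plan is broadly correct and tracks the paper's case analysis closely. Cases~1--6 match the paper's treatment (which declares 1, 2, 3, 4, 6 trivial and disposes of Case~5 in one line via a genie that removes all but one non-zero cross-link). For Case~7 the paper also enumerates the residual zero-patterns up to relabeling (calling them Structures~D through~K) and, for each pattern that admits $\tfrac32$, proves achievability via a recursive column-by-column construction of the beamforming matrix with counting bounds of exactly the type you anticipate; your cyclic/non-cyclic dichotomy correctly separates Structure~E (single interferer at every destination, no alignment needed, always $\tfrac32$) from the remaining structures (alignment required, $\tfrac32$ only when the relevant normalized direct coefficient $\bar h_{kk}$ lies outside $\mathbb{F}_p$).

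The one place your plan diverges from the paper, and where it may run into trouble, is the converse side of Case~7. First, the $\tfrac32$ outer bound in the paper is \emph{not} drawn from the alignment-depth machinery: it is simply the sum of the three pairwise bounds $R_i+R_j\le 1$ (each pair forms a two-user interference or Z channel). Second, and more importantly, for the $C_{\mbox{\tiny linear}}=1$ sub-cases the paper does not invoke Lemma~\ref{outer_lemma1} or Theorem~\ref{theorem:int_outer} at all. Instead it uses a scaling-invariance observation: the linear capacity is unchanged if any channel coefficient is multiplied by a non-zero element of $\mathbb{F}_p$ (beamforming subspaces are insensitive to such scalings), so when the offending $\bar h_{kk}$ lies in $\mathbb{F}_p$ one may replace it by $1$; on the resulting all-ones channel one destination can decode all three messages, forcing sum-rate at most $1$. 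Your proposed route through alignment-depth bookkeeping is shaky because Lemma~\ref{outer_lemma1} and Theorem~\ref{theorem:int_outer} are established \emph{under} the hypotheses $\bar h_{kk}\notin(\text{appropriate set})$ that fail in precisely the $C_{\mbox{\tiny linear}}=1$ regime, so you would have to rebuild that analysis from scratch for each degenerate pattern; the scaling-invariance trick dispatches all of them uniformly in one line.
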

\begin{proof}
Cases 1,2,3,4,6 are trivial. The remaining cases are discussed below.

\noindent
\textbf{Case 5:} For all these channel structures, it can be shown that there always exists at least one $\{i, j\} \in \{1,2,3\}$  such that $h_{ij} = h_{ji} = 0$, and so only the sources $\{i, j\}$ can be used for transmission, leading to a sum rate of $2$ being achievable. Outer bound of $2$ follows by removing all but one non-zero cross-link. \\

\noindent
\textbf{Case 7:} \\
\noindent
For the achievability of sum rate of $1.5$,  consider the following:
\begin{enumerate}
\item
All channels are from $\mathbb{F}_{p^n}$. For even $n=2l$, we choose beamforming matrices $V \in \mathbb{F}_{p^n}^{1 \times l}$ at some of the sources and $V' \in \mathbb{F}_{p^n}^{1 \times l}$ at others, and precode $\frac{n}{2}=l$ symbols $x_k^1, x_k^2, \ldots, x_k^{l} \in \mathbb{F}_{p}$ for each channel use, at all 3 sources. We denote the $l$ columns of $V$ as $v_1, v_2, \ldots, v_l$ and those of $V'$ as $v'_1, v'_2, \ldots, v'_{l}$. These beam forming matrices would be chosen such that desired and interference symbols are linearly independent over $ \mathbb{F}_{p}$ at the destinations.
\item
When $n$ is odd,  2 symbol extensions are used wherein the beamforming matrix $V \in \mathbb{F}_{p^n}^{2 \times n}$ is used at some of the sources and $V' \in \mathbb{F}_{p^n}^{2 \times n}$ at others. Over 2 channel uses, $n$ input symbols are precoded at each source. Columns of $V$ and $V'$ are then chosen such that desired and interference symbols are linearly independent over $ \mathbb{F}_{p}$ at all destinations. Linear independence arguments follow similar to case of even $n$. 
\end{enumerate}

 We describe only even $n$ for various channel structures, for brevity.

Let us first consider the setting where 3 cross channels are zero. There are 5 distinct channel structures corresponding to any three cross channels being zero, and all other channel structures (${6 \choose 3}-5 = 15$) are isomorphic to them. These 5 channel structures are shown in Fig. \ref{fig:user3_cross3_5}. Of these, A, B, C belong to Case 5, and are therefore trivial.

\begin{figure}[h]
\begin{center}
\includegraphics[scale=0.5]{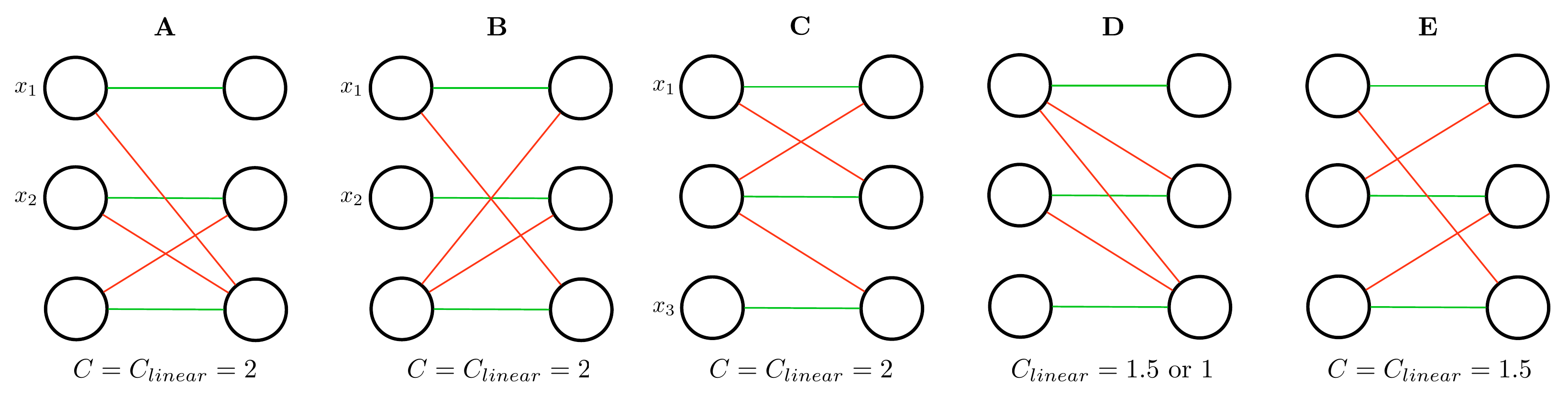}
\end{center}
\caption{Distinct channel structures with 3 cross channels as 0}\label{fig:user3_cross3_5}
\end{figure}

\noindent
\textbf{Structure D:} \\
For this structure, interference from sources 1 and 2 need to be aligned at destination 3. The normalized channel for this structure is illustrated in Fig. \ref{fig:str_d_norm}.
\begin{figure}[h]
\begin{center}
\includegraphics[scale=0.7]{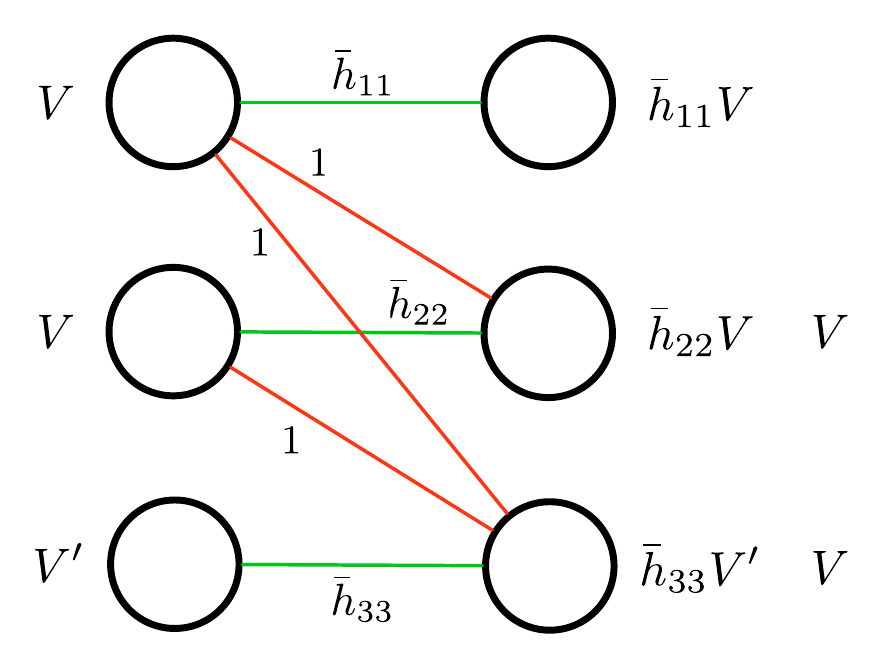}
\end{center}
\caption{Normalized channel of structure D}\label{fig:str_d_norm}
\end{figure}
\\
Beam forming matrix $V$ is used at sources 1 and 2, and $V'$ is used at source 3. Signal spaces at 3 destinations are then given by
\begin{eqnarray}
S_1 = [\bar{h}_{11}V] = [\bar{h}_{11}v_1, \hspace{3mm} \bar{h}_{11}v_2, \hspace{3mm}  \ldots, \hspace{3mm} \bar{h}_{11}v_l] \\
S_2 = [\bar{h}_{22}V \hspace{5mm} V] = [\bar{h}_{22}v_1, \hspace{3mm} \bar{h}_{22}v_2, \hspace{3mm}  \ldots, \hspace{3mm}  \bar{h}_{22}v_l, \hspace{3mm} v_1, \hspace{3mm} v_2, \hspace{3mm}  \ldots, \hspace{3mm}  v_l] \\
S_3 = [\bar{h}_{33}V' \hspace{5mm} V] = [\bar{h}_{33}v'_1, \hspace{3mm} \bar{h}_{33}v'_2, \hspace{3mm}  \ldots, \hspace{3mm}  \bar{h}_{33}v'_l \hspace{3mm} v_1, \hspace{3mm} v_2, \hspace{3mm}  \ldots, \hspace{3mm}  v_l] 
\end{eqnarray}
Consider signal space at destination 2. Let us choose $v_1$ as 1, then if $\bar{h}_{22} \notin \mathbb{F}_{p}$, $[\bar{h}_{22}v_1 \hspace{3mm} v_1]$ are linearly independent over $\mathbb{F}_{p}$. Now let us construct $v_2$ such that 4 columns of $S_2$, $[\bar{h}_{22}v_1 \hspace{3mm} v_1 \hspace{3mm} \bar{h}_{22}v_2 \hspace{3mm} v_2]$ are linearly independent over $\mathbb{F}_p$.   
\begin{eqnarray}
\text{From $S_2$, } \hspace{3mm}  v_2\notin A&\triangleq&\left\{  \frac{(\alpha_1 \bar{h}_{22} + \alpha_2)v_1}{\beta_1 \bar{h}_{22}+\beta_2}: \alpha_1, \alpha_2, \beta_1, \beta_2\in\mathbb{F}_p, (\beta_1,\beta_2)\neq(0,0)\right\} \label{eqn_d1_1}
\end{eqnarray}
Now we note that
\begin{eqnarray}
|A| &\leq& \frac{(p^2-1)p^{2}}{p-1} = p^3+p^2
\end{eqnarray}

There are $p^{n}$ choices for $v_2$, and since $p^n > (p^3 + p^2)$ for all $p$, there exist choices for $v_2$ such that condition \eqref{eqn_d1_1} holds. Choosing $v_2$ from those, we note that 4 columns of $S_2$ are linearly independent over $\mathbb{F}_p$. We proceed recursively in a similar manner, for choosing columns $v_3, v_4, \ldots, v_{l-1}$ such that $6, 8, \ldots, 2(l-1)$ columns are linearly independent over $\mathbb{F}_p$ respectively, in $S_2$. \\
\indent Let us now discuss the last iteration wherein we choose column $v_l$ such that all $n=2l$ columns are linearly independent over $\mathbb{F}_p$ in $S_2$, given that $2l-2$ columns are already linearly independent with appropriate choices of $v_1, v_2, \ldots, v_{l-1}$.
\begin{align} \nonumber
\text{From $S_2$, }  \hspace{3mm} v_l \notin A \triangleq \{  \frac{(\alpha_1 \bar{h}_{22} + \alpha_2)v_1 + (\alpha_3 \bar{h}_{22} + \alpha_4)v_2 + \cdots + (\alpha_{2l-3} \bar{h}_{22} + \alpha_{2l-2})v_{l-1}}{\beta_1 \bar{h}_{22}+\beta_2}: \\
\alpha_i, \beta_1, \beta_2\in\mathbb{F}_p, i\in \{1,\ldots,2l-2\}, (\beta_1,\beta_2)\neq(0,0)\} \label{eqn_d1_2}
\end{align}
Now we note that
\begin{eqnarray}
|A| &\leq& \frac{(p^2-1)p^{2l-2}}{p-1} = p^{2l-1}+p^{2l-2}
\end{eqnarray}
\indent There are $p^{n} = p^{2l}$ choices for $v_l$, and since $p^{2l} > (p^{2l-1}+p^{2l-2})$ for all $p$, there exist choices for $v_l$ such that condition \eqref{eqn_d1_2} holds. Choosing $v_l$ from those, we note that all $n$ columns of $S_2$ are linearly independent over $\mathbb{F}_p$. Also, it can be noted that $l = \frac{n}{2}$ columns of $V$ in $S_1$ and $S_3$ are linearly independent over $\mathbb{F}_p$. Destination 1 does not receive any interference and so desired symbols are resolvable. \\
\indent Let us now consider destination 3 where interference is aligned in $\frac{n}{2}=l$ linearly independent columns of $V$. Since source 3 does not cause interference anywhere, $V'$ is trivially chosen to be $\frac{1}{\bar{h}_{33}}$ times the remaining $n/2$ basis vectors. 
Hence, desired and interference symbols are linearly independent at all destinations. Thus, sum rate of $\frac{3}{2}$ is achieved for structure D in Fig. \ref{fig:str_d_norm}, with channels over $\mathbb{F}_{p^n}$ for all even $n$, if $\bar{h}_{22} \notin \mathbb{F}_p$. \\
\indent Fraction of channels for which scheme achieves $\frac{3}{2}$ sum rate is given by
\begin{eqnarray}
\frac{p^n-p}{p^n} = 1-\frac{1}{p^{n-1}} \to 1 \text{ for large } p,n
\end{eqnarray}
$\frac{3}{2}$  is also an information theoretic outer bound on sum rate for structure D  because the sum-rate of any two users is bounded by 1. However, when $\bar{h}_{22} =1$,  then arguing along the lines  of \cite{Cadambe_Jafar_inseparable} we find that destination 3 can decode all three messages, so that the information theoretic sum-capacity bound = 1. For all other cases where  $\bar{h}_{22}\in \mathbb{F}_p$ but $\bar{h}_{22}\notin \{0,1\}$,  the linear capacity is still 1 (because the linear capacity does not depend on the scaling of channel coefficients by non-zero $\mathbb{F}_p$ elements) but the information theoretic capacity is unknown.

Thus, structure D has linear capacity of $1.5$ if $\bar{h}_{22} \notin \mathbb{F}_p$, and $1$ otherwise. \\

\noindent
\textbf{Structure E:} For structure E, the sum rate of $1.5$ is achieved  even without channel knowledge at the transmitters. For example, transmitter 1 sends an $\mathbb{F}_{p^n}$ symbol only over the first channel use and stays quiet over the second channel use, transmitter 2 sends a $\mathbb{F}_{p^n}$ symbol over the second channel use and remains quiet over the first channel use, and transmitter 3 repeats its $\mathbb{F}_{p^n}$ symbol over both channel uses. This allows each receiver to decode its desired symbols. The outer bound of $1.5$ applies because the sum-capacity of any two users is 1. Thus, structure E has $C=C_{linear}=1.5$. \\

\indent Next let us consider cases where 2 cross channels are 0, shown in Fig. \ref{fig:user3_cross2_5}. Structure F belongs to Case 5, so it is trivial. \\
\begin{figure}[h]
\begin{center}
\includegraphics[scale=0.5]{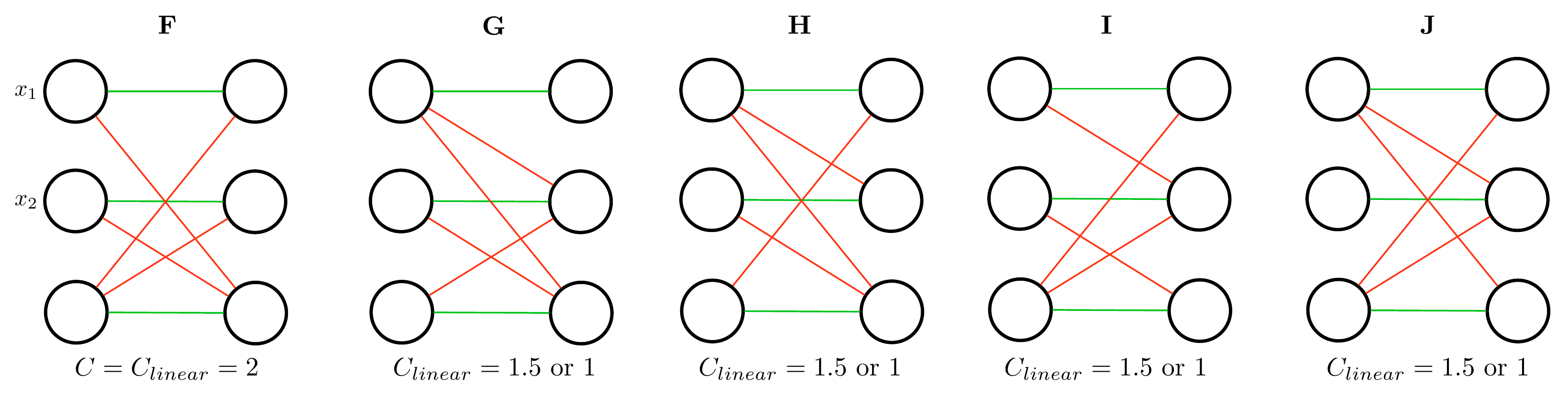}
\end{center}
\caption{Distinct channel structures with 2 cross channels as 0}\label{fig:user3_cross2_5}
\end{figure}
\\

\noindent
\textbf{Structure G:} The normalized channel for this structure is illustrated in Fig. \ref{fig:str_g_norm}. For this structure, signals from sources 1 and 2 need to be aligned at destination 3 and remain resolvable at destination 2. Following the proof for structure $D$, this can be done if $\bar{h}_{22}\notin\mathbb{F}_p$. Similarly, signals from sources 1 and 3 need to align at destination 2 and remain resolvable at destination 3. This can be done if $\bar{h}_{33}\notin\mathbb{F}_p$. We choose $V$ such that both $S_2 = [\bar{h}_{22}V \hspace{3mm} V]$  and $S_3 = [\bar{h}_{33}V \hspace{3mm} V]$ are linearly independent over $\mathbb{F}_p$, which can be shown to be possible for all $p>2$. Thus, sum rate of $\frac{3}{2}$ is achieved for structure G in Fig. \ref{fig:str_g_norm}, with channels over $\mathbb{F}_{p^n}$ for all even $n$, if $\bar{h}_{22}, \bar{h}_{33} \notin \mathbb{F}_p$. The outer bound of $\frac{3}{2}$ follows from the pair-wise bounds. If all non-zero channels are equal to 1, then the argument of \cite{Cadambe_Jafar_inseparable}  shows that one destination can decode all messages, i.e., $C=C_{linear}=1$. In all other cases with non-zero $\bar{h}_{kk} \in \mathbb{F}_p$ for any $k=2,3$, the linear capacity is still one because the linear capacity is not affected by a scaling of channel coefficients by non-zero constants in $\mathbb{F}_p$. Thus structure G has linear-scheme capacity of $\frac{3}{2}$ if $\bar{h}_{kk} \notin \mathbb{F}_p, k\in \{2,3\}$, and $1$ otherwise. \\

\begin{figure}[h]
\begin{minipage}[b]{0.45\linewidth}
\begin{center}
\includegraphics[scale=0.7]{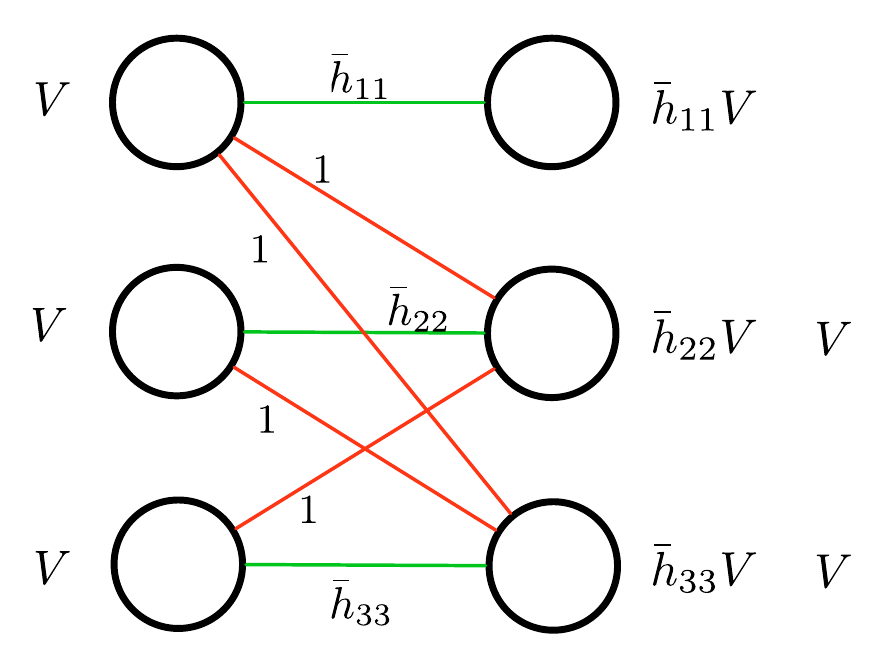}
\end{center}
\caption{Normalized channel - structure G}\label{fig:str_g_norm}
\end{minipage}
\hspace{0.5cm}
\begin{minipage}[b]{0.45\linewidth}
\begin{center}
\includegraphics[scale=0.7]{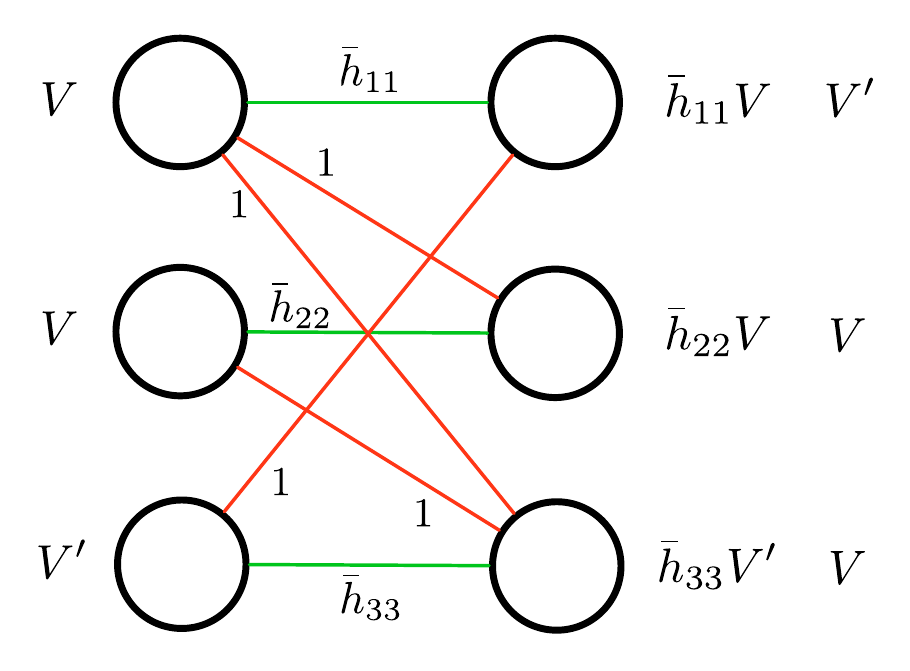}
\end{center}
\caption{Normalized channel - structure H}\label{fig:str_h_norm}
\end{minipage}
\end{figure}

\noindent
\textbf{Structure H:} \\
The normalized channel for this structure is illustrated in Fig. \ref{fig:str_h_norm}. For this structure, signals from sources 1 and 2 need to be aligned at destination 3 and remain resolvable at destination 2. Following the proof for structure $D$, this can be done if $\bar{h}_{22}\notin\mathbb{F}_p$. We choose $V'$ such that both $S_1 = [\bar{h}_{11}V \hspace{3mm} V']$  and $S_3 = [\bar{h}_{33}V' \hspace{3mm} V]$ are linearly independent over $\mathbb{F}_p$, which can be shown to be possible for all $p>2$. Thus, sum rate of $\frac{3}{2}$ is achieved for structure H in Fig. \ref{fig:str_h_norm}, with channels over $\mathbb{F}_{p^n}$ for all even $n$, if $\bar{h}_{22} \notin \mathbb{F}_p$.  The outer bound of $\frac{3}{2}$ follows from the pair-wise bounds. If all non-zero channels are equal to 1, then the argument of \cite{Cadambe_Jafar_inseparable}  shows that one destination can decode all messages, i.e., $C=C_{linear}=1$. In all other cases with non-zero $\bar{h}_{22} \in \mathbb{F}_p$, the linear capacity is still one because the linear capacity is not affected by a scaling of channel coefficients by non-zero constants in $\mathbb{F}_p$. Thus structure H has linear-scheme capacity of $\frac{3}{2}$ if $\bar{h}_{22} \notin \mathbb{F}_p$, and $1$ otherwise. \\

\noindent
\textbf{Structure I:} \\
The normalized channel for this structure is illustrated in Fig. \ref{fig:str_i_norm}. For this structure, signals from sources 1 and 3 need to be aligned at destination 2 and remain resolvable at destination 1. Following the proof for structure $D$, this can be done if $\bar{h}_{11}\notin\mathbb{F}_p$.  We choose $V'$ such that both $S_2 = [\bar{h}_{22}V' \hspace{3mm} V]$  and $S_3 = [\bar{h}_{33}V \hspace{3mm} V']$ are linearly independent over $\mathbb{F}_p$, which can be shown to be possible for all $p>2$. Thus, sum rate of $\frac{3}{2}$ is achieved for structure H in Fig. \ref{fig:str_i_norm}, with channels over $\mathbb{F}_{p^n}$ for all even $n$, if $\bar{h}_{11} \notin \mathbb{F}_p$.  The outer bound of $\frac{3}{2}$ follows from the pair-wise bounds. If all non-zero channels are equal to 1, then the argument of \cite{Cadambe_Jafar_inseparable}  shows that one destination can decode all messages, i.e., $C=C_{linear}=1$. In all other cases with non-zero $\bar{h}_{11} \in \mathbb{F}_p$, the linear capacity is still one because the linear capacity is not affected by a scaling of channel coefficients by non-zero constants in $\mathbb{F}_p$. Thus structure I has linear-scheme capacity of $\frac{3}{2}$ if $\bar{h}_{11} \notin \mathbb{F}_p$, and $1$ otherwise. \\
\begin{figure}[h]
\begin{minipage}[b]{0.45\linewidth}
\begin{center}
\includegraphics[scale=0.7]{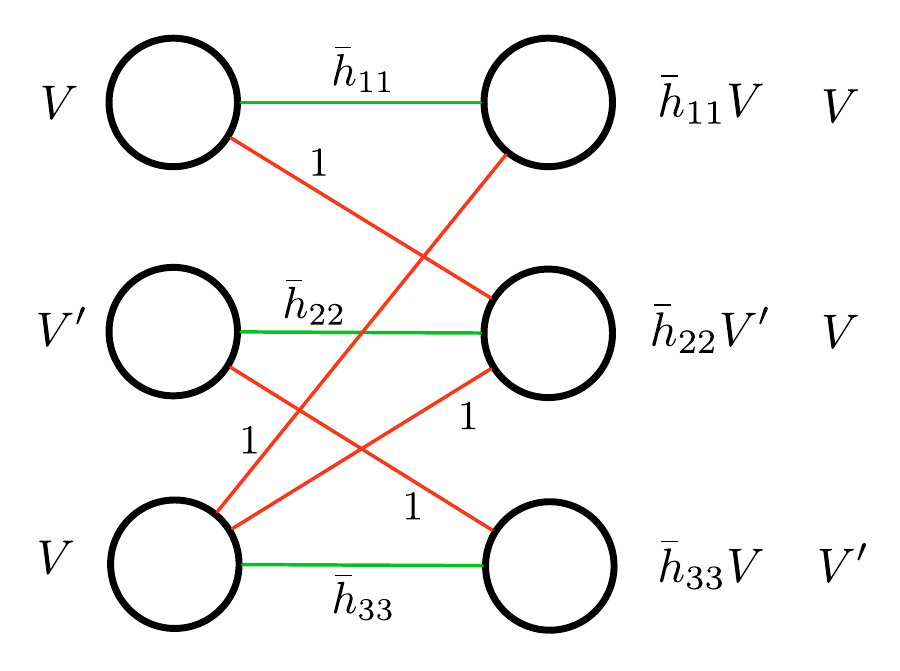}
\end{center}
\caption{Normalized channel - structure I}\label{fig:str_i_norm}
\end{minipage}
\hspace{0.5cm}
\begin{minipage}[b]{0.45\linewidth}
\begin{center}
\includegraphics[scale=0.7]{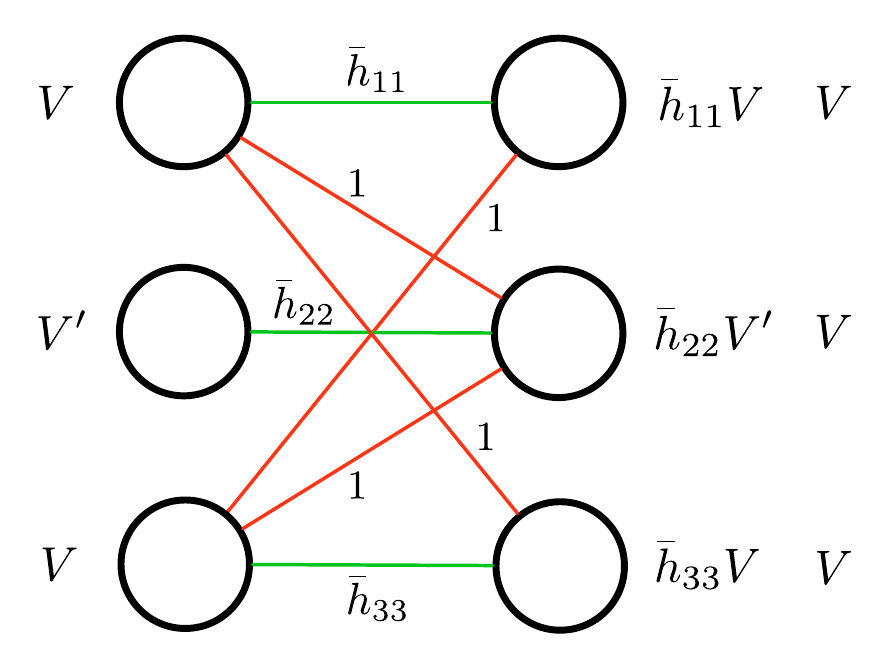}
\end{center}
\caption{Normalized channel - structure J}\label{fig:str_j_norm}
\end{minipage}
\end{figure}

\noindent
\textbf{Structure J:} \\
The normalized channel for this structure is illustrated in Fig. \ref{fig:str_j_norm}. For this structure, signals from sources 1 and 3 need to be aligned at destination 2 but remain resolvable at destinations 1 and 3. Following the proof for structure $D$, this can be done if $\bar{h}_{11}, \bar{h}_{33}\notin\mathbb{F}_p$.  We choose $V$ such that both $S_1 = [\bar{h}_{11}V \hspace{3mm} V]$  and $S_3 = [\bar{h}_{33}V \hspace{3mm} V]$ are linearly independent over $\mathbb{F}_p$, which can be shown to be possible for all $p>2$. Thus, sum rate of $\frac{3}{2}$ is achieved for structure J in Fig. \ref{fig:str_j_norm}, with channels over $\mathbb{F}_{p^n}$ for all even $n$, if $\bar{h}_{11}, \bar{h}_{33} \notin \mathbb{F}_p$.  The outer bound of $\frac{3}{2}$ follows from the pair-wise bounds. If all non-zero channels are equal to 1, then the argument of \cite{Cadambe_Jafar_inseparable}  shows that one destination can decode all messages, i.e., $C=C_{linear}=1$. In all other cases with non-zero $\bar{h}_{kk} \in \mathbb{F}_p$ for any $k=1,3$, the linear capacity is still one because the linear capacity is not affected by a scaling of channel coefficients by non-zero constants in $\mathbb{F}_p$. Thus structure J has linear-scheme capacity of $\frac{3}{2}$ if $\bar{h}_{kk} \notin \mathbb{F}_p, k\in \{1,3\}$, and $1$ otherwise. \\

Finally, let us now consider the setting where only one cross channel is zero. \\

\noindent\textbf{Structure K:} \\
The normalized channel for this structure is illustrated in Fig. \ref{fig:str_k_norm}. For this single channel structure, interference from sources 2 and 3 need to be aligned at destination 1, and interference from sources 1 and 3 need to be aligned at destination 2.

Beam forming matrix $V$ is used at all 3 sources. Signal spaces at 3 destinations are then given by
\begin{eqnarray}
S_1 = [\bar{h}_{11}V \hspace{5mm} V] = [\bar{h}_{11}v_1, \hspace{3mm} \bar{h}_{11}v_2, \hspace{3mm}  \ldots, \hspace{3mm} \bar{h}_{11}v_l, \hspace{3mm} v_1, \hspace{3mm} v_2, \hspace{3mm}  \ldots, \hspace{3mm}  v_l] \\
S_2 = [\bar{h}_{22}V \hspace{5mm} V] = [\bar{h}_{22}v_1, \hspace{3mm} \bar{h}_{22}v_2, \hspace{3mm}  \ldots, \hspace{3mm}  \bar{h}_{22}v_l, \hspace{3mm} v_1, \hspace{3mm} v_2, \hspace{3mm}  \ldots, \hspace{3mm}  v_l] \\
S_3 = [\bar{h}_{33}V \hspace{5mm} V] = [\bar{h}_{33}v_1, \hspace{3mm} \bar{h}_{33}v_2, \hspace{3mm}  \ldots, \hspace{3mm}  \bar{h}_{33}v_l, \hspace{3mm} v_1, \hspace{3mm} v_2, \hspace{3mm}  \ldots, \hspace{3mm}  v_l] 
\end{eqnarray}
Let us choose $v_1$ as 1, then if $\bar{h}_{11}, \bar{h}_{22}, \bar{h}_{33} \notin \mathbb{F}_{p}$, $[\bar{h}_{11}v_1 \hspace{3mm} v_1]$, $[\bar{h}_{22}v_1 \hspace{3mm} v_1]$ and $[\bar{h}_{33}v_1 \hspace{3mm} v_1]$ are linearly independent over $\mathbb{F}_{p}$. Now let us construct $v_2$ such that 4 columns of $S_k, k\in \{1,2,3\}$ are linearly independent.   
\begin{figure}[h]
\begin{center}
\includegraphics[scale=0.7]{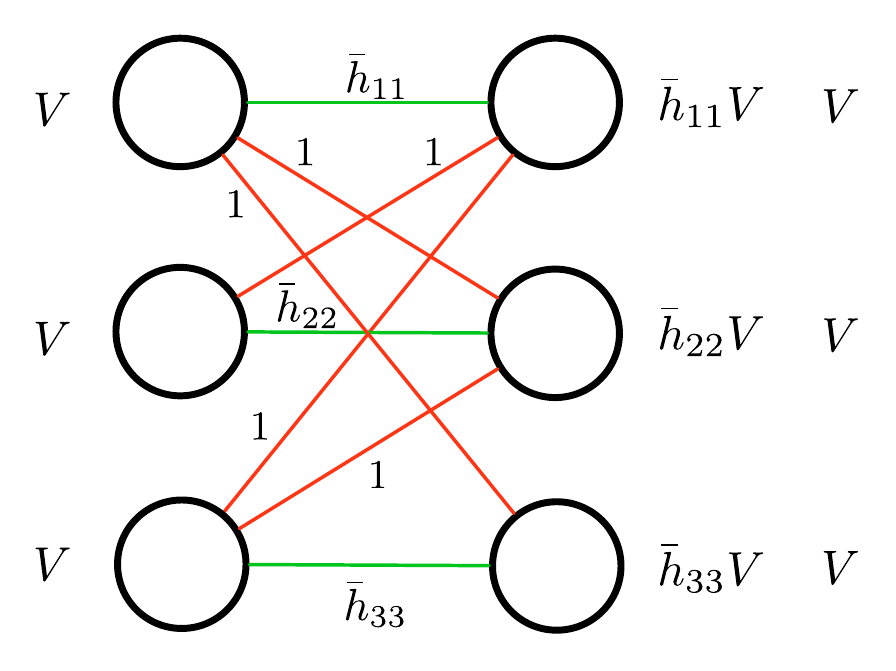}
\end{center}
\caption{Normalized channel of structure K}\label{fig:str_k_norm}
\end{figure}
\\
\begin{align}
\text{From $S_k$, } \hspace{3mm}  v_2\notin A_k \triangleq \left\{  \frac{(\alpha_1 \bar{h}_{kk} + \alpha_2)v_1}{\beta_1 \bar{h}_{kk}+\beta_2}: \alpha_1, \alpha_2, \beta_1, \beta_2\in\mathbb{F}_p, (\beta_1,\beta_2)\neq(0,0)\right\} , k \in \{1,2,3\} \label{eqn_k1_1}
\end{align}
Now we note that
\begin{eqnarray}
|A_k| &\leq& \frac{(p^2-1)p^{2}}{p-1} = p^3+p^2 \\
|A_1 \cup A_2 \cup A_3| &\leq& 3(p^3+p^2)
\end{eqnarray}
There are $p^{n}$ choices for $v_2$, and since $p^n > 3(p^3 + p^2)$ for all $p>3$, there exist choices for $v_2$ such that all 3 conditions of \eqref{eqn_k1_1} hold. Choosing $v_2$ from those, we note that 4 columns of $S_k, k\in \{1,2,3\}$ are linearly independent over $\mathbb{F}_p$. We proceed recursively in a similar manner, for choosing columns $v_3, v_4, \ldots, v_{l-1}$ such that $6, 8, \ldots, 2(l-1)$ columns are linearly independent over $\mathbb{F}_p$ respectively, in $S_k, k\in \{1,2,3\}$. \\
\indent For the last iteration, we choose column $v_l$ such that all $n=2l$ columns are linearly independent over $\mathbb{F}_p$ in $S_k, k\in \{1,2,3\}$, given that $2l-2$ columns are already linearly independent with appropriate choices of $v_1, v_2, \ldots, v_{l-1}$.
\begin{align} \nonumber
\text{From $S_k$, }  \hspace{3mm} v_l \notin A_k \triangleq \{  \frac{(\alpha_1 \bar{h}_{kk} + \alpha_2)v_1 + (\alpha_3 \bar{h}_{kk} + \alpha_4)v_2 + \cdots + (\alpha_{2l-3} \bar{h}_{kk} + \alpha_{2l-2})v_{l-1}}{\beta_1 \bar{h}_{kk}+\beta_2}: \\
\alpha_i, \beta_1, \beta_2\in\mathbb{F}_p, i\in \{1,\ldots,2l-2\}, (\beta_1,\beta_2)\neq(0,0)\}, k \in \{1,2,3\}  \label{eqn_k1_2}
\end{align}
Now we note that
\begin{eqnarray}
|A_k| &\leq& \frac{(p^2-1)p^{2l-2}}{p-1} = p^{2l-1}+p^{2l-2} \\
|A_1 \cup A_2 \cup A_3| &\leq& 3(p^{2l-1}+p^{2l-2})
\end{eqnarray}
\indent There are $p^{n} = p^{2l}$ choices for $v_l$, and since $p^{2l} > 3(p^{2l-1}+p^{2l-2})$ for all $p>3$, there exist choices for $v_l$ such that conditions of \eqref{eqn_k1_2} hold. Choosing $v_l$ from those, we note that all $n$ columns of $S_1, S_2, S_3$ are linearly independent over $\mathbb{F}_p$.  \\
\indent Hence, desired and interference symbols are linearly independent at all destinations. Thus, sum rate of $\frac{3}{2}$ is achieved for structure K in Fig. \ref{fig:str_k_norm}, with channels over $\mathbb{F}_{p^n}$ for all even $n$, if $\bar{h}_{11},\bar{h}_{22}, \bar{h}_{33} \notin \mathbb{F}_p$. \\
\indent Fraction of channels for which scheme achieves $\frac{3}{2}$ sum rate is given by
\begin{eqnarray}
(\frac{p^n-p}{p^n})^3 = (1-\frac{1}{p^{n-1}})^3 \to 1 \text{ for large }p,n
\end{eqnarray}
The outer bound of $\frac{3}{2}$ follows from the pair-wise bounds. If all channels are equal to 1, then the argument of \cite{Cadambe_Jafar_inseparable}  shows that one destination can decode all messages, i.e., $C=C_{linear}=1$. In all other cases with non-zero $\bar{h}_{kk} \in \mathbb{F}_p$ for any $k$, the linear capacity is still one because the linear capacity is not affected by a scaling of channel coefficients by non-zero constants in $\mathbb{F}_p$. Thus structure K has linear-scheme capacity of $\frac{3}{2}$ if $\bar{h}_{kk} \notin \mathbb{F}_p, k\in \{1,2,3\}$, and $1$ otherwise.
\hfill\QED \\
\end{proof}

\bibliographystyle{ieeetr}
\bibliography{Thesis}

\begin{thebibliography}{36}

\bibitem{Ramakrishnan_Das_Maleki_Markopoulou_Jafar_Vishwanath}
A.~Ramakrishnan, A.~Das, H.~Maleki, A.~Markopoulou, S.~Jafar, and
  S.~Vishwanath, ``{Network Coding for Three Unicast Sessions: Interference
  Alignment Approaches},'' {\em Allerton Conference on Communications, Control
  and Computing}, October 2010.

\bibitem{Meng_Ramakrishnan_Markopoulou_Jafar}
A.~M. S. A.~J. Chun~Meng, Abinesh~Ramakrishnan, ``{On the feasibility of
  precoding-based network alignment for three unicast sessions},'' in {\em
  Proceedings of IEEE International Symposium on Information Theory (ISIT),
  2012}, pp.~1907--1911, IEEE, 2012.

\bibitem{Das_Vishwanath_Jafar_Markopoulou}
A.~K. Das, S.~Vishwanath, S.~A. Jafar, and A.~Markopoulou, ``Network coding for
  multiple unicasts: An interference alignment approach,'' {\em CoRR},
  vol.~abs/1008.0235, 2010.

\bibitem{Cadambe_Jafar_int}
V.~Cadambe and S.~Jafar, ``Interference alignment and the degrees of freedom of
  the {K} user interference channel,'' {\em IEEE Transactions on Information
  Theory}, vol.~54, pp.~3425--3441, Aug. 2008.

\bibitem{MMK}
M.~{Maddah-Ali}, A.~Motahari, and A.~Khandani, ``Communication over {MIMO X}
  channels: Interference alignment, decomposition, and performance analysis,''
  in {\em IEEE Trans. on Information Theory}, pp.~3457--3470, August 2008.

\bibitem{Jafar_Shamai}
S.~Jafar and S.~Shamai, ``Degrees of freedom region for the {MIMO} {X}
  channel,'' {\em IEEE Trans. on Information Theory}, vol.~54, pp.~151--170,
  Jan. 2008.

\bibitem{Cadambe_Jafar_Wang}
V.~Cadambe, S.~Jafar, and C.~Wang, ``{Interference Alignment With Asymmetric
  Complex Signaling {�} Settling the H{\o}st-Madsen--Nosratinia Conjecture},''
  {\em IEEE Transactions on Information Theory}, vol.~56, no.~9,
  pp.~4552--4565, 2010.

\bibitem{Motahari_Gharan_Khandani}
A.~Motahari, S.~Gharan, M.~Maddah-Ali, and A.~Khandani, ``Real interference
  alignment: Exploiting the potential of single antenna systems,'' {\em CoRR},
  vol.~abs/0908.2282, 2009.

\bibitem{Huang_Cadambe_Jafar}
C.~Huang, V.~Cadambe, and S.~Jafar, ``Interference alignment and the
  generalized degrees of freedom of the {X} channel,'' {\em IEEE Transactions
  on Information Theory}, vol.~58, pp.~5130--5150, August 2012.

\bibitem{Niesen_Maddah_Ali_X}
U.~Niesen and M.~A. Maddah-Ali, ``Interference alignment: From
  degrees-of-freedom to constant-gap capacity approximations,'' {\em CoRR},
  vol.~abs/1112.4879, 2011.

\bibitem{Cadambe_Jafar_X}
V.~Cadambe and S.~Jafar, ``Interference alignment and the degrees of freedom of
  wireless {X} networks,'' {\em IEEE Trans. on Information Theory},
  pp.~3893--3908, Sep 2009.

\bibitem{Zhengdao_X}
Z.~Wang, ``{Real interference alignment and degrees of freedom region of
  wireless X networks},'' in {\em International Symposium on Wireless
  Communication Systems}, 2011.

\bibitem{Cadambe_Jafar_XFB}
V.~R. Cadambe and S.~A. Jafar, ``Degrees of freedom of wireless networks with
  relays, feedback, cooperation and full duplex operation,'' {\em IEEE
  Transactions on Information Theory}, vol.~55, pp.~2334--2344, May 2009.

\bibitem{Sun_Geng_Gou_Jafar}
H.~Sun, C.~Geng, and S.~Jafar, ``Degrees of freedom of {MIMO X} networks:
  Spatial scale invariance, one-sided decomposability and linear feasibility,''
  {\em Arxiv preprint ArXiv:11207.6137}, July 2012.

\bibitem{Wang_Gou_Jafar_Subspace}
C.~Wang, T.~Gou, and S.~Jafar, ``Subspace alignment chains and the degrees of
  freedom of the three user {MIMO} interference channel,'' {\em Arxiv preprint
  ArXiv:1109.4350}, September 2011.

\bibitem{Wang_Gou_Jafar_MU}
C.~Wang, T.~Gou, and S.~Jafar, ``Multiple unicast capacity of 2-source 2-sink
  networks,'' {\em CoRR}, vol.~abs/1104.0954, 2011.

\bibitem{Maleki_Cadambe_Jafar}
H.~Maleki, V.~Cadambe, and S.~Jafar, ``Index coding -- an interference
  alignment perspective,'' {\em ISIT 2012, Preprint of Full Paper available at
  ArXiv:1205.1483}, 2012.

\bibitem{Krishnamurthy_Jafar}
S.~A.~J. Sundar R.~Krishnamurthy, ``{Degrees of Freedom of 2-user and 3-user
  Rank-Deficient MIMO Interference Channels},'' in {\em Proceedings of IEEE
  Global Telecommunications Conference (GLOBECOM), 2012}, IEEE, 2012.

\bibitem{Gou_Jafar_MIMO}
T.~Gou and S.~Jafar, ``Degrees of freedom of the {K} user {M} $\times$ {N}
  {MIMO} interference channel,'' {\em IEEE Trans. on Information Theory},
  vol.~56, pp.~6040--6057, December 2010.

\bibitem{Ghasemi_Motahari_Khandani}
A.~Ghasemi, A.~Motahari, and A.~Khandani, ``{Interference alignment for the K
  user MIMO interference channel},'' in {\em Proceedings of International
  Symposium on Information Theory (ISIT)}, pp.~360--364, IEEE, 2010.

\bibitem{Wang_Sun_Jafar}
H.~S. Chenwei~Wang and S.~A. Jafar, ``{Genie chains and the Degrees of Freedom
  of the K-user MIMO Interference Channel},'' in {\em IEEE International
  Symposium on Information Theory(ISIT)}, 2012.

\bibitem{Gomadam_Cadambe_Jafar}
K.~Gomadam, V.~Cadambe, and S.~Jafar, ``A distributed numerical approach to
  interference alignment and applications to wireless interference networks,''
  {\em IEEE Trans. on Information Theory}, pp.~3309--3322, June 2011.

\bibitem{Yetis_Gou_Jafar_Kayran_TSP}
C.~Yetis, T.~Gou, S.~Jafar, and A.~Kayran, ``{On feasibility of interference
  alignment in MIMO interference networks},'' {\em IEEE Transactions on Signal
  Processing}, vol.~58, no.~9, pp.~4771--4782, 2010.

\bibitem{Bresler_Cartwright_Tse}
G.~Bresler, D.~Cartwright, and D.~Tse, ``Settling the feasibility of
  interference alignment for the {MIMO} interference channel: the symmetric
  square case,'' {\em CoRR}, vol.~abs/1104.0888, 2011.

\bibitem{Razaviyayn_Lyubeznik_Luo}
M.~Razaviyayn, G.~Lyubeznik, and Z.~Luo, ``On the degrees of freedom achievable
  through interference alignment in a {MIMO} interference channel,'' {\em
  CoRR}, vol.~abs/1104.0992, 2011.

\bibitem{Gonzalez_Beltran_Santamaria}
I.~S. O~Gonzalez, C~Beltr�n, ``On the feasibility of interference alignment for
  the k-user mimo channel with constant coef�cients,'' {\em ArXiv},
  vol.~abs/1202.0186, 2012.

\bibitem{Ruan_Lau_Win}
M.~Z.~W. Liangzhong~Ruan, Vincent K.N.~Lau, ``The feasibility conditions for
  interference alignment in mimo networks,'' {\em ArXiv}, vol.~abs/1211.3484v1,
  2012.

\bibitem{Bresler_Tse_Geometry}
D.~C. Guy~Bresler and D.~Tse, ``Geometry of the 3-user mimo interference
  channel,'' {\em ArXiv}, vol.~abs/1110.5092, 2011.

\bibitem{Bresler_Tse_Diversity}
G.~Bresler and D.~Tse, ``{3 User interference channel: Degrees of freedom as a
  function of channel diversity},'' in {\em 47th Annual Allerton Conference on
  Communication, Control, and Computing}, pp.~265--271, 2009.

\bibitem{Bavirisetti_Abhinav_Prasad_Rajan}
T.~D. Bavirisetti, G.~Abhinav, K.~Prasad, and B.~S. Rajan, ``A transform
  approach to linear network coding for acyclic networks with delay,'' {\em
  CoRR}, vol.~abs/1103.3882, 2011.

\bibitem{Cadambe_Jafar_inseparable}
V.~Cadambe, S.~Jafar, ``Parallel Gaussian interference channels are not always separable, " {\em IEEE Transactions on Information Theory}, vol 55, no.9, pp 3983-3990, 2009

\bibitem{Etkin_Ordentlich}
R.~Etkin, E.~Ordentlich, ``The Degrees-of-Freedom of the   {K-User Gaussian} Interference Channel Is Discontinuous at Rational Channel Coefficients," {\em IEEE Trans. on Information Theory}, vol 55, issue 11, pp 4932-4946, Nov 2009.

\bibitem{Nazer_Gastpar_Jafar_Vishwanath}
B.~Nazer, M.~Gastpar, S.~Jafar, S.~Vishwanath, ``Ergodic Interference Alignment,"  {\em IEEE Trans. on Information Theory}, vol 58, no 10, pp 6355-6371, Oct 2012.

\bibitem{Wu_Shamai_Verdu}
Y.~Wu, S.~Shamai, S.~Verdu, ``Degrees of Freedom of Interference Channel: a General Formula," {\em Proceedings of International Symposium on Information Theory (ISIT)} 2011.

\bibitem{Jafar_FnT}
S.~Jafar, ``Interference Alignment: A New Look at Signal Dimensions in a Communication Network," {\em Foundations and Trends in Communication and Information Theory} pp 1-136, 2011.

\bibitem{FiniteFields}
R.~Lidl, H.~Niederreiter, ``Introduction to finite fields and their applications", Cambridge University Press, 1986.

\end{thebibliography}

\end{document}